\newtheorem{lemma}{Lemma}[section]
\newtheorem{prop}[lemma]{Proposition}
\newtheorem{theorem}[lemma]{Theorem}
\newtheorem{cor}[lemma]{Corollary}
\newtheorem{rem}[lemma]{Remark}
\theoremstyle{definition}
\newtheorem{algo}[lemma]{Protocol}
\numberwithin{equation}{section}
\newcommand{\NN}{\mathbb{N}}
\newcommand{\MM}{\mathbb{M}}
\newcommand{\RR}{\mathbb{R}}
\newcommand{\BB}{\mathbb{B}}
\newcommand{\E}{\mathcal{E}}
\renewcommand{\S}{\mathcal{S}}
\newcommand{\pl}{\hspace{.1cm}}
\newcommand{\M}{\mathcal{M}}
\newcommand{\N}{\mathcal{N}}
\DeclareMathOperator{\tr}{tr}
\newcommand{\A}{\mathcal{A}}
\newcommand{\CC}{\mathbb{C}}
\newcommand{\bE}{\mathbf{E}}
\newcommand{\swap}{\mathrm{SWAP}}
\DeclareMathOperator{\Id}{Id}
\newcommand{\one}{\mathds{1}}
\newcommand{\id}{\one}
\newcommand{\kS}{\mathfrak{S}}
\newcommand{\cU}{\mathcal{U}}
\newcommand{\cP}{\mathcal{P}}
\newcommand{\cR}{\mathcal{R}}
\newcommand{\cQ}{\mathcal{Q}}
\newcommand{\cQs}{\cQ_{\mathrm{sw}}}
\newcommand{\cQm}{\cQ_{\mathrm{mct}}}
\newcommand{\cH}{\mathcal{H}}
\newcommand{\cT}{\mathcal{T}}
\newcommand{\cV}{\mathcal{V}}
\newcommand{\cW}{\mathcal{W}}
\newcommand{\cL}{\mathcal{L}}
\newcommand{\bF}{\mathbb{F}}
\newcommand{\bG}{\mathbf{G}}
\newcommand{\bX}{\mathbf{X}}
\newcommand{\bL}{\mathbf{L}}
\DeclareMathOperator{\End}{End}
\DeclareMathOperator{\im}{im}
\newcommand{\nip}[2]{\left\langle #1,#2\right\rangle_{\mathrm{nrm}}}
\newcommand{\bx}{\mathbf{x}}
\newcommand{\bv}{\mathbf{v}}
\newcommand{\bw}{\mathbf{w}}
\newcommand{\by}{\mathbf{y}}
\newcommand{\bz}{\mathbf{z}}
\newcommand{\bM}{\mathbf{M}}
\newcommand{\bN}{\mathbf{N}}
\newcommand{\bD}{\mathbf{D}}
\newcommand{\bY}{\mathbf{Y}}
\newcommand{\bs}{\mathbf{s}}
\newcommand{\bt}{\mathbf{t}}
\newcommand{\br}{\mathbf{r}}
\newcommand{\vsigma}{{\vec{\sigma}}}
\newcommand{\vomega}{{\vec{\omega}}}
\DeclareMathOperator{\lmax}{\lambda_{\mathrm{max}}} 
\DeclareMathOperator{\lmin}{\lambda_{\mathrm{min}}} 
\newcommand{\eps}{\varepsilon}
\newcommand{\eqspace}{\qquad\qquad}
\DeclareMathOperator{\polylog}{polylog}
\title[Approximate Unitary $k$-Designs from Shallow, Low-Communication Circuits]{Approximate Unitary $k$-Designs \\ from Shallow, Low-Communication Circuits}
\author{Nicholas LaRacuente}
\address{Nicholas LaRacuente, Department of Computer Science, Indiana University Bloomington, 700 N Woodlawn Ave, Bloomington, IN, 47408, USA}
\email{nlaracu@iu.edu}
\author{Felix Leditzky}
\address{Felix Leditzky, Department of Mathematics and IQUIST, University of Illinois Urbana-Champaign, 1409 W Green St, Urbana, IL 61801, USA}
\email{leditzky@illinois.edu}
\begin{document}

\begin{abstract}
    Random unitaries are useful in quantum information and related fields, but hard to generate with limited resources. 
    An approximate unitary $k$-design is an ensemble of unitaries with an underlying measure over which the average is close to a Haar random ensemble up to the first $k$ moments. 
    A particularly strong notion of approximation bounds the distance from Haar randomness in relative error.
    Such relative-error approximate designs are secure against queries by an adaptive adversary trying to distinguish it from a Haar ensemble.
    We construct relative-error approximate unitary $k$-design ensembles for which communication between subsystems is $O(1)$ in the system size. 
    These constructions use the alternating projection method to analyze overlapping Haar averages, giving a bound on the convergence speed to the full averaging with respect to the $2$-norm. 
    Using von Neumann subalgebra indices to replace system dimension, the 2-norm distance converts to relative error without introducing any additional dimension dependence. 
    We use these constructions as the building blocks of a two-step protocol that achieves a relative-error design in $O \big ( (\log m + \log(1/\epsilon) + k \log k ) k\, \text{polylog}(k) \big )$ depth, where $m$ is the number of qudits in the complete system and $\epsilon$ the approximation error. 
    This sublinear depth construction answers a variant of \cite[Harrow and Mehraban 2023, Section 1.5, Open Question 1]{harrow_mehraban2023approximate} and \cite[Harrow and Mehraban 2023, Section 1.5, Open Question 7]{harrow_mehraban2023approximate}. 
    Moreover, entanglement generated by the sublinear depth scheme follows area laws on spatial lattices up to corrections logarithmic in the full system size.
\end{abstract}

\maketitle

\section{Introduction}

Uniformly random unitary and state ensembles are useful in quantum information \cite{dupuis_decoupling_2010, hayden_decoupling_2008, mele2024introduction} and fundamental physics \cite{aaronson_complexity_2016, piroli_random_2020}. However, an elementary counting argument shows that such unitaries are difficult to construct via polynomially sized quantum circuits \cite{nielsen_quantum_2010}. More accessible constructions approximate the properties of random unitaries \cite{brown_decoupling_2015, deshpande_tight_2022, nakata_efficient_2017}. The unitary $k$-design is one such notion: 
a measure over the unitary group that is indistinguishable from Haar random in the first $k$ statistical moments. Equivalently, any protocol that samples a unitary from a $k$-design $k$ or fewer times would be unable to distinguish it from the Haar ensemble by any test using $k$ or fewer copies with probability better than randomly guessing.

As exact designs are still challenging to construct, approximate designs often take their place. There are several notions of approximate designs. 
A particularly strong notion considered in \cite{brandao_local_2016} and \cite{chen2024incompressibility} is that of an approximate design in \emph{multiplicative} or \emph{relative error}.
Multiplicative error designs are secure against queries by an adaptive adversary trying to distinguish it from Haar ensembles \cite{chen_efficient_2024}. This relative notion implies a weaker, additive notion requiring that ensemble averages (or \emph{twirls}) over $k$-fold applications of unitaries from the design are close to the Haar ensemble average in diamond norm distance. Designs in diamond norm distance are secure against non-adaptive adversaries, and the parameter $k$ is a lower-bound on the circuit complexity required for typical unitaries in such a design \cite{brandao2021models}. Moreover, a weaker but related notion replaces the diamond norm distance by the two-norm distance in what is called a tensor product expander (TPE). We review these notions in more technical detail in Section \ref{sec:design-background} below.

Many previous works have studied how random circuits converge to $k$-designs in bounded depth \cite{harrow2009random, brandao_local_2016, hunter-jones_unitary_2019, harrow_mehraban2023approximate, haferkamp_improved_2021}. Recent results have yielded efficient constructions in required circuit depth on 1-dimensional connectivity \cite{haah2024efficient, metger_simple_2024, chen_efficient_2024}. Even more recently, it was shown that random 2-local circuits on one-dimensional connectivity yield $\epsilon$-approximate relative error $k$-designs on $m$-qudit systems in depth $O((m k + \log(1/\epsilon)) \text{polylog}(k))$, and random local circuits on all-to-all connectivity do so with gate count $O(m (m k + \log(1/\epsilon)) \text{polylog}(k))$ \cite{chen2024incompressibility}.

Nevertheless, many questions have remained open (see \cite[Section 1.5]{harrow_mehraban2023approximate} and \cite[Section 1.4]{chen2024incompressibility}). In particular, we highlight \cite[Section 1.5, Open Question 1]{harrow_mehraban2023approximate}, which asks whether one can construct a multiplicative error $k$-design on a system of $m$ qubits using random circuits with depth sublinear in $m$. Our results imply that this is possible, even in one-dimensional spatial lattice connectivity, if the circuits are allowed to follow a layer-dependent architecture as detailed in Section \ref{sec:lattices}.

We initially approach the problem by asking a different question: how much quantum communication do two parties need to jointly construct an approximate unitary $k$-design ensemble? Some inspiration comes from pseudoentangled state ensembles \cite{aaronson_quantum_2023}. Based on widely believed complexity-theoretic assumptions, these ensembles are hard to distinguish from uniformly random state ensembles for any polynomial-time quantum algorithm. Nonetheless, the entanglement across any cut of a pseudoentangled state ensemble need only scale slightly faster than logarithmic in the subsystem size. As noted in \cite[p.1]{aaronson_quantum_2023}, earlier results rule out an analog of pseudoentanglement for exact or highly precise unitary $k$-design ensembles, as statistically approximating the Haar measure to exponential precision requires near-maximal entanglement \cite{low_large_2009, cotler_fluctuations_2022}. Still, these barrier results left open whether smaller amounts of entanglement could suffice for an $\epsilon$-approximate $k$-design with constant $\epsilon$ in the system size.

\subsection{Summary of Main Results}
\label{sec:main-results}
We show that for $k$ significantly smaller than the logarithm of system dimension, it is indeed possible to construct unitary $k$-designs with relatively little quantum communication between subsystems, hence generating little entanglement. 
Our main tool is to consider protocols consisting of overlapping Haar twirls on smaller subsystems.
An application of von Neumann's alternating projection method \cite{vonNeumann1949rings} shows that such overlapping twirls form tensor product expanders, i.e., approximate designs with respect to the $2$-norm distance.
To obtain the much stronger notion of relative-error designs discussed in the introduction, we use a technique from operator algebra theory based on von Neumann subalgebra indices \cite{pimsner1986entropy} to convert these $2$-norm estimates to relative error without incurring any dimension factor penalties.
Our central result, constructing relative-error approximate designs in logarithmic circuit depth, is achieved by carefully building up a two-step protocol consisting of these overlapping local Haar twirls.
We now phrase our main results in more mathematical detail.

For a unitary measure $\mu$ on $U(d)$, let
\begin{align} \Phi_{\mu, k}(\rho) \coloneqq \int U^{\otimes k} \rho U^{\otimes k \dagger} d \mu(U) \end{align}
for every input state $\rho$ on a system of dimension $d$. An approximate (relative error) $k$-design is a measure $\mu$ on a finite-dimensional unitary group for which 
\begin{equation}
        (1-\epsilon) \Phi_{\text{Haar}, k} \prec \Phi_{\mu, k} \prec (1+\epsilon) \Phi_{\text{Haar}, k} \pl,
    \end{equation}
where by $\mathrm{Haar}$ we denote the Haar or uniform measure over the unitary group.
\begin{enumerate}
    \item\label{item:twirl-swap-twirl} \textbf{Twirl-Swap-Twirl:} Let $A$ and $B$ be systems of the same size. Consider a protocol that locally applies unitaries drawn from $k$-designs to $A$ and $B$ respectively, exchanges the same $\ell$ qudits between $A$ and $B$ respectively, then again applies local $k$-design unitaries. The unitary ensemble induced by such a protocol is an $\epsilon$-approximate relative $k$-design when
    $\ell = O(k \log k + \log( 1/\epsilon))$.
    In particular, this bound is \textit{independent} of the sizes of $A$ and $B$ as long as each contains at least $\omega(k \log k + \log(1/\epsilon))$ qudits. The full technical version of this result with explicit constants is stated in Theorem \ref{thm:twirl-swap-twirl-relative}. \smallskip
    
    \item\label{item:twirl-crosstwirl} \textbf{Twirl-Crosstwirl:} Let $A_1, \dots, A_P$ be subsystems of a multipartite system $A$. Consider the following protocol for $K$ copies of $A$: (1) locally apply a $K$-design unitary to each $A_p$ for $p = 1, \dots, P$; (2) apply a $K$-design on a joint system combining $\ell$ qudits from each $A_p$. 
    Figure \ref{fig:multipartite-twirl-crosstwirl-cartoon} illustrates this protocol for $K=3$ copies of $P=2$ parties.
    The Twirl-Crosstwirl protocol achieves an $\epsilon$-approximate relative $K$-design when
    \begin{align} \ell \geq 2 P K \log_q K + 2 \log_q K + \log_q P + \log_q(1/\epsilon) + \log_q 12 \pl, \end{align}
    $10K^2 \leq q^{\ell}$, and each $A_p$ contains at least $3 \ell$ qudits. This result is stated in Theorem \ref{thm:twirl-crosstwirl-relative}. Shown as Corollary \ref{cor:twirl-crosstwirl-tpe}, the protocol yields an $\epsilon$-approximate $K$-TPE (as defined in Section \ref{sec:background} when
    \begin{align} \ell \geq 2 \log_q K + \log_q P + \log_q (1/\epsilon) + \log_q (5/2) \pl. \end{align}
    Compared with the $K \log K$ dependence for designs, the $K$-TPE is achieved with logarithmic $K$ dependence. \smallskip
    
    \item\label{item:recursive-crosstwirl} \textbf{Two-step Crosstwirl:} Consider an $m$-qudit system with connectivity given by a lattice in spatial dimension $D$. We use a two-step protocol summarized in Protocol \ref{alg:2-layer} that applies overlapping unitaries to construct an $\epsilon$-approximate relative $k$-design using unitaries of circuit depth
    \begin{align} O \big ( k (k \log k + \log m + \log(1/\epsilon) ) \text{polylog}(k) \big ) \pl. \end{align}
    Moreover, for any spatially contiguous region $S$, the design requires only
    \begin{align} O \big ( \#(\partial S) \times \ell \big ) \end{align}
    qudits of quantum communication and generates only that much entanglement on any product state input.
    Here, $\#(\partial S)$ denotes the number of qudits on the boundary of $S$, and $\ell$ scales as $O(\log m)$. The technical version of this result is stated as Theorem \ref{thm:lattice-main} in Section \ref{sec:lattices}. This result addresses \cite[Section 1.5, Open Question 1]{harrow_mehraban2023approximate}, showing that with a specific, layered architecture, random circuits produce relative error $k$-designs in sublinear depth. It also address \cite[Section 1.5, Open Question 7]{harrow_mehraban2023approximate}, showing that circuits much shallower than the lattice diameter suffice.
\end{enumerate}

\subsection{Overview of the Proof Ideas}

Approximate unitary $k$-designs can be defined using different norms or approximations quantifying how well the design approximates the first $k$ moments of the full Haar measure.
The following choices are commonly used and defined in Section~\ref{sec:tpeconvert}: the $2\to 2$-norm, an extension to superoperators of the Schatten $2$-norm (leading to tensor product expanders or TPEs), the tensor-stabilized diamond norm (leading to additive designs), and the complete positivity (CP) partial order (leading to multiplicative or relative error designs). 
These three figures of merit are ordered by increasing strength: multiplicative error implies additive error, which in turn implies an error measured by the $2\to 2$-norm.

Our strategy is to prove that the protocols \hyperref[item:twirl-swap-twirl]{Twirl-Swap-Twirl} and \hyperref[item:twirl-crosstwirl]{Twirl-Crosstwirl} mentioned in Section \ref{sec:main-results} yield approximate designs with respect to the $2\to 2$-norm, which can be converted into the stronger notions of approximate design in terms of diamond norm and CP order.
In the generic case, such a conversion incurs dimensional factors that considerably loosen the obtained bounds.
However, the channels and states resulting from approximate design constructions only act on certain von Neumann subalgebras of the full Hilbert space.
Restricting the action of the channels to these subalgebras allows us to replace the full Hilbert space dimension by an effective dimension that excludes the multiplicities appearing in this algebra decomposition.
This bound conversion is the content of Section~\ref{sec:tpeconvert}.

In Section \ref{sec:alternating-projections} we prove the $2\to 2$-norm bounds for the \hyperref[item:twirl-swap-twirl]{Twirl-Swap-Twirl} and \hyperref[item:twirl-crosstwirl]{Twirl-Crosstwirl} protocols.
This analysis is closely related to the subspace angle determining the convergence speed in von Neumann's alternating projection method \cite{neumann_rings_1949}.
The main idea for proving the $2\to 2$-norm bounds is to exploit the commutant structure of local twirls implied by Schur-Weyl duality (see Section \ref{sec:schurweyl}), which can be used to expand the images of these twirls as linear combinations of tensor products of permutation operators.
The norm expressions can then be converted into inner products of the coefficient vectors in these expansions, acted upon by matrices whose entries are functions of normalized inner products of permutation operators.
We simplify these expressions using the notion of approximate orthogonality of permutation operators \cite{harrow_approximate_2023-1} together with various matrix norm and eigenvalue estimates.

The \hyperref[item:twirl-crosstwirl]{Twirl-Crosstwirl} protocol is the main building block for the lattice protocol described in Protocol \ref{alg:2-layer} in Section \ref{sec:lattices}. Here, we build up larger approximate designs from small designs by iterating the \hyperref[item:twirl-crosstwirl]{Twirl-Crosstwirl} protocol, while carefully keeping track of the error accumulated in the process.
We specifically consider spatial lattices in (spatial) dimension $D$, for which our protoocl yields approximate designs with the advertised properties, concluding the proof of our main result.

\subsection{Discussion}
The \hyperref[item:twirl-swap-twirl]{Twirl-Swap-Twirl} and \hyperref[item:twirl-crosstwirl]{Twirl-Crosstwirl} results are primarily about the entanglement or quantum communication needed to produce a design. One may replace the local $k$-designs in Twirl-Swap-Twirl by $\delta$-approximate relative $k$-designs to achieve error $(1+\epsilon)(1+\delta)^4 - 1$. Similarly, one may replace the local and crossing $k$-designs in \hyperref[item:twirl-crosstwirl]{Twirl-Crosstwirl} by $\delta$-approximate relative $k$-designs, achieving error $(1+\epsilon)(1+\delta)^{P+1} - 1$. Via the results of \cite{chen2024incompressibility}, if $A$ and $B$ are each $m$-qudit systems, this means these two schemes are approximately implementable in local circuit depth $O((m k + \log(1/\epsilon)) \text{polylog}(k))$ on a one-dimensional line of qudits. The main technical methods in both these results are (1) a new subspace angle approach using Schur-Weyl duality to show efficient TPE bounds, and (2) a more efficient conversion from TPE to relative error that replaces system dimension by von Neumann subalgebra indices.

\hyperref[item:recursive-crosstwirl]{Two-step Crosstwirl} is more directly analogous to the ``across any cut" notion of pseudoentanglement as in \cite{aaronson_quantum_2023}. In this case, up to a logarithmic dependence in the total system size $m$, the entanglement scales like an area law. Area-law versus volume-law entanglement is an important concept in condensed matter and high-energy physics \cite{eisert2010colloquium}. Moreover, \hyperref[item:recursive-crosstwirl]{Two-step Crosstwirl} achieves a new record for scaling of circuit depth in the system size. Though \hyperref[item:recursive-crosstwirl]{Two-step Crosstwirl} is not exactly equivalent to lattice brickwork, it can be implemented using 2-qudit Haar random unitaries. In Remark \ref{rem:almost-brickwork}, we discuss how \hyperref[item:recursive-crosstwirl]{Two-step Crosstwirl} can be implemented on an architecture that closely resembles brickwork in one dimension.

In general, these quantum communication and entanglement bounds are primarily intended for the regime in which the dimension of each system is larger than $k!$. A single iteration only works if subsystems support $\ell$-qudit exchanges, placing a lower bound on dimension. If $k!$ is too large for the subsystems, then it is still possible to obtain designs by repeating our constructions. For large $k$, however, the quantum communication required is no longer smaller than that of general random circuits, and the entanglement entropy resembles a volume law. Nonetheless, that \hyperref[item:recursive-crosstwirl]{Two-step Crosstwirl} achieves sublinear depth in $m$ may suggest further investigation of potential efficiencies of recursive constructions.

That $k$-designs form in logarithmic depth even on one-dimensional brickwork is highly surprising.
Indeed, it had previously been argued that depth $O(m k)$ would likely be optimal for $m$ qudits \cite{hunter-jones_unitary_2019, harrow_mehraban2023approximate} (though this depends on the specific notion of design considered).
Earlier results had also supported the intuition that the depth needed to form a design would be sensitive to the diameter of a connectivity graph and presence of bottlenecks \cite{harrow_mehraban2023approximate}. Efficient constructions of $2$-designs are particularly sought because of their many applications, e.g.~in decoupling \cite{dupuis_decoupling_2010, hayden_decoupling_2008,brown_decoupling_2015, mele2024introduction}. The fact that 2-designs form via logarithmic-depth circuits means that they could likely be constructed on noisy quantum computers, which makes them potentially useful in near-term applications of quantum computing in coding and communication tasks. In contrast, the ease of forming these designs suggests that barren plateaus or similar phenomena in quantum optimization may emerge more readily than expected \cite{larocca_review_2024}.

\subsection{Open Questions}
The $2$-norm convergence of the \hyperref[item:twirl-crosstwirl]{Twirl-Crosstwirl} protocol can be determined quite elegantly using the generalized Perron-Frobenius Theorem (see Section~\ref{sec:alternating-projections} for details), leading to a tight convergence rate. 
The \hyperref[item:twirl-swap-twirl]{Twirl-Swap-Twirl} protocol used a coarser analysis leading to a weaker convergence bound in this case.
Whether the \hyperref[item:twirl-swap-twirl]{Twirl-Swap-Twirl} analysis methods could be improved and unified with those of \hyperref[item:twirl-crosstwirl]{Twirl-Crosstwirl} is an interesting open problem, which we explain in more detail in Sec.~\ref{sec:technical-open-problem}.
Another remaining question from our work is whether standard brickwork random circuits in one dimension or any dimension produce relative error designs in sublinear depth.

\subsection{Structure of this Paper}
The remainder of this paper is structured as follows.
In Section~\ref{sec:background} we give some background on approximate $k$-designs, von Neumann algebras and index theory, and Schur-Weyl duality.
In Section~\ref{sec:tpeconvert} we discuss how tensor product expander (TPE) bounds in the $(2\to 2)$-norm can be converted into efficient relative error bounds using the algebraic structure of von Neumann subalgebras.
In Section~\ref{sec:alternating-projections} we prove TPE bounds for the \hyperref[item:twirl-swap-twirl]{Twirl-Swap-Twirl} and \hyperref[item:twirl-crosstwirl]{Twirl-Crosstwirl} protocols mentioned above using the alternating projection method.
In Section~\ref{sec:lattices} we discuss the \hyperref[item:recursive-crosstwirl]{Two-step Crosstwirl} protocol, a two-step protocol built up from the \hyperref[item:twirl-crosstwirl]{Twirl-Crosstwirl} protocol to construct approximate designs on spatial lattice geometries.

\subsection{Acknowledgments}
We acknowledge helpful email exchanges with Aram Harrow during the development of these results. FL acknowledges support by National Science Foundation grant no.~2137953. Part of this work was completed during the workshop “Beyond i.i.d.~in Information Theory 11” hosted by the University of T\"{u}bingen from July 31 to August 4, 2023.

\textit{Note added:} We note a concurrent and independent work on low-depth designs and pseudorandom unitaries \cite{schuster_random_2024} by Thomas Schuster, Jonas Haferkamp, and Hsin-Yuan Huang. Their work also obtains approximate design ensembles via $O(\log m)$ circuit depth and improves the communication scaling in $k$ to $O(\log k)$ via a different technical analysis. Our work uses a different extension to higher-dimensional lattices for a closer analogy to area law entanglement (see Remark \ref{rem:hamiltonian-vs-recursive}).

\section{Background}\label{sec:background}
A (finite-dimensional) quantum channel is a completely positive, trace-preserving map on a space of matrices. Following the physics convention, we apply quantum channels in the Schr\"odinger picture, such that a channel $\Phi$ is considered to act on states, and its adjoint on observables. We call a quantum channel trace-symmetric if it is its own adjoint under the trace, $\tr(\Phi(x) y) = \tr(x \Phi(y))$. 

The Schatten $p$-norm of a matrix $\rho$ is defined as $\|\rho\|_p = \tr(|\rho|^p)^{1/p}$ with $|\rho|\coloneqq \sqrt{\rho^\dagger\rho}$. We recall the $p \rightarrow q$ distance
\begin{align} \| \Phi - \Psi \|_{p \rightarrow q} = \sup_{\rho\neq 0} \frac{\| (\Phi - \Psi)(\rho) \|_q}{\|\rho\|_p} \end{align}
and its completely bounded strengthening,
\begin{align} \| \Phi - \Psi \|_{p \rightarrow p, \mathrm{cb}} = \sup_{d \in \NN} \| (\Phi - \Psi) \otimes \Id_{(d)} \|_{p \rightarrow p} \pl, \end{align}
where $\Id_{(d)}$ is the identity channel in dimension $d$. The diamond norm is defined as $\| \cdot \|_{\Diamond} \coloneqq \| \cdot \|_{1 \rightarrow 1, \mathrm{cb}}$. Furthermore, we recall the complete semidefinite ordering on a pair of channels $\Phi, \Psi$, in which $\Phi \prec \Psi$ iff $\Psi \otimes \Id_{(d)} - \Phi \otimes \Id_{(d)}$ is a positive map for every $d$. The notation $\succ$ is defined analogously.

\subsection{Designs} \label{sec:design-background}
We briefly recall notions of approximate $k$-designs and related concepts. For a unitary measure $\mu$ on $U(d)$, we define the $k$-fold weighted twirl channel
\begin{align} \Phi_{\mu, k}(\rho) \coloneqq \int U^{\otimes k} \rho U^{\otimes k \dagger} d \mu (U) \end{align}
for every input state $\rho$ on a system of dimension $d$. Let $\Phi_{\text{Haar}, k}$ denote such a construction with respect to the Haar measure on $U(d)$.
A measure $\mu$ on $U(d)$ is an $\epsilon$-approximate...
\begin{itemize}
    \item ...tensor product expander (TPE) if
    \begin{equation}
        \| \Phi_{\mu, k}  - \Phi_{\text{Haar}, k} \|_{2 \rightarrow 2} \leq \epsilon \pl.
    \end{equation}
    \item ...additive $k$-design if
    \begin{equation}
        \| \Phi_{\mu, k}  - \Phi_{\text{Haar}, k} \|_{\Diamond} \leq \epsilon \pl.
    \end{equation}
    \item ...multiplicative or relative error $k$-design if
    \begin{equation}
        (1-\epsilon) \Phi_{\text{Haar}, k} \prec \Phi_{\mu, k} \prec (1+\epsilon) \Phi_{\text{Haar}, k} \pl.
    \end{equation}
\end{itemize}
Multiplicative error is a stronger criterion than and implies additive error, which in turn applies the TPE condition. As a more general criterion that we may use in intermediate results, we say that a measure is a design with multiplicative error $(\epsilon, \delta)$ if
\begin{equation}
     (1-\epsilon) \Phi_{\text{Haar}, k} \prec \Phi_{\mu, k} \prec (1+\delta) \Phi_{\text{Haar}, k} \pl.
\end{equation}
For a $k$-design $\mu$, we refer to $\Phi_{\mu,k}$ as the corresponding $k$-design channel. For a pair of measures $\mu$ and $\nu$, we denote by $\mu * \nu$ the convolution, defined as the measure for which $\Phi_{
\mu * \nu} = \Phi_{\mu, k} \circ \Phi_{\nu, k}$ for all $k \in \NN$.

\subsection{Von Neumann Algebras and Entropy}
The proofs in Section \ref{sec:tpeconvert} involve finite-dimensional von Neumann algebra index theory, for which we now provide some background. By $\BB(\cH)$ we denote the bounded operators on Hilbert space $\cH$. Within finite dimension, this is equivalent to the space $\MM_d$ of matrices acting on $d$-dimensional Hilbert space. A von Neumann subalgebra $\N \subseteq \MM_d$ of the finite-dimensional matrix algebra $\MM_d$ consists of all matrices with the block diagonal form
\begin{equation}
    \N = \bigoplus_{\lambda} \A_\lambda \otimes \one_{B_\lambda} \pl,
\end{equation}
where each $\A_\lambda$ is itself a matrix subalgebra on subsystem $A_\lambda$, and each $B_\lambda$ corresponds to a subsystem on which an effective Haar unitary average has been performed \cite{neumann_rings_1949, gao_relative_2020}.
The notation ``$\one_{B_\lambda}$" distinguishes that this is the algebra of multiples of the identity matrix, also sometimes denoted $\CC$, as it is equivalent to the algebra of complex scalars. The dimension of such a mixed system is often referred to as a multiplicity, since the block $A_\lambda$ is repeated a number of times equal to the dimension of $\one_{B_\lambda}$, and then $B_\lambda$ is called a multiplicity space. By $\M_*$ we denote the predual of the von Neumann algebra $\M$. Although within finite dimension $\M_* \cong \M$, one conventionally thinks of states as being elements of the predual and observables being elements of the algebra. By $|\M|$ we denote the dimension of $\M$ as a vector space. Note that the usable, effective dimension of $\M$ is generally smaller, as multiplicities contribute to the space's dimension without contributing to the amount of information that could potentially be stored in a system represented by $\M$.

With each von Neumann algebra $\N$ comes a unique, unital, trace-preserving conditional expectation $\E_\N$. The action of $\E_\N$ on a matrix can be broken into two steps. First, remove off-diagonal blocks, taking $\MM_d \rightarrow \bigoplus_{\lambda} \A_\lambda \otimes B_\lambda$. Second, perform a complete mixing of each $B_\lambda$ system. A subalgebra $\N \subseteq \M$ is a subset of $\M$ that is also a von Neumann algebra. For $\M$ a von Neumann algebra and $\N$ a subalgebra, $\N \subseteq \M$, it follows that $\E_\M \E_\N = \E_\N \E_\M = \E_\N$.

A simple example of a conditional expectation is the completely depolarizing channel mapping to the subalgebra denoted $\one$ or $\CC$. We may for instance take a bipartite system $A \otimes B$, then write $\E_{\A \otimes \CC}$ as the conditional expectation that completely depolarizes the $B$ system. Another somewhat complementary example is the pinching to a basis: this removes all of the off-diagonal terms (with respect to that basis) of its input matrix. A more sophisticated example appears in the following Section \ref{sec:schurweyl} in the context of representation theory.

\subsection{Schur-Weyl Duality} \label{sec:schurweyl}
For a (finite-dimensional) Hilbert space $\cH$ we denote by $\cL(\cH)$ or $\End(\cH)$ the space of linear operators acting on $\cH$.
We recall Schur-Weyl duality in the general setting. Fixing a dimension $d\geq 2$ and a positive integer $k\in\mathbb{N}$, we consider the following unitary representations of the symmetric group $\kS_k$ and the unitary group $\cU_d$ on the state space $\cH = (\mathbb{C}^d)^{\otimes k}$:
\begin{align}
	\varphi\colon \kS_k &\longrightarrow \cU(\cH), \quad \pi \longmapsto \left(|\phi_1\rangle\otimes \dots\otimes |\phi_k\rangle \mapsto |\phi_{\pi^{-1}(1)}\rangle\otimes \dots\otimes |\phi_{\pi^{-1}(k)}\rangle \right)\label{eq:Sk-rep}\\
	\psi\colon \cU_d &\longrightarrow \cU(\cH), \quad U \longmapsto U^{\otimes k}\label{eq:Ud-rep}
\end{align}
It is straightforward to check that the two actions on $\cH$ commute, $[\psi(U),\varphi(\pi)] = 0$ for all $U\in\cU_d$ and $\pi\in\kS_k$.
This implies that the algebra $\langle \varphi(\pi) : \pi\in\kS_k\rangle \subseteq \End(\cH)$ spanned by the representation \eqref{eq:Sk-rep} is contained in the commutant of the algebra $\langle \psi(U) : U\in\cU_d\rangle\subseteq \End(\cH)$ spanned by the representation \eqref{eq:Ud-rep}, and vice versa.
Schur-Weyl duality (see, e.g., \cite{fulton2013representation}) states that, in fact, these two algebras are \emph{equal} to each other's commutant.
It follows that, if an operator $X \in\cL(\cH)$ commutes with all $\psi(U)$ for $U\in\cU_d$, it can be written as
\begin{align}
    X = \sum_{\pi\in\kS_k} x_\pi \varphi(\pi) \label{eq:unitary-invariant-X-permutations}
\end{align}
for some coefficients $x_{\pi}\in\mathbb{C}$.
Together with the complete reducibility of the representations \eqref{eq:Sk-rep} and \eqref{eq:Ud-rep}, Schur-Weyl duality furthermore yields the following useful decompositions of state space and representations:
\begin{align}
	\cH &\cong \bigoplus_{\lambda\vdash_d k} \cV_\lambda^d \otimes \cW_\lambda \label{eq:schur-weyl-decomposition}\\
	\varphi(\pi) &\cong \bigoplus_{\lambda\vdash_d k} \one_{\cV_\lambda^d} \otimes \varphi_\lambda(\pi) \\
    \psi(U) &\cong \bigoplus_{\lambda\vdash_d k} \psi_\lambda^d(U) \otimes \one_{\cW_\lambda}. 
\end{align}
In all three equations the direct sums run over partitions $\lambda = (\lambda_1,\dots,\lambda_d)\vdash_d k$ of $k$ into $d$ parts, i.e., $\lambda_1\geq \dots \geq \lambda_d \geq 0$ and $\sum_{i=1}^d \lambda_i = k$.\footnote{
	Partitions are often graphically represented as Young diagrams, a left-aligned arrangement of $k$ boxes with $\lambda_i$ boxes in the $i$-th row.
	For example, the Young diagram ${\tiny\Yvcentermath1\yng(3,1)}$ corresponds to the partition $\lambda = (3,1)$.}
These partitions label the irreducible representations (irreps) $(\cV_\lambda^d,\psi_\lambda)$ of $\cU_d$ and $(\cW_\lambda,\varphi_\lambda)$ of $\kS_k$.
A consequence of Schur-Weyl duality is that the $\kS_k$-irrep space $\cW_\lambda$ is the multiplicity space of the $\cU_d$-irrep $(\cV_\lambda^d,\psi_\lambda^d)$, and vice versa.
By Schur's Lemma, any operator $X\in\cL(\cH)$ commuting with all $\psi(U)$ is of the form
\begin{align}
	X = \bigoplus_{\lambda\vdash_d k} \one_{\cV_\lambda^d} \otimes X_\lambda \label{eq:unitary-commuting-block-structure}
\end{align}
for some $X_\lambda\in\cL(\cW_\lambda)$, and the analogous statement holds for operators commuting with all $\varphi(\pi)$.
The dimensions of the irreps appearing in \eqref{eq:schur-weyl-decomposition} are given by the following well-known formulae:
\begin{align}
	d_\lambda &\coloneqq \dim \cW_\lambda = \frac{k!}{\prod_{(i,j)\in\lambda} h(i,j)} \label{eq:d-lambda}\\
	m_\lambda^d &\coloneqq \dim \cV_\lambda^d = \prod_{1\leq i < j\leq d} \frac{\lambda_i - \lambda_j + j - i}{j-i}. \label{eq:m-lambda}
\end{align}
In Eq.~\eqref{eq:d-lambda} the hook length $h(i,j)$ of a box $(i,j)\in\lambda$ is defined as the sum of the number of boxes to the right and below of $(i,j)$ plus the box itself.

\section{TPE to Relative Error Designs} \label{sec:tpeconvert}
A common strategy for proving approximate design bounds is to first prove a tensor product expander (TPE) bound, that is, a $(2 \rightarrow 2)$-norm bound on the distance from the approximate to an exact design, which can then be converted to diamond norm, semidefinite order inequality, or other stronger criteria. Conventionally, this conversion involves a factor of the dimension of the space, usually $q^{2 m k}$ up to some constants in the exponential \cite[Lemma 3]{brandao_local_2016}. To counter this dimension factor, a standard approach applies additional circuit layers of depth $O(m k)$, reducing the ultimate $(2 \rightarrow 2)$-norm or related bound by a factor of $2^{-O(m k)}$. Previously, $O(m k)$ depth was conjectured to be optimal at least in one-dimensional connectivity \cite{hunter-jones_unitary_2019}. The conjectured $O(m)$ depth also matched intuition from light cones: at much smaller depths, information would fail to propagate from one end of a system to the other end.

The culminating results of this work show that the light cone intuition fails, and that $O(\log m)$ suffies with a polynomial $k$-dependence. To accomplish this, we show how in our circumstances, the conversion from $(2 \rightarrow 2)$-norm is much more efficient.

\subsection{Summary of This Section's Results}

When restricting states and channels to von Neumann subalgebras of the full Hilbert space, one may often replace the space's dimension by a lower effective dimension that excludes subspace multiplicity. One may commonly use the Pimsner-Popa index \cite{pimsner1986entropy} of a von Neumann algebra inclusion as an effective dimension in norm conversion, as noted in \cite{gao_relative_2020} and as \cite[Lemma A.1]{gao2022complete}. In this section, we present elementary proofs of this intuition in relevant cases. In particular:
\begin{itemize}
    \item Lemma \ref{lem:2-to-inf} shows that one may convert more efficiently from the $(2 \rightarrow 2)$-norm distance between two channels to $\infty$-norm distance of their Choi matrices when both channels restrict their inputs and outputs to a von Neumann algebra. The intuition for this result is that if a 1-normalized operator has large multiplicity (every eigenvalue is duplicated many times), then it must necessarily have small 2-norm and even smaller $\infty$-norm.
    \item Lemma \ref{lem:inf-to-rel} converts from the $\infty$-norm distance of Choi matrices to relative error. This Lemma uses Weyl's inequality to bound the ratios between eigenvalues in terms of the ratio between the $\infty$-norm distance of Choi matrices and each matrix's smallest eigenvalue.
    \item  Lemma \ref{lem:relative-convert} applies Lemmas \ref{lem:2-to-inf} and \ref{lem:inf-to-rel} to channels that start and end by applying $k$-fold twirls on subsystems, converting $(2 \rightarrow 2)$-norm to relative error without introducing any extra $m$-dependence. It uses that once local twirls have been applied, the input and output are each restricted to block diagonal subspaces in which each block has multiplicity that cancels the $m$-dependence of the full system dimension. The representation theory of the symmetric group factors heavily into the efficiency of this conversion, as the dimensions of unitary irreps only enter as multiplicities of symmetric irreps. The value of $k$ should be less than the single-copy dimension of each pre-twirled or post-twirled subsystem, lest the single-copy dimension constrain permutation irreps in ways that complicate the representation theory.\footnote{The isotypical decomposition \eqref{eq:schur-weyl-decomposition} of the tensor representation \eqref{eq:Sk-rep} of $\kS_k$ on $(\mathbb{C}^d)^{\otimes k}$ into irreps only includes those irreps $\cW_\lambda$ labeled by Young diagrams $\lambda$ with at most $d$ rows. Hence, if $d<k$ then some $\kS_k$-irreps will not appear in the isotypical decomposition. For example, the irrep of $\kS_3$ labeled by the partition $(1,1,1)$ of $3$ corresponding to the antisymmetric subspace does not appear in the isotypical decomposition of $(\mathbb{C}^2)^{\otimes 3}$ with respect to the representation \eqref{eq:Sk-rep}.}
\end{itemize}
The above allows us to efficiently convert from $(2-2)$-norm to relative error when a protocol applies a small number of Haar pre- and post-twirls to a few subsystems. Ultimately, we will want to replace these by twirls weighted via approximate $k$-designs, which do not perfectly project to block diagonal subspaces. However, we can effectively bypass that problem by arguments analogous to the triangle inequality. Specifically, if $(1-\epsilon) \Theta \prec \Phi \prec (1+\epsilon) \Theta$, and $(1-\epsilon) \Phi \prec \Phi' \prec (1+\epsilon) \Phi$, then $((1-\epsilon) / (1+\epsilon)) \Theta \prec \Phi'$, and $\Phi' \prec ((1+\epsilon)/(1-\epsilon)) \Theta$. Hence a general theme in applying Lemma \ref{lem:relative-convert} is to derive results assuming that exact local twirls are composed to an approximate global twirl, then compose errors from replacing local twirls by approximations.


\subsection{Norm Conversion: From Two-norm Bounds to Relative Error}

\begin{rem} \normalfont \label{rem:condexpbp}
    In the following we consider the strucure of states on a finite-dimensional von Neumann algebra. It has long been known \cite{vonNeumann1949rings} that a finite-dimensional von Neumann algebra has a decomposition of the form
    \begin{equation} \label{eq:basicblocks}
        \M \cong \bigoplus_{\lambda} ( \BB(\CC^{a_{\lambda}}) \otimes \S(\CC^{b_{\lambda}}) ) \pl,
    \end{equation}
    where $a_\lambda$ and $b_\lambda$ respectively denote the dimensions of matrix and scalar multiple blocks, $\lambda$ is the block index, $\BB(\CC^{a_{\lambda}})$ is the matrix algebra of bounded operators on a complex vector space of dimension $a_\lambda$, and $\S(\CC^{b_{\lambda}})$ is the algebra of scalar multiples of the identity on a space of dimension $b_\lambda$. Every von Neumann algebra admits a conditional expectation, an idempotent quantum channel that restricts input operators (in particular, density matrices) to that algebra. Let $\E_\M$ denote the conditional expectation to $\M$ and $Q$ the dimension of the input space. Recalling \cite[Equation (14)]{laracuente_quasi-factorization_2022}, for a state $\rho$ on $\MM_Q \otimes \MM_Q$,
    \begin{equation} \label{eq:blockform}
        (\E_\M \otimes \Id)(\rho) \cong \bigoplus_\lambda \Big ( \tr_{b_\lambda} ((P_\lambda \otimes \id_Q) \rho (P_\lambda \otimes \id_Q)) \otimes \frac{1}{b_\lambda} \id_{b_\lambda} \Big ) \pl,
    \end{equation}
    where $P_\lambda$ is the projection to the block indexed by $\lambda$, and $\tr_{b_\lambda}$ traces out the $b_\lambda$-dimensional multiplicity subsystem. Let $\E_{\text{bl}}$ and $\E_{\text{tr}}$ denote respectively the conditional expectations to the $\lambda$-labeled block-diagonal subspace, and that which replaces the appropriate subsystem within each block by scalar multiples of the identity. Then the Choi matrix for $\E_\M$ is given by
    \begin{align}
         (\E_\M \otimes \Id) \Big ( \frac{1}{Q} \sum_{i,j=1}^Q \ket{i}^{\otimes 2}\bra{j}^{\otimes 2} \Big )
         & = (\E_{\tr} \otimes \Id) \Big ( \frac{1}{Q}\sum_{\lambda}
             \sum_{i,j = 1}^{a_{\lambda} \times b_{\lambda}} \ket{\lambda, i}^{\otimes 2} \bra{\lambda, j}^{\otimes 2} \Big )
        \\ & = (\E_{\tr} \otimes \Id) \Big (  \sum_{\lambda} \frac{a_{\lambda} b_{\lambda}}{Q} \Big ( 
             \frac{1}{a_{\lambda} b_{\lambda}} \sum_{i,j = 1}^{a_{\lambda} \times b_{\lambda}} \ket{\lambda, i}^{\otimes 2} \bra{\lambda, j}^{\otimes 2} \Big ) \Big )
        \\ & \cong \frac{1}{Q} \sum_{\lambda} \Big ( 
             a_{\lambda} b_{\lambda} \Big ( \frac{1}{a_\lambda} \sum_{i,j = 1}^{a_{\lambda} } \ket{\lambda, i}^{\otimes 2} \bra{\lambda, j}^{\otimes 2} \Big ) \otimes \Big ( \frac{1}{b_\lambda^2} \id_{b_\lambda^2} \Big ) \Big )
        \\ & = \frac{1}{Q} \bigoplus_{\lambda} \Big ( \mathrm{BP}_{a_\lambda} \otimes \frac{a_\lambda}{b_\lambda} \id_{b_\lambda^2} \Big) \pl,
        \label{eq:E-id-BP}
    \end{align}
    where we use the notation $\mathrm{BP}_{a_\lambda}\coloneqq \frac{1}{a_\lambda} \sum_{i,j = 1}^{a_{\lambda} } \ket{\lambda, i}^{\otimes 2} \bra{\lambda, j}^{\otimes 2}$ for a (generalized) Bell pair with local dimension $a_\lambda$. By ``$\cong$" we denote equality up to system swaps that do not change the eigenvalues.
\end{rem}
\begin{lemma} \label{lem:2-to-inf}
    Consider a conditional expectation $\E_0$ to a finite-dimensional von Neumann algebra $\M$ with block decomposition as in Equation \eqref{eq:basicblocks}. Let $\Phi$ and $\Psi$ be quantum channels for which $\E_0 \Phi = \Phi \E_0 = \Phi$, and $\E_0 \Psi = \Psi \E_0 = \Psi$. Let $a_\lambda, b_\lambda$, and $Q$ be defined as in Remark \ref{rem:condexpbp} for $\E_0$.
    Then
    \begin{equation}
        \| (\Phi \otimes \Id - \Psi \otimes \Id)(\mathrm{BP}) \|_\infty \leq 
            \|\Phi - \Psi \|_{2 \rightarrow 2} \max_{\lambda} \frac{1}{b_{\lambda}^2} 
    \end{equation}
\end{lemma}
\begin{proof}
    For every input $\rho$,
    \begin{equation}
        \|\Phi - \Psi \|_{2 \rightarrow 2} \geq \frac{\|(\Phi - \Psi) \E_0(\rho) \|_2}{\|\E_0(\rho)\|_2} \pl.
    \end{equation}
    Using the form from Remark \ref{rem:condexpbp}, and noting that $Q = \sum_\lambda a_\lambda b_\lambda$,
    \begin{align}
        \| (\E_0 \otimes \Id)(\mathrm{BP}) \|_2 = \frac{1}{Q} \sqrt{ \sum_\lambda \frac{a_\lambda^2}{b_\lambda^2} \sum_{i,j=1}^{b_\lambda} 1 }
            = \frac{1}{Q} \sqrt{\sum_\lambda a_\lambda^2}
            = \sqrt{ \frac{\sum_{\lambda} a_\lambda^2}{(\sum_{\lambda'} a_{\lambda'} b_{\lambda'})^2}}
            \leq \max_\lambda \frac{1}{b_\lambda} \pl.
    \end{align}
    Therefore, recalling that the $2 \rightarrow 2$ norm is stable under tensor products \cite{watrous_notes_2004},
    \begin{equation} \label{eq:2norm1}
        \| (\Phi \otimes \Id - \Psi \otimes \Id)(\mathrm{BP}) \|_2 \leq \|\Phi - \Psi \|_{2 \rightarrow 2} \max_\lambda \frac{1}{b_\lambda} \pl.
    \end{equation}
    
    Equation \eqref{eq:2norm1} bounds the output 2-norm. The next step is a bound on the ratio between the output 2-norm and output $\infty$-norm. Since $\Phi = \Phi \E_0$, again recalling Remark \ref{rem:condexpbp},
    \begin{align}
        (\Phi \otimes \Id)(\mathrm{BP})
            & \cong (\Phi \otimes \Id) \Big ( \frac{1}{Q} \bigoplus_{\lambda} \Big ( \mathrm{BP}_{a_\lambda} \otimes \frac{a_\lambda}{b_\lambda} \id_{b_\lambda^2} \Big ) \Big )
            \\ & \cong \frac{1}{Q} \sum_\lambda b_\lambda \sum_{i,j = 1}^{a_\lambda} \Phi \Big ( \ket{\lambda, i} \bra{\lambda, j} \otimes \frac{1}{b_\lambda} \id_{b_\lambda} \Big ) \otimes \Big( \ket{\lambda, i}\bra{\lambda, j} \otimes \frac{1}{b_\lambda} \id_{b_\lambda} \Big).
    \end{align}
    Using Equation \eqref{eq:blockform},
    \begin{align}
        (\Phi \otimes \Id)(\mathrm{BP})
             & \cong  \frac{1}{Q} \sum_\lambda a_\lambda b_\lambda \bigg ( \frac{1}{a_\lambda} \sum_{i,j = 1}^{a_\lambda} (\Phi \otimes \Id_{a_\lambda}) \Big ( \ket{\lambda, i} \bra{\lambda, j} \otimes \frac{1}{b_\lambda} \id_{b_\lambda} \otimes \ket{\lambda, i}\bra{\lambda, j} \Big ) \bigg ) \otimes \frac{1}{b_\lambda} \id_{b_\lambda} .
    \end{align}
    A key observation is that the block is labeled by $\lambda$ on a subspace not directly acted upon by $\Phi$. Therefore, the subspaces for each $\lambda$ remain orthogonal. Again invoking Equation \eqref{eq:blockform} at the output of $\Phi$,
    \begin{align}
            (\Phi \otimes \Id)(\mathrm{BP}) & \cong 
            \frac{1}{Q} \bigoplus_{\lambda} a_\lambda b_\lambda \Big ( \bigoplus_{\lambda'} \omega_{\lambda, \lambda'} \otimes \frac{1}{b_{\lambda'}} \id_{b_{\lambda'}} \Big ) \otimes \frac{1}{b_\lambda} \id_{b_\lambda}
            \\ & = \frac{1}{Q} \bigoplus_{\lambda, \lambda'} a_\lambda b_\lambda \omega_{\lambda, \lambda'} \otimes \frac{1}{b_\lambda b_{\lambda'}} \id_{b_\lambda b_{\lambda'}}
    \end{align}
    for some positive semidefinite matrices $(\omega_{\lambda, \lambda'})$. Via an analogous calculation,
    \begin{equation}
    (\Psi \otimes \Id)(\mathrm{BP}) = \frac{1}{Q} \bigoplus_{\lambda, \lambda'}  a_\lambda b_\lambda \eta_{\lambda, \lambda'} \otimes \id_{b_{\lambda} b_{\lambda'}} / b_{\lambda} b_{\lambda'}
    \label{eq:E-no-structure}
    \end{equation}
    for some positive semidefinite matrices $(\eta_{\lambda, \lambda'})$. 
    Note that for any operator $X$ and dimension factor $c$ we have $\| X \otimes \id_c / c \|_2 = \|X\|_2 / \sqrt{c}$, while $\| X \otimes \id_c / c\|_\infty = \|X\|_\infty / c$.
    \begin{align}
        & \Big \| \frac{1}{Q} \bigoplus_{\lambda, \lambda'} a_\lambda b_\lambda (\omega_{\lambda, \lambda'} - \eta_{\lambda, \lambda'}) \otimes \frac{\id}{b_\lambda b_{\lambda'}} \id_{b_\lambda b_{\lambda'}} \Big \|_2
            \geq \max_{\lambda, \lambda'} \frac{a_\lambda b_\lambda}{Q} \frac{1}{\sqrt{b_\lambda b_{\lambda'}}} \Big \| \omega_{\lambda, \lambda'} - \eta_{\lambda, \lambda'} \Big \|_2 \pl.
    \end{align}
    Meanwhile,
    \begin{align}
        \Big \| \frac{1}{Q} \bigoplus_{\lambda, \lambda'} a_\lambda b_\lambda  (\omega_{\lambda, \lambda'} - \eta_{\lambda, \lambda'}) \otimes \frac{\id}{b_\lambda b_{\lambda'}} \id_{b_\lambda b_{\lambda'}} \Big \|_\infty
        = \max_{\lambda, \lambda'} \frac{a_\lambda b_\lambda}{Q}
            \frac{1}{b_\lambda b_{\lambda'}}
                \Big \| \omega_{\lambda, \lambda'} - \eta_{\lambda, \lambda'} \Big \|_\infty \pl.
    \end{align}
    Together, these infinity and 2-norm bounds imply that
    \begin{equation} \label{eq:2norm2}
        \| (\Phi \otimes \Id - \Psi \otimes \Id)(\mathrm{BP}) \|_\infty \leq \max_{\lambda} \frac{1}{b_\lambda} \| (\Phi \otimes \Id - \Psi \otimes \Id)(\mathrm{BP}) \|_2 \pl.
    \end{equation}
    Applying Equation \eqref{eq:2norm2} to the 2-norm bound of \eqref{eq:2norm1} completes the proof.
\end{proof}
Straightforwardly, one might upper-bounded the $\infty$-norm distance of two channels' outputs from a pure input state by the 2-norm, which in turn is upper-bounded by the $(2 \rightarrow 2)$-norm distance of the channels. Lemma \ref{lem:2-to-inf} tightens this bound when both channels restrict their inputs and outputs to a von Neumann algebra. It first exploits that (1) a density matrix with a large multiplicity must have a 2-norm that is upper-bounded by the inverse square root of that multiplicity (2) the ratio of the infinity norm to the 2-norm is also upper-bounded by the square root of a state's multiplicity.

The following Lemma \ref{lem:inf-to-rel} converts from the operator norm on the Choi matrix to relative error. We note that the concurrent work \cite{schuster_random_2024} had derived a more specific analog of this Lemma \ref{lem:inf-to-rel}. We also note \cite[Proposition II.16]{laracuente_quasi-factorization_2022} as a one-side analog of the bound, and \cite[Lemma 3]{brandao_local_2016} as yielding a similar result up to the polynomial order of $k!$.
\begin{lemma} \label{lem:inf-to-rel}
    Let $\Phi$ and $\E$ be quantum channels for which $\Phi \E = \E \Phi = \E$. Assume that $\E$ is a conditional expectation (idempotent and self-adjoint with respect to the Hilbert-Schmidt inner product). Let $a_\lambda, b_\lambda$, and $Q$ be defined as in Remark \ref{rem:condexpbp} for $\E$. If
    \begin{equation}
         \|(\Phi \otimes \Id)(\mathrm{BP}) - (\E \otimes \Id)(\mathrm{BP}) \|_\infty \times Q \max_\lambda \frac{b_\lambda}{a_\lambda} \leq \epsilon < 1 \pl,
    \end{equation}
    then $(1-\epsilon) \E \prec \Phi \prec (1+\epsilon) \E$.
\end{lemma}
\begin{proof}
    Relative error bounds relating quantum channels are equivalent to relative error bounds on the semidefinite order of the Choi matrices, that is, $\mathcal{N}\prec \mathcal{M}$ for two quantum channels $\mathcal{N},\mathcal{M}$ iff $(\mathcal{M}\otimes \Id)(\mathrm{BP}) - (\mathcal{N}\otimes \Id)(\mathrm{BP})\geq 0$.
    Let $X$ and $Y$ be Hermitian matrices on a space of dimension $Q$. For each $j \in \{1, \dots, Q\}$ and a Hermitian matrix $Z$, let $\beta_j(Z)$ denote the $j$th eigenvalue of $Z$ in descending order. Weyl's inequality \cite{horn_topics_1991} states that
    \begin{equation}
        \beta_Q(Y) \leq \beta_j(X+Y) - \beta_j(X) \leq \beta_1(Y)
    \end{equation}
    for each $j \in \{1, \dots, Q\}$. Also note that $\|Y\|_\infty = \max\{|\beta_1(Y)|, |\beta_Q(Y)|\}$. Therefore,
    \begin{equation} \label{eq:weylcorr}
        |\beta_j(X+Y) - \beta_j(X)| \leq \|Y\|_\infty \pl.
    \end{equation}
    We now identify a condition under which $(1+\epsilon) (\E \otimes \Id)(\mathrm{BP}) - (\Phi \otimes \Id)(\mathrm{BP}) \geq 0$, or equivalently $(1+\epsilon)\E \succ \Phi$. Recalling \eqref{eq:E-id-BP}, we have
    \begin{equation}
        (\E \otimes \Id)(\mathrm{BP}) = \frac{1}{Q} \bigoplus_{\lambda} \mathrm{BP}_{a_\lambda} \otimes \frac{a_\lambda}{b_\lambda} \id_{b_\lambda^2} \pl,
        \label{eq:E-structure}
    \end{equation}
    where $\mathrm{BP}_{a_\lambda}$ denotes a generalized Bell pair on a subspace of dimension $a_\lambda$. Since $\mathrm{BP}_{a_\lambda}$ is a rank 1 vector, each eigenvalue of $(\E \otimes \Id)(\mathrm{BP})$ is either equal to zero or to $a_\lambda / Q b_\lambda$ for some index $\lambda$. 
    
    Setting $X=\epsilon (\E \otimes \Id)(\mathrm{BP})$ and $Y=(\E \otimes \Id)(\mathrm{BP}) - (\Phi \otimes \Id)(\mathrm{BP})$,
    we have $X+Y = (1+\epsilon) (\E \otimes \Id)(\mathrm{BP}) - (\Phi \otimes \Id)(\mathrm{BP})$, and using eq.~\eqref{eq:weylcorr} we get
    \begin{multline}
         \big | \beta_j \big ( \epsilon (\E \otimes \Id)(\mathrm{BP})
            + ((\E \otimes \Id)(\mathrm{BP}) - (\Phi \otimes \Id)(\mathrm{BP})) \big )
            - \beta_j \big ( \epsilon (\E \otimes \Id)(\mathrm{BP}) \big ) \big |
        \\ \leq \| (\E \otimes \Id)(\mathrm{BP}) - (\Phi \otimes \Id)(\mathrm{BP}) \|_\infty \pl,
    \end{multline}
    which we rearrange as
    \begin{align}
        \beta_j \big ( (1+\epsilon) (\E \otimes \Id)(\mathrm{BP}) - (\Phi \otimes \Id)(\mathrm{BP})\big ) \geq \beta_j \big ( \epsilon (\E \otimes \Id)(\mathrm{BP}) \big ) - \| (\E \otimes \Id)(\mathrm{BP}) - (\Phi \otimes \Id)(\mathrm{BP}) \|_\infty.
    \end{align}
    Therefore, $
        (1+\epsilon) (\E \otimes \Id)(\mathrm{BP}) - (\Phi \otimes \Id)(\mathrm{BP})
    $
    has positive eigenvalues as long as
    \begin{equation} \label{eq:diff-versus-mineig}
        \| (\E \otimes \Id)(\mathrm{BP}) - (\Phi \otimes \Id)(\mathrm{BP}) \|_\infty
            \leq \frac{\epsilon}{Q} \min_\lambda \frac{a_\lambda}{b_\lambda} \pl.
    \end{equation}
    Thus, when the condition of Equation \eqref{eq:diff-versus-mineig} is satisfied, $(1+\epsilon) \E \succ \Phi$. Similarly, we aim to show that $\Phi \succ (1-\epsilon) \E$, or equivalently that $\epsilon \E \succ \E - \Phi$. This follows from the same condition on $\epsilon$ as above.
\end{proof}

\begin{lemma} \label{lem:relative-convert}
    Let $A$ be a quantum system with subsystem decomposition $A = A_1 \otimes \dots \otimes A_r$. Denote by $\cT_C$ the $k$-fold twirl, averaging over the unitary group on $C^{\otimes k}$ for a (sub)system $C$. Let the
    von Neumann algebra $\N$ be a subalgebra of the image of the projection $\bigotimes_{n=1}^r \cT_{A_n}$, 
    and let $\Phi$ be any trace-symmetric quantum channel on $A^{\otimes k}$ for which $\N$ is a fixed-point subalgebra. If
    \begin{align} 
    1 > \epsilon \geq \left\| \left(\bigotimes\nolimits_{n=1}^r \cT_{A_n}\right) (\Phi - \E_\N) \left(\bigotimes\nolimits_{n=1}^r \cT_{A_n}\right) \right\|_{2 \rightarrow 2} k!^{2 r-1} e^{k^2 / |A|} \prod_{n=1}^r \Big ( 1 - \frac{k^2}{|A_n|} \Big )^{-2} \pl,
    \end{align}
    and $k^2/|A_n| < 1$ for each $n$, then $(1-\epsilon) \E_\N \prec \Phi \prec (1+\epsilon) \E_\N $.
\end{lemma}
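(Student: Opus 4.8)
The plan is to recast the two‑sided semidefinite bound as a single operator inequality between Choi matrices, and then to exploit the fact that the conjugation by $\bigotimes_n\cT_{A_n}$ in the definition of $\epsilon$ pins the relevant Choi matrix to the low‑dimensional ``corner'' of the Schur--Weyl block structure of $\M:=\bigotimes_{n=1}^r\M_{(n)}$. Write $\E_\M:=\bigotimes_{n=1}^r\cT_{A_n}$ for the conditional expectation onto $\M$, set $D:=|A|^k$, and let $J(\Psi):=(\Psi\otimes\idd_D)(|\Omega\rangle\langle\Omega|)$ denote the Choi matrix of a map $\Psi$ on $A^{\otimes k}$, $|\Omega\rangle$ maximally entangled. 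Because the protocols in question apply $\E_\M$ both before and after an intermediate step, $\E_\M\Phi\E_\M=\Phi$; together with $\N\subseteq\M$ (so $\E_\M\E_\N\E_\M=\E_\N$) this gives $\Lambda:=\Phi-\E_\N=\E_\M\Lambda\E_\M$, hence the hypothesis reads $\epsilon=\|\Lambda\|_{2\to2}\cdot C$ with $C:=k!^r\prod_n\min\{(1-k^2/|A_n|)^{-1},|A_n|^{2k}/k!\}$. Trace‑symmetry of $\Phi$ and the fact that $\Phi$ fixes $\N$ give $\Phi\E_\N=\E_\N\Phi=\E_\N$, so $\E_\N\Lambda=\Lambda\E_\N=0$ and (since $\Phi$ is a channel fixing $\N$) $\operatorname{supp}J(\Phi)\subseteq\operatorname{supp}J(\E_\N)$. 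By the Choi--Jamio\l kowski correspondence, the conclusion $(1-\epsilon)\E_\N\prec\Phi\prec(1+\epsilon)\E_\N$ is equivalent to $-\epsilon\,J(\E_\N)\le J(\Lambda)\le\epsilon\,J(\E_\N)$.

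To establish that inequality I would first bound $\|J(\Lambda)\|_\infty$. Recall from the proof of Lemma \ref{lem:cbbound} that $\M=\bigoplus_{\vec\lambda}\BB(\cW_{\vec\lambda})\otimes\one_{\cV_{\vec\lambda}}$, where $\cW_{\vec\lambda}=\bigotimes_n\cW_{\lambda_n}$ are the small ($\kS_k$‑irrep) factors and $\cV_{\vec\lambda}=\bigotimes_n\cV_{\lambda_n}^{|A_n|}$ the large multiplicity factors, and that $\E_\M$ pinches to these blocks and completely mixes each $\cV_{\vec\lambda}$. Consequently $\Lambda=\E_\M\Lambda\E_\M$ is, up to the isometric embedding of the ``support algebra'' $\bigoplus_{\vec\lambda}\BB(\cW_{\vec\lambda})$ into $\MM_D$ (with the multiplicity factors normalized to unit $2$‑norm), the same as a Hermitian‑preserving map $\tilde\Lambda$ on $\bigoplus_{\vec\lambda}\BB(\cW_{\vec\lambda})$ with $\|\tilde\Lambda\|_{2\to2}=\|\Lambda\|_{2\to2}$, and $J(\Lambda)$ is block‑diagonal, equal on each block to $J(\tilde\Lambda)$ tensored with the maximally mixed state on $\cV_{\vec\lambda}\otimes\bar\cV_{\vec\lambda}$. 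Using the elementary estimate $\|J(\tilde\Lambda)\|_\infty\le\|J(\tilde\Lambda)\|_2\le\|\tilde\Lambda\|_{2\to2}$, which holds for any Hermitian‑preserving map since $\|J(\tilde\Lambda)\|_2^2=(\dim)^{-2}\sum_{i,j}\|\tilde\Lambda(|i\rangle\langle j|)\|_2^2\le\|\tilde\Lambda\|_{2\to2}^2$, and carrying the multiplicity normalizations through, one obtains a bound of the form $\|J(\Lambda)\|_\infty\le\|\Lambda\|_{2\to2}\cdot\mu$, where $\mu$ is a multiplicity factor controlled by the dimensions $|\cV_{\vec\lambda}|$ and $|A|=\prod_n|A_n|$.

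Combining the two, on $\operatorname{supp}J(\E_\N)$ we have $\pm J(\Lambda)\le\|J(\Lambda)\|_\infty\,\Pi\le\|J(\Lambda)\|_\infty\,\nu^{-1}\,J(\E_\N)$, where $\Pi$ projects onto $\operatorname{supp}J(\E_\N)$ and $\nu$ is the smallest nonzero eigenvalue of $J(\E_\N)$; here the support inclusion from the first paragraph is what makes this step legitimate. It then remains to verify $\mu\,\nu^{-1}\le C$. This is precisely the cancellation of the multiplicity dimensions $|\cV_{\vec\lambda}|$ against each other and against the $|A_n|^k$ factors that powers the chain of index Lemmas \ref{lem:inputpure}--\ref{lem:cbbound}: indeed $\mu\,\nu^{-1}$ is bounded by $C_{\mathrm{cb}}(\M:\N)$, which Lemma \ref{lem:cbbound} bounds by $C$. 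This yields $\pm J(\Lambda)\le\|\Lambda\|_{2\to2}\,C\,J(\E_\N)=\epsilon\,J(\E_\N)$, i.e.\ $(1-\epsilon)\E_\N\prec\Phi\prec(1+\epsilon)\E_\N$.

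The main obstacle is this last step — controlling the constant. A crude argument using the full dimension $D=|A|^k$ as the conversion factor would replace $C\approx k!^r$ by roughly $D$ (or even $D^2$), destroying the exponential improvement; the improvement hinges entirely on the $\E_\M$‑conjugation in the hypothesis, which is exactly what confines $J(\Lambda)$ to the maximally mixed multiplicity corner and lets the dimension bookkeeping telescope to $C$ rather than to something quadratically worse. Making this telescoping precise amounts to re‑deriving the content of Lemmas \ref{lem:vnavsC}--\ref{lem:cbbound} at the level of Choi matrices, and is the delicate part of the argument. A secondary technical point is the inclusion $\operatorname{supp}J(\Phi)\subseteq\operatorname{supp}J(\E_\N)$ used above, which relies on $\Phi$ genuinely fixing $\N$ and being sandwiched by $\E_\M$.
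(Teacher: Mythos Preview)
The paper's proof is essentially a two-line citation: it invokes \cite[Lemma~A.1]{gao2022complete} (which says that for a trace-symmetric $\N$-bimodule channel $\Phi$, a $2\to 2$ bound $\gamma=\|\E_\M\Phi\E_\M-\E_\N\|_{2\to 2}<1$ yields $(1-\gamma C_{\mathrm{cb}}(\M:\N))\E_\N\prec\Phi\prec(1+\gamma C_{\mathrm{cb}}(\M:\N))\E_\N$ at $n=1$), and then plugs in the index bound from Lemma~\ref{lem:cbbound}. Your Choi-matrix argument is a genuinely different route that attempts to reprove the cited black box from scratch; the intuitions (confining $J(\Lambda)$ to the Schur--Weyl ``small'' blocks via $\E_\M$-conjugation, then comparing to the smallest eigenvalue of $J(\E_\N)$) are exactly the right ones, and in simple test cases the constants do telescope to $C_{\mathrm{cb}}(\M:\N)$.

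Two points deserve flagging. First, you import $\E_\M\Phi\E_\M=\Phi$ from the applications rather than from the lemma's hypotheses; the lemma as stated only assumes $\N$ is a fixed-point subalgebra, and the paper's citation also only needs the $\N$-bimodule property. In every use in the paper (Theorems~\ref{thm:twirl-swap-twirl-relative} and~\ref{thm:twirl-crosstwirl-relative}) the sandwiching does hold, so this is harmless in context, but it is an extra assumption for the lemma as written. Second, and more substantively, the support inclusion $\operatorname{supp}J(\Phi)\subseteq\operatorname{supp}J(\E_\N)$ cannot be derived merely from ``$\Phi$ fixes $\N$'': that inclusion is equivalent to $\Phi\prec c\,\E_\N$ for \emph{some} $c$, which is half of the conclusion you are trying to prove and hence risks circularity. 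A cleaner way to organize the argument avoids Choi supports altogether: use the defining inequality $\E_\M\prec C_{\mathrm{cb}}(\M:\N)\,\E_\N$ directly, together with $\Lambda=\E_\M\Lambda\E_\M$ and a $2\to\infty$ comparison \emph{inside} $\M$ (where the effective dimension is $\max_\lambda|\cW_{\vec\lambda}|^2\le k!^r$, not $|A|^k$). That is essentially how the cited external lemma proceeds, and it sidesteps both the support issue and the ``delicate bookkeeping'' you flag --- the multiplicity cancellation is then encoded once and for all in the index, rather than tracked by hand through $\mu$ and $\nu$.
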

\begin{proof}
    Let $\M_{(n),k}$ denote the von Neumann subalgebra corresponding to the invariant subspace of the $k$-fold Haar unitary twirl $\cT_{A_n}$ on a  system $A_n^{\otimes k}$. Recalling Schur-Weyl duality, $\CC_d^{\otimes k} \cong \bigoplus_{\lambda\vdash_d k} \cV_\lambda^d \otimes \cW_\lambda$, and the block structure \eqref{eq:unitary-commuting-block-structure} of operators commuting with any unitary (which a Haar unitary twirl enforces), the invariant subalgebra of $\bigotimes_{n=1}^r \cT_{A_n}$ has the form
    \begin{align} \bigotimes_{n=1}^r \M_{(n),k} \coloneqq
        \bigotimes_{n=1}^r \left( \bigoplus\nolimits_{\lambda}  \mathcal{S}(\CC^{|\cV_\lambda^{|A_n|}|})  \otimes \BB(\cW_\lambda) \right) \pl,
        \label{eq:product-M-decomposition}
    \end{align}
    where $\mathcal{S}(\mathbb{C}^k) \cong \mathbb{C}$ denotes the algebra of scalar multiples of the identity operator on $\mathbb{C}^k$.
    By the block decomposition product, each block of $\bigotimes_{n=1}^r \M_{(n),k}$ has the product of multiplicities of its constituents. Let $\vec{\lambda} = (\lambda_1, \dots, \lambda_r)$ index each block of $\bigotimes_{n=1}^r \M_{(n),k} $. By \cite[Theorem 1.16]{christandl_structure_2006} or its later restatement as \cite[Lemma 2.11]{metger_simple_2024},
    \begin{align} |\cV_\lambda^d| = \frac{|\cW_{\lambda}|}{k!} \prod_{(i,j) \in \lambda} (d + j - i) \pl, \end{align}
    where $(i, j) \in \lambda$ denotes the cell in row $i$ and column $j$ of the Young diagram $\lambda$. Since $\lambda$ has $k$ cells in total, we have $i, j \leq k$ and thus
    \begin{align} \prod_{(i,j) \in \lambda} (d + j - i) \geq (d - k)^k = d^k (1 - k/d)^k \geq d^k (1 - k^2 / d) \pl, \end{align}
    where the final inequality follows from Bernoulli's inequality provided that $k\leq d$. 
    Conversely,
    \begin{equation}
        \prod_{(i,j) \in \lambda} (d + j - i) = d^k \prod_{(i,j) \in \lambda} \Big ( 1 + \frac{j-i}{d} \Big )
            \leq d^k e^{k^2 / d} \pl.
    \end{equation}
    Therefore, each multiplicity $b_{\vec{\lambda}}$ in $\bigotimes_{n=1}^r \M_{(n),k} $ obeys
    \begin{align} \label{eq:mult1}
    \frac{1}{k!^r} \prod_{n=1}^r |A_n|^{k} \big |\cW_{\lambda_n} \big | e^{k^2 / |A_n|} \geq b_{\vec{\lambda}} \geq \max \bigg \{ \frac{1}{k!^r} \prod_{n=1}^r \bigg ( |A_n|^{k} \big |\cW_{\lambda_n} \big | \Big ( 1 - \frac{k^2}{|A_n|} \Big ) \bigg ) , 1 \bigg \} \pl. \end{align}
    The dimension of the factors corresponding to the $\cW_{\lambda_n}$ indexed by $\vec{\lambda}$ in \eqref{eq:product-M-decomposition} is
    \begin{equation} \label{eq:untraced1}
    a_{\vec{\lambda}} \coloneqq |\cW_{\lambda_1}| \dots |\cW_{\lambda_r}| \pl.
    \end{equation}
    Recall also that $|A|^k = |A_1|^k ... |A_r|^k$.

    We apply Lemma \ref{lem:2-to-inf} with $\E_0 = \bigotimes_{n=1}^r \cT_{A_n}$. In this case, the index $\lambda$ therein is replaced by the vector index $\vec{\lambda}$. Via Equations \eqref{eq:mult1} and \eqref{eq:untraced1}, and with the assumption that $k^2 < |A_n|$ for each $n$, we have
    \begin{equation}
        \max_{\vec{\lambda}} \frac{1}{b_{\vec{\lambda}}^2} \leq \frac{k!^{2 r}}{|A|^{ 2 k} } \,\prod_{n=1}^r \Big ( 1 - \frac{k^2}{|A_n|} \Big )^{-2} \max_{\vec{\lambda}}\frac{1}{a_{\vec{\lambda}}^{2}} \leq \frac{k!^{2 r}}{|A|^{ 2 k} } \prod_{n=1}^r \Big ( 1 - \frac{k^2}{|A_n|} \Big )^{-2} \pl,
        \label{eq:b-lambda-bound}
    \end{equation}
    where we used $a_{\vec{\lambda}}\geq 1$ for all $\vec{\lambda}$ in the second inequality.
    Therefore,
    \begin{multline} \label{eq:infnorm1}
        \left\| \Big ( \left(\bigotimes\nolimits_{n=1}^r \cT_{A_n}\right) (\Phi - \E_\N) \left(\bigotimes\nolimits_{n=1}^r \cT_{A_n}\right) \otimes \Id \Big ) (\mathrm{BP}) \right\|_{\infty}
        \\ \leq \frac{k!^{2 r}}{|A|^{2 k}} \prod_{n=1}^r \Big ( 1 - \frac{k^2}{|A_n|} \Big )^{-2} \left\| \left(\bigotimes\nolimits_{n=1}^r \cT_{A_n}\right) (\Phi - \E_\N) \left(\bigotimes\nolimits_{n=1}^r \cT_{A_n}\right) \right\|_{2 \rightarrow 2} \pl.
    \end{multline}
    We will next apply Lemma \ref{lem:inf-to-rel}. Note that ``$a_\lambda$'' and ``$b_\lambda$'' therein are defined with respect to $\E$, the global twirl, so these are not obviously the same as those defined with respect to the individual pre- and post-twirls. We therefore denote these $\tilde{a}_\lambda$ and $\tilde{b}_\lambda$, noting that instead of $r$ systems undergoing local twirls, we now regard $A$ as one system. By Equations \ref{eq:mult1} and \ref{eq:untraced1} with the assumption that $k^2 / |A| < 1$,
    \begin{equation}
        \frac{|A|^{k}}{k!} \Big ( 1 - \frac{k^2}{|A|} \Big ) \leq
        \frac{\tilde{b}_\lambda}{ \tilde{a}_\lambda}
        \leq \frac{|A|^{k}}{k!} e^{k^2 / |A|} \pl.
    \end{equation}
    Therefore, with $Q=|A|^k$,
    \begin{equation}
    \begin{split}
        & Q \max_\lambda \frac{\tilde{b}_\lambda}{\tilde{a}_\lambda}
            \leq \frac{|A|^{2 k}}{k!} e^{k^2 / |A|} \pl.
    \end{split}
    \end{equation}
    Thus, by Lemma \ref{lem:inf-to-rel}, the desired cp-order inequality holds whenever
    \begin{equation}
        \epsilon \geq k!^{2 r-1} e^{k^2 / |A|} \prod_{n=1}^r \Big ( 1 - \frac{k^2}{|A_n|} \Big )^{-2} \left\| \left(\bigotimes\nolimits_{n=1}^r \cT_{A_n}\right) (\Phi - \E_\N) \left(\bigotimes\nolimits_{n=1}^r \cT_{A_n}\right) \right\|_{2 \rightarrow 2} \pl,
    \end{equation}
	which concludes the proof.
\end{proof}

\section{Analyzing approximate designs using the alternating projection method} \label{sec:alternating-projections}

In this section we prove the $(2\to 2)$-norm bounds for the \hyperref[item:twirl-swap-twirl]{Twirl-Swap-Twirl} and \hyperref[item:twirl-crosstwirl]{Twirl-Crosstwirl} protocols mentioned in Section \ref{sec:main-results}.
These norm bounds are closely related to the subspace angle determining the convergence speed in von Neumann's alternating projection method \cite{neumann_rings_1949}.
The main idea for proving the $(2\to 2)$-norm bounds is to exploit the commutant structure of local twirls implied by Schur-Weyl duality (see Section \ref{sec:schurweyl}), which can be used to expand the images of these twirls as linear combinations of tensor products of permutation operators.
The norm expressions can then be converted into inner products of the coefficient vectors in these expansions, together with matrices whose entries are functions of normalized inner products of permutation operators.
We simplify these expressions using the notion of approximate orthogonality of permutation operators \cite{helsen2023thrifty,harrow_approximate_2023-1} together with various matrix norm and eigenvalue estimates.

\subsection{Preliminaries}
We will use the following notation in this section.
We consider $k$ copies of systems $A$ and $B$ each consisting of $m$ qudits of local dimension $q$.
Each copy of $A$ (or $B$) thus has dimension $q^m$, and the full system $A^kB^k$ has dimension $q^{2mk}$.
We number individual systems $A_1,\dots,A_k$ and $B_1,\dots,B_k$, and sometimes write the full system either as $A_1\dots A_k B_1\dots B_k$ or $A_1B_1\dots A_kB_k$ to highlight the different locality cuts. 
We denote by $\kS_k$ the symmetric group of degree $k$, and we consider its action \eqref{eq:Sk-rep} on a tensor space $X_1\dots X_k$ by permuting tensor factors.
The corresponding permutation operator for $\pi\in\kS_k$ will be denoted by $\pi_X \in \cL(X_1\dots X_k)$.

For $X\in\lbrace A,B,AB\rbrace$ we consider twirling with respect to the representation \eqref{eq:Ud-rep} of the unitary group $\cU(X)$ acting on $X$,
\begin{align}
	\cT_X(\cdot) = \int_{\cU(X)} d U_X\,U_X^{\otimes k}\cdot (U_X^\dagger)^{\otimes k},\label{eq:twirling}
\end{align}
with $d U_X$ denoting the Haar measure on $\cU(X)$.
Such twirls give rise to orthogonal projections on the Hilbert space $\cL(X^{\otimes k})$ of operators acting on $X^{\otimes k}$, equipped with the Hilbert-Schmidt (or Frobenius) inner product $\langle M,N\rangle \coloneqq \tr(M^\dagger N)$.
In the proofs of this section we will repeatedly use the following fact:
\begin{lemma}\label{lem:projections-2-norm-inner-product}
    Let $X$ be a Hilbert space and denote by $\cL(X)$ its Hilbert space of operators equipped with the Hilbert-Schmidt inner product.
    Let $\cP,\cQ,\cR$ be orthogonal projections acting on $\cL(X)$, and assume that
    \begin{align}
        \cP\cR = \cR\cP = \cR = \cR\cQ = \cQ\cR. \label{eq:projections-assumption}
    \end{align}
    Then for any $X\in\cL(X)$ we have
    \begin{align}
        \|\cQ \cP(X) - \cR(X)\|_{2}^2 = \langle X,\cP\cQ\cP(X)-\cR(X)\rangle.
    \end{align}
\end{lemma}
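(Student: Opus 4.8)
The plan is to verify the identity by direct expansion of the squared 2-norm using the inner product structure, then simplify using the orthogonality of $\cP$, $\cQ$, $\cR$ together with the absorption relations in \eqref{eq:projections-assumption}. Concretely, I would start from
\begin{align}
\|\cQ\cP(X) - \cR(X)\|_2^2 = \langle \cQ\cP(X) - \cR(X), \cQ\cP(X) - \cR(X)\rangle
\end{align}
and expand into four terms: $\langle \cQ\cP(X),\cQ\cP(X)\rangle$, $-\langle \cQ\cP(X),\cR(X)\rangle$, $-\langle \cR(X),\cQ\cP(X)\rangle$, and $\langle \cR(X),\cR(X)\rangle$.

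For the first term, since $\cQ$ is an orthogonal projection it is self-adjoint and idempotent, so $\langle \cQ\cP(X),\cQ\cP(X)\rangle = \langle \cP(X),\cQ\cP(X)\rangle$, and then using $\cP^\dagger = \cP$ this equals $\langle X,\cP\cQ\cP(X)\rangle$. For the cross terms, I would use $\cR = \cQ\cR = \cR\cQ$ and $\cR = \cP\cR = \cR\cP$ repeatedly: $\langle \cQ\cP(X),\cR(X)\rangle = \langle \cP(X),\cQ\cR(X)\rangle = \langle \cP(X),\cR(X)\rangle = \langle X,\cP\cR(X)\rangle = \langle X,\cR(X)\rangle$, and symmetrically (using that all projections are self-adjoint, so the inner products are real for the relevant real combinations, or simply repeating the computation on the other side) $\langle \cR(X),\cQ\cP(X)\rangle = \langle X,\cR(X)\rangle$ as well. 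The last term is $\langle \cR(X),\cR(X)\rangle = \langle X,\cR(X)\rangle$ since $\cR$ is an orthogonal projection. Collecting, the four terms give $\langle X,\cP\cQ\cP(X)\rangle - \langle X,\cR(X)\rangle - \langle X,\cR(X)\rangle + \langle X,\cR(X)\rangle = \langle X,\cP\cQ\cP(X) - \cR(X)\rangle$, which is exactly the claimed identity.

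There is no serious obstacle here; it is a bookkeeping computation. The only points requiring a little care are: (i) making sure the self-adjointness of the projections is used correctly so that each adjoint moves to the correct side of the inner product, and (ii) handling the possibility that the Hilbert space is complex, so that $\langle u,v\rangle$ and $\langle v,u\rangle$ are conjugates rather than equal — but since $\cP\cQ\cP$ and $\cR$ are self-adjoint operators, $\langle X,\cP\cQ\cP(X)\rangle$ and $\langle X,\cR(X)\rangle$ are real, and the two cross terms are complex conjugates of each other whose sum is twice the real part, which the chain of absorption identities shows equals $2\langle X,\cR(X)\rangle \in \RR$. So the real/complex distinction causes no trouble.
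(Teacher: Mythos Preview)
Your proposal is correct and follows essentially the same route as the paper's proof: expand the squared norm into four inner-product terms, use self-adjointness and idempotence of the projections to rewrite each, apply the absorption relations $\cP\cQ\cR = \cR$ and $\cR\cQ\cP = \cR$ to reduce both cross terms to $\langle X,\cR(X)\rangle$, and collect. Your additional remark about the complex case (the cross terms being conjugates whose sum is real) is a nice piece of hygiene that the paper leaves implicit.
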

\begin{proof}
    By definition, $\|X\|_2^2 = \langle X,X\rangle$, which we use to expand the $2$-norm as follows:
    \begin{align}
        \|\cQ \cP(X) - \cR(X)\|_{2}^2 &= \langle \cQ \cP(X) - \cR(X), \cQ \cP(X) - \cR(X)\rangle\\
        &= \langle \cQ\cP(X),\cQ\cP(X)\rangle - \langle \cQ\cP(X),\cR(X)\rangle \notag\\
        &\eqspace {} - \langle \cR(X),\cQ\cP(X)\rangle + \langle \cR(X),\cR(X)\rangle. \label{eq:expanded}
    \end{align}
    Because of the self-adjointness of $\cP$, $\cQ$ and $\cR$ with respect to $\langle\cdot,\cdot\rangle$, we have 
    \begin{align}
        \langle \cQ\cP(X),\cQ\cP(X)\rangle &= \langle X,\cP\cQ^2\cP(X)\rangle = \langle X,\cP\cQ\cP(X)\rangle \\
        \langle \cR(X),\cR(X)\rangle &= \langle X,\cR^2(X)\rangle = \langle X,\cR(X)\rangle.
    \end{align}
    Moreover, because of the assumptions in \eqref{eq:projections-assumption}, 
    \begin{align}
        \langle \cQ\cP(X),\cR(X)\rangle &= \langle X,\cP\cQ\cR(X)\rangle = \langle X,\cR(X)\rangle,
    \end{align}
    and similarly $\langle \cR(X),\cQ\cP(X)\rangle = \langle X,\cR(X)\rangle$.
    Using these identities in \eqref{eq:expanded} finishes the proof.
\end{proof}

In the following, we analyze two different protocols.
The first one alternates between individual twirls on $A^k$ and $B^k$ and swapping the first $\ell$ qudits between each of the $A_i$ and $B_i$ blocks.
The bipartite version of the second protocol alternates between individual twirls on $A^k$ and $B^k$, and a twirl across the $A:B$ cut, involving the first $\ell$ qudits of each $A_i$ and $B_i$.
In fact, we prove a more general multipartite version of this ``Twirl-Crosstwirl'' protocol.
For both protocols we show convergence to the full twirl $\cT_{AB}$ and derive bounds on the corresponding convergence rates.

\subsection{Twirl-Swap-Twirl}\label{sec:twirl-swap-twirl}

Here we show that a protocol alternating between a) individual twirls on $A^k$ and $B^k$ and b) swapping the first $\ell$ qudits between each of the $A_i$ and $B_i$ blocks converges to the full twirl on $\cT_{AB}$, thus implementing an approximate $k$-design on $A^kB^k$ that converges to an exact design in the limit.
We will give a bound on the convergence rate of this protocol using the alternating projection method \cite{neumann_rings_1949}.

To this end, we denote by $A_i^\ell$ the first $\ell$ qudits in system $A_i$, and similarly for $B_i^\ell$.
The swap operator $\bF_{A_i^\ell B_i^\ell}$ swaps the first $\ell$ qudits in $A_i$ with the first $\ell$ qudits in $B_i$.
Numbering those qudits $A_{i,1},\dots,A_{i,\ell}$ and $B_{i,1},\dots,B_{i,\ell}$, we have the relation 
\begin{align*}
    \bF_{A_i^\ell B_i^\ell} \equiv \bF_{A_{i,1}B_{i,1}} \dots \bF_{A_{i,\ell}B_{i,\ell}}.
\end{align*}
We now define the following map that swaps the first $\ell$ qudits between $A$ and $B$ in each of the $k$ blocks:
\begin{align}
	\swap_\ell (\cdot) \coloneqq \bigotimes_{i=1}^k \bF_{A_i^\ell B_i^\ell} (\cdot) \bF_{A_i^\ell B_i^\ell}.\label{eq:swapping}
\end{align}
Note that $\swap_\ell^2 = \Id$. 

Consider now the following maps defined in terms of $\swap_\ell$ and the twirls $\cT_X(\cdot)$ for $X\in\lbrace A,B,AB\rbrace$ from \eqref{eq:twirling}:
\begin{align}
	\cP &= \cT_A\otimes \cT_B\label{eq:P-projection}\\
	\cQs &= \swap_\ell\circ \cP \circ\swap_\ell\label{eq:Q-projection}\\
	\cR &= \cT_{AB}.\label{eq:R-projection}
\end{align}
Our protocol consists of a repeated application of $\cP$ followed by $\cQs$.

The maps $\cP,\cQs,\cR$ defined in \eqref{eq:P-projection}--\eqref{eq:R-projection} are orthogonal projections in the space of operators acting on $A^kB^k$.
For example, $\cP$ satisfies $\cP^\dagger = \cP$ with respect to the Frobenius inner product $\langle X,Y\rangle = \tr(X^\dagger Y)$ and $\cP^2=\cP$, and similarly for $\cQs$ and $\cR$.
By left- and right-invariance of the Haar measure, we also have the `dominance relations'
\begin{align}
	\cP \cR = \cR \cP = \cR = \cR \cQs = \cQs \cR,\label{eq:dominance}
\end{align}
so that we will be able to use Lemma \ref{lem:projections-2-norm-inner-product} in the following discusison.

We claim that for $1\leq \ell<m$ the images of the projections in \eqref{eq:P-projection}--\eqref{eq:R-projection} satisfy
\begin{align}
	\im \cP \cap \im \cQs = \im \cR,\label{eq:images}
\end{align}
which implies via von Neumann's alternating projection theorem \cite{neumann_rings_1949} that
\begin{align}
	\lim_{n\to \infty} (\cQs\cP)^n = \cR.\label{eq:von-Neumann}
\end{align}
To see that \eqref{eq:images} is true, note that any operator $\cP(X)$ for $X\in\cL(A^k B^k)$ is invariant under $U_A^{\otimes k}\otimes V_B^{\otimes k}$ for any $U_A\in\cU(A)$ and $V_B\in\cU(B)$.
Schur-Weyl duality in the form \eqref{eq:unitary-invariant-X-permutations} then says that we have $\cP(X) = \sum_{\sigma,\tau\in\kS_k} x_{\sigma,\tau} \sigma_A \otimes \tau_B$ for some coefficients $x_{\sigma,\tau}\in\mathbb{C}$, where $\sigma_A=\varphi(\sigma)_A\in\cL(A^k)$ denotes the permutation operator corresponding to the permutation $\sigma\in\kS_k$ acting on the $k$ copies of the $A$-system, and similarly for $\tau_B$.
On the other hand, the swap operator $\swap_\ell$ acts on a pair of permutation operators as 
\begin{align}
	\swap_\ell(\sigma_A\otimes \tau_B) = \tau_{A^\ell}\otimes \sigma_{A^{m-\ell}}\otimes \sigma_{B^\ell}\otimes \tau_{B^{m-\ell}}. 
\end{align}
The intersection of $\im\cP$ and $\im\cQ$ thus consists of operators that can be written as a linear combination of permutation operators $\pi_{AB} = \pi_A\otimes \pi_B$ for $\pi\in\kS_k$, and this space is equal to $\im\cR$ by another application of \eqref{eq:unitary-invariant-X-permutations}.

Our goal is to obtain a bound on the convergence rate in \eqref{eq:von-Neumann}, for which we follow the exposition in~\cite{deutsch_best_2001}.
The cosine $c(\im\cP,\im\cQs)$ of the \emph{subspace angle} can be expressed as \cite[Lemma 9.5]{deutsch_best_2001}
\begin{align}
	c(\im\cP,\im\cQs) \coloneqq \|\cQs\cP-\cR\|_{2\to 2} = \sup_{X\neq 0}\frac{\left\|\cQs\cP(X)-\cR(X)\right\|_2}{\|X\|_2}.\label{eq:subspace-angle}
\end{align}
Henceforth, we refer to the quantity $c(\cdot,\cdot)$ as \emph{subspace angle cosine}.
The following bound in terms of the subspace angle cosine $c(\im\cP,\im\cQs)$ then holds for all $X\in\cL(A^kB^k)$ and $n\in\mathbb{N}$ \cite[Theorem 9.8]{deutsch_best_2001}:
\begin{align}
	\left\| (\cQs\cP)^n(X) - \cR(X)\right\|_2 \leq c(\im\cP,\im\cQs)^{2n-1} \|X\|_2. \label{eq:alternating-projection-convergence-speed}
\end{align}

Our primary technical result in this section is a bound on the subspace angle cosine of the images of the projections $\cP$ and $\cQ$:
\begin{prop}\label{prop:subspace-angle-swap} 
	Let $A$ and $B$ each consist of $k$ blocks of $m$ qudits of local dimension $q$, and let $\ell$ be the number of qudits swapped between $A$ and $B$ as defined in \eqref{eq:swapping}, where $\ell\leq m/2$ without loss of generality.\footnote{The case $m/2 < \ell < m$ follows from this by a simple relabeling of $A$ and $B$.}
	We assume that $k^2\leq q^m$ and set $a = (1-\eps)^{-2}$ with $\eps = \frac{k^2}{2q^{m}}$.
	Then the subspace angle cosine $c(\im \cP,\im \cQs)$ corresponding to the projections $\cP$ and $\cQs$ defined in \eqref{eq:P-projection} and \eqref{eq:Q-projection}, respectively, satisfies
	\begin{align}
		c(\im \cP,\im \cQs) \leq \sqrt{4 a k^2 q^{-m} + 4a^2k!^3 q^{-2\ell} + 2a^2k!^5 q^{-4\ell} } \pl.
	\end{align}
\end{prop}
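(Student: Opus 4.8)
The plan is to express the subspace angle as the spectral norm of an explicit finite matrix and then bound that norm by a triangle inequality over terms classified by how far the swap ``mixes'' the permutation labels. \textbf{Reduction to a matrix.} Since $\cP,\cQs,\cR$ from \eqref{eq:P-projection}--\eqref{eq:R-projection} obey the dominance relations \eqref{eq:dominance}, hence hypothesis \eqref{eq:projections-assumption}, Lemma~\ref{lem:projections-2-norm-inner-product} gives $\|\cQs\cP(X)-\cR(X)\|_2^2=\langle X,(\cP\cQs\cP-\cR)(X)\rangle$ and therefore $c(\im\cP,\im\cQs)^2=\|\cP\cQs\cP-\cR\|_{2\to 2}$, the spectral norm of the positive semidefinite operator $\cP\cQs\cP-\cR=\cP(\cQs-\cR)\cP$. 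This operator is supported on $\im\cP=\spn\{\sigma_A\otimes\tau_B:\sigma,\tau\in\kS_k\}$ by Schur--Weyl duality \eqref{eq:unitary-invariant-X-permutations}; the hypothesis $k\leq q^m$ makes these $k!^2$ operators linearly independent, so writing $X=\sum_{\sigma,\tau}x_{\sigma,\tau}\,\sigma_A\otimes\tau_B$ recasts the problem as comparing $x^\dagger(\text{explicit Gram-type matrix})x$ with $x^\dagger(G\otimes G)x$, where $G_{\sigma\sigma'}=q^{m\,c(\sigma^{-1}\sigma')}$ and $c(\cdot)$ is the number of cycles. The bound $k\leq q^m$ gives $G=q^{mk}(\one+E)$ with $\|E\|\leq\eps=k^2/(2q^m)$ (the off-diagonal entries of $G$ are at most $q^{m(k-1)}$, with $\binom{k}{2}$ of the largest ones), and dividing out the scale $q^{mk}$ is what introduces $a=(1-\eps)^{-2}$.

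\textbf{Identifying $\bX$.} Expand $\cP\cQs\cP=(\cP\,\swap_\ell\,\cP)^2$ and use \eqref{eq:S_ell-action}, which turns $\sigma_A\otimes\tau_B$ into a permutation operator applying $\tau$ on $A^\ell$, $\sigma$ on $A^{m-\ell}$, $\sigma$ on $B^\ell$, and $\tau$ on $B^{m-\ell}$; re-expanding over the permutation basis with $G^{-1}$ represents $\cP\,\swap_\ell\,\cP$ by the matrix $\bM$, with entries products of factors $q^{\ell\,c(\cdot)}$, $q^{(m-\ell)c(\cdot)}$, and inverse-Gram factors. Two structural features are crucial: (i) $\swap_\ell$ fixes $\im\cR=\spn\{\pi_{AB}\}$ pointwise, so $\bM$ acts as the identity on the ``aligned'' sector $\{\sigma=\tau\}\cong\im\cR$; and (ii) columns indexed by misaligned pairs $\sigma\neq\tau$ carry a suppression by $q^{-\ell}$ from each of the $A$- and $B$-sides. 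Letting $\bN$ be the (normalised, symmetrised) overlap matrix representing $\cR=\cT_{AB}$ inside $\im\cP$, one arrives at $c(\im\cP,\im\cQs)^2\leq\|\bX\|$ with $\bX=a\bM^2-b\,\bN^T\bN$ and $a,b$ the constants of \eqref{eq:constants}.

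\textbf{Bounding $\|\bX\|$.} Decompose $\bX$ according to the alignment of its row/column labels. On the aligned sector $\im\cR$, where $\cP\cQs\cP$ acts as the identity, $a\bM^2$ and $b\,\bN^T\bN$ are both near-identity, so this block reduces to $(a-b)\one$ plus Gram corrections of norm $O(\eps)$; since $|a-b|=O(\eps)=O(k^2q^{-m})$, this yields the $9ak^2q^{-m}$ term (the constant absorbing both contributions). The blocks linking the aligned sector to its complement, together with the ``single-mismatch'' piece of the complement block, carry one factor $q^{-\ell}$ per side, and after a crude Frobenius-norm bound over the $O(k!)$ permutations contributing to each entry give the $4a^2k!^3q^{-2\ell}$ term; the ``double-mismatch'' piece, with two factors $q^{-\ell}$ on each side, gives the $2a^2k!^5q^{-4\ell}$ term the same way. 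Summing the three contributions and taking the square root yields the claimed bound.

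\textbf{Main obstacle.} The real difficulty lies in the last two stages. Because $G$ is not a multiple of the identity, $\cP$ is \emph{not} the naive symmetriser, so the natural ``aligned vs.\ misaligned'' block decomposition is not an orthogonal one, and every estimate must simultaneously track the $O(q^{-m})$ Gram corrections and the $O(q^{-\ell})$ mixing induced by $\swap_\ell$, with the near-cancellation on $\im\cR$ established through cycle-structure bookkeeping for $\bM$, $\bM^2$, and $\bN^T\bN$. Moreover, in contrast to the Twirl--Crosstwirl matrix $\bY$, the matrix $\bX=a\bM^2-b\bN^T\bN$ is a difference of two positive matrices with no monotone or Perron--Frobenius structure to exploit; this forces the coarse operator-norm-by-Frobenius-norm estimate and is precisely why the resulting bound carries the large $k!^3$ and $k!^5$ prefactors rather than a rate comparable to that of Twirl--Crosstwirl.
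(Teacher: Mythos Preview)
Your proposal is essentially correct and follows the same approach as the paper: reduce via Lemma~\ref{lem:projections-2-norm-inner-product} to the spectral norm of $a\bM^2-b\bN^T\bN$ (the paper's Lemma~\ref{lem:inner-product-rewriting} makes your ``reduction to a matrix'' precise, and an extra global factor of $a$ appears from the denominator bound \eqref{eq:x-G-x-bound}), then exploit the cancellation of the leading pieces. What you call the ``aligned sector'' is exactly the paper's diagonal matrix $\bM_0$ with $1$'s at positions $((\pi,\pi),(\pi,\pi))$; the paper makes the cancellation explicit by writing $\bM=\bM_0+q^{-2\ell}\bM_1$, $\bN=\bN_0+q^{-m}\bN_1$ and observing $\bN_0^T\bN_0=\bM_0$, so that $a\bM^2-b\bN^T\bN=(a-b)\bM_0+(\text{cross terms in }q^{-2\ell},q^{-4\ell},q^{-m},q^{-2m})$, which is your ``single/double mismatch'' decomposition, and your observation about the lack of Perron--Frobenius structure forcing the crude $k!^3,k!^5$ bounds is precisely the point the paper makes.
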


The proof of this proposition is given in Section \ref{sec:subspace-angle-swap}.
Here we show how it implies the following main result on approximate unitary designs stated informally in \hyperref[item:twirl-swap-twirl]{Twirl-Swap-Twirl} in Section \ref{sec:main-results}:

\begin{theorem} \label{thm:twirl-swap-twirl-relative}
    Let $A$ and $B$ each consist of $k$ blocks of $m$ qudits of local dimension $q$. Let $\cT_A$ and $\cT_B$ be exact $k$-design channels respectively on $A^{\otimes k}$ and $B^{\otimes k}$. Let $\swap_{\ell}$ denote a unitary that exchanges an arbitrary $\ell$-qubit subsystem of $A$ with one of $B$ in each of the $k$ blocks for $\ell \leq m/2$. Furthermore, assume that $k^2 \leq q^m$. Then $(\cT_A \otimes \cT_B) \swap_{\ell} (\cT_A \otimes \cT_B)$ is an $\epsilon$-approximate multiplicative error $k$-design channel when
    \begin{align}
        \ell \geq \frac{11}{2}\log_q k! + \log_q\left(\frac{1}{\epsilon}\right)-6
        \log_q\left(1-\frac{k^2}{q^m}\right)+\log_q (4 e).
    \end{align}
\end{theorem}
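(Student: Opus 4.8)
The plan is to recognize the candidate design channel $\Phi \coloneqq (\cT_A\otimes\cT_B)\,\swap_\ell\,(\cT_A\otimes\cT_B) = \cP\circ\swap_\ell\circ\cP$, with $\cP=\cT_A\otimes\cT_B$ as in \eqref{eq:P-projection}, to rewrite its distance from the exact $k$-design $\cR = \cT_{AB}$ in terms of the subspace angle $c(\im\cP,\im\cQs) = \|\cQs\cP - \cR\|_{2\to2}$ bounded in Proposition~\ref{prop:subspace-angle-swap}, and then to feed that TPE estimate into the conversion Lemma~\ref{lem:relative-convert} applied with $r = 2$, $A_1 = A$, $A_2 = B$, so that $\bigotimes_{n=1}^2\cT_{A_n} = \cP$ and $\E_\N = \cR$. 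A relabeling of the qudits inside $A$ and inside $B$ reduces an arbitrary $\ell$-qudit exchange to the ``first $\ell$'' exchange $\swap_\ell$ of \eqref{eq:swapping}.

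First I would record the algebraic facts needed to invoke Lemma~\ref{lem:relative-convert}. The channel $\Phi$ is trace-symmetric, as a composition of the trace-symmetric maps $\cP$ and $\swap_\ell$, and $\cP\Phi\cP = \Phi$ because $\cP^2 = \cP$. Let $\N$ be the fixed-point algebra of $\cR$, so that $\E_\N = \cR$; by Schur--Weyl duality (see \eqref{eq:unitary-invariant-X-permutations}) this is $\spn\{\pi_A\otimes\pi_B : \pi\in\kS_k\}$, which is a subalgebra of the fixed-point algebra $\spn\{\sigma_A\otimes\tau_B : \sigma,\tau\in\kS_k\}$ of $\cP$. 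By the action of $\swap_\ell$ on permutation operators recorded just before \eqref{eq:images}, each $\pi_A\otimes\pi_B$ is invariant under $\swap_\ell$; hence $\swap_\ell$, and therefore $\Phi$, restricts to the identity on $\N$, and since in addition $\cP\cR = \cR\cP = \cR$ and $\swap_\ell\cR = \cR\swap_\ell = \cR$ (the latter since $\swap_\ell$ is a self-adjoint involution fixing $\im\cR$ pointwise), we obtain $\E_\N\Phi = \Phi\E_\N = \E_\N$, i.e.\ $\Phi$ is absorbed by $\E_\N$, as Lemma~\ref{lem:relative-convert} requires.

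Next, since $\cP\Phi\cP = \Phi$ and $\cP\cR\cP = \cR$, the quantity entering Lemma~\ref{lem:relative-convert} is $\|\cP(\Phi - \E_\N)\cP\|_{2\to2} = \|\Phi - \cR\|_{2\to2}$. Using that conjugation by $\swap_\ell$ preserves the Hilbert--Schmidt norm and that $\swap_\ell\cR = \cR$, I would write $\|\Phi - \cR\|_{2\to2} = \|\swap_\ell(\cP\swap_\ell\cP - \cR)\|_{2\to2} = \|\cQs\cP - \cR\|_{2\to2} = c(\im\cP,\im\cQs)$, which Proposition~\ref{prop:subspace-angle-swap} bounds by $\sqrt{9ak^2q^{-m} + 4a^2k!^3q^{-2\ell} + 2a^2k!^5q^{-4\ell}}$ with $a = (1-\tfrac{k^2}{2q^m})^{-2}$. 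Lemma~\ref{lem:relative-convert}, with the minima in its prefactor bounded above by $(1-k^2/q^m)^{-1}$, then gives $(1-\epsilon')\cR \prec \Phi \prec (1+\epsilon')\cR$ provided $\epsilon' < 1$, where $\epsilon' = c(\im\cP,\im\cQs)\cdot k!^2(1-k^2/q^m)^{-2}$. It remains to pick $\ell$ with $\epsilon' \le \epsilon$: I would use $\ell\le m/2$ to replace $q^{-m}$ by $q^{-2\ell}$, then the crude bounds $k^2\le k!^3$, $a\ge 1$, and $k!^2q^{-2\ell}\le 1$ (which holds once $\ell\ge\log_q k!$, implied by the claimed lower bound) to collapse the three terms under the root into one term bounded by $C\,a^2k!^3q^{-2\ell}$ for an absolute constant $C\le 16$, so that $c(\im\cP,\im\cQs)\le\sqrt{C}\,a\,k!^{3/2}q^{-\ell}$ and hence $\epsilon'\le\sqrt{C}\,(1-k^2/q^m)^{-4}\,k!^{7/2}\,q^{-\ell}$ after using $a(1-k^2/q^m)^{-2}\le(1-k^2/q^m)^{-4}$. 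Requiring this to be at most $\epsilon$ and solving for $\ell$ yields $\ell\ge\tfrac72\log_q k! + \log_q(1/\epsilon) - 4\log_q(1-k^2/q^m) + \tfrac12\log_q C$, and $\sqrt C\le 4$ gives the stated bound; since $\epsilon < 1$ this also ensures $\epsilon' < 1$.

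The conceptual content is modest; the real work is the final optimization --- tracking constants and the $q^{-m}$ versus $q^{-2\ell}$ interplay carefully enough to land on the clean $\log_q 4$, and checking the implicit consistency requirements, namely that the lower bound on $\ell$ not exceed $m/2$ (which forces $m$ to be of order $7\log_q k! + 2\log_q(1/\epsilon)$, matching the ``each subsystem sufficiently large'' hypothesis) and that $k^2 < 2q^m$ keep the constant $a$ finite.
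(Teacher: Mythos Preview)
Your proof is correct and follows essentially the same route as the paper: bound the subspace angle via Proposition~\ref{prop:subspace-angle-swap}, then convert to relative error via Lemma~\ref{lem:relative-convert} with $r=2$. You are in fact more explicit than the paper on two points it leaves implicit---the identity $\|\Phi-\cR\|_{2\to2}=\|\cQs\cP-\cR\|_{2\to2}$ (via the $\swap_\ell$ isometry) and the verification that $\Phi$ is absorbed by $\E_\N$---and your simplification of the square root (using $q^{-m}\le q^{-2\ell}$ directly rather than the paper's ad hoc chain $a^2k!^3q^{-2\ell}\ge a^2k!^5q^{-4\ell}\ge 8k^2q^{-m}$) is cleaner, landing on $\sqrt{15}<4$ where the paper gets $3$ under stronger implicit assumptions.
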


\begin{proof}
By Proposition \ref{prop:subspace-angle-swap},
\begin{align}
\| \cP \cQs - \cT_{AB} \|_{2 \rightarrow 2} &\leq\sqrt{4 a k^2 q^{-m} + 4a^2k!^3 q^{-2\ell} + 2a^2k!^5 q^{-4\ell} } \\
&\leq \sqrt{3\cdot 4 a^2 k!^5 q^{-2\ell}}\\
&\leq \sqrt{12} a k!^{5/2}q^{-\ell},
\end{align}
where we used $a\geq 1$, $k^2\leq k!^5$ for $k\geq 1$, and $2\ell\leq m$.
Using Lemma \ref{lem:relative-convert} with $r=2$, we can therefore choose
\begin{align}
    \epsilon & = \sqrt{12} e^{k^2 / q^{m}} a q^{-\ell}  k!^{11/2} \left(1 - \frac{k^2}{q^m} \right)^{-4} \leq 4 e q^{-\ell}k!^{11/2}  \left(1 - \frac{k^2}{q^m} \right)^{-6},
\end{align}
which gives the bound stated in the theorem after solving for $\ell$.
\end{proof}

\subsection{Twirl-Crosstwirl}\label{sec:multipartite}

\begin{figure}
	\scriptsize
	\centering
	\begin{subfigure}[t]{0.48\textwidth}
		\centering
        \includegraphics[width=\textwidth]{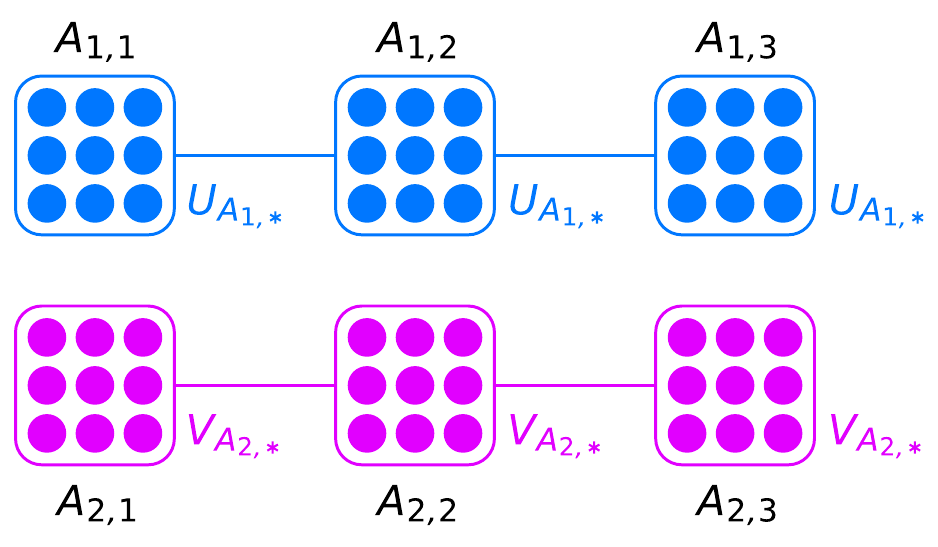}
		\caption{Twirl $\cP$ within rows, consisting of independent twirls $\cT_{A_1}=\int_{\cU(A_{1,*})} dU \, U_{A_{1,*}}^{\otimes 3}(\cdot) (U_{A_{1,*}}^\dagger)^{\otimes 3}$ along the first row (in blue), and $\cT_{A_2}=\int_{\cU(A_{2,*})} dV\, V_{A_{2,*}}^{\otimes 3}(\cdot) (V_{A_{2,*}}^\dagger)^{\otimes 3}$ along the second row (in magenta), as defined in \eqref{eq:multipartite-P-row} and \eqref{eq:multipartite-P}. Both unitaries $U_{A_{1,*}}$ and $V_{A_{2,*}}$ each act on 9 qudits of local dimension $q$ making up the systems $A_{p,k}$ for $p=1,2$ and $k=1,2,3$. }
        \label{fig:row-twirl}
	\end{subfigure}\hfill
	\begin{subfigure}[t]{0.48\textwidth}
		\centering
		\includegraphics[width=\textwidth]{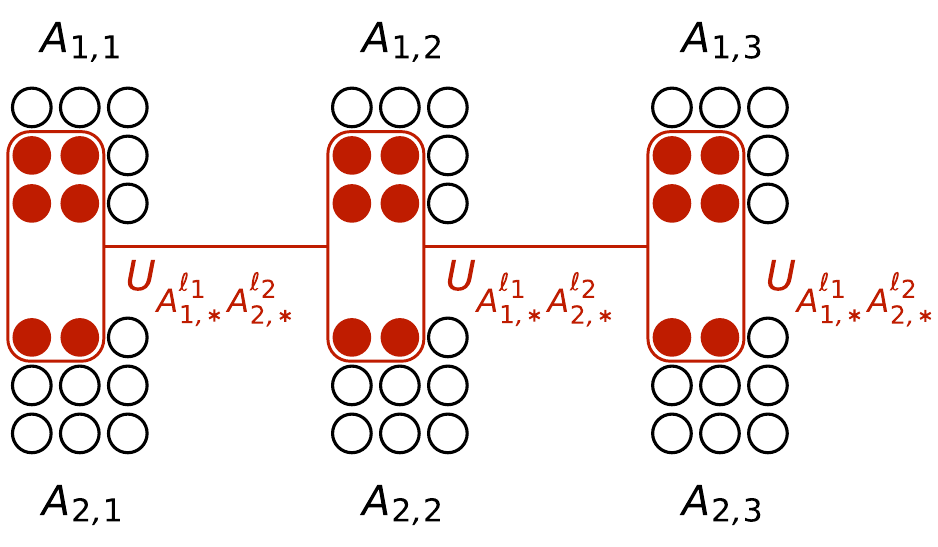}
		\caption{Crosstwirl $\cQm$ across the rows, defined as $\cQm = \int_{\cU\left(A_{1,*}^{\ell_1}A_{2,*}^{\ell_2}\right)}dU\, U_{A_{1,*}^{\ell_1}A_{2,*}^{\ell_2}}^{\otimes 3} (\cdot) (U_{A_{1,*}^{\ell_1}A_{2,*}^{\ell_2}}^\dagger)^{\otimes 3}$. The unitary $U_{A_{1,*}^{\ell_1} A_{2,*}^{\ell_2}}$ acts on the 6 qudits marked in red, comprised of $\ell_1=4$ qudits in each of the first-row blocks, and $\ell_2=2$ qudits in each of the second-row blocks. Note that this unitary acts across the locality cut between rows $A_{1}$ and $A_{2}$, in contrast to the twirls within rows in panel (\textsc{\scriptsize A}).}
		\label{fig:crosstwirl-Q}
	\end{subfigure}
	\caption{Concrete example of the Twirl-Crosstwirl protocol for $P=2$ parties each consisting of $K=3$ copies of 9-qudit blocks $A_{p,k}$. 
    Each circle represents a qudit, and unitaries acting on these qudits are indicated by boxes around them. The first $\ell_1=4$ qudits in the first-row blocks $A_{1,*}$ and the first $\ell_2=2$ qudits in the second-row blocks $A_{2,*}$ participate in the Crosstwirl $\cQm$ in panel (\textsc{\scriptsize B}).}
	\label{fig:multipartite-twirl-crosstwirl-cartoon}
\end{figure}

In this section we analyze the Twirl-Crosstwirl protocol depicted in Fig.~\ref{fig:multipartite-twirl-crosstwirl-cartoon}.
This is a variation of the Twirl-Swap-Twirl protocol from the previous Section \ref{sec:twirl-swap-twirl}, in which the swapping of qudits across the bipartition is replaced by a twirl.
We further generalize the setup beyond the bipartite setting as follows:
For given $P,K\in\mathbb{N}$ we consider a total of $PK$ systems arranged in a $(P\times K)$-grid and numbered $A_{1,1},\dots,A_{1,K}$ in the first row, $A_{2,1},\dots,A_{2,K}$ in the second row, and so on until the last row $A_{P,1},\dots,A_{P,K}$; see Fig.~\ref{fig:multipartite-twirl-crosstwirl-cartoon} for a visualization.\footnote{In Sec.~\ref{sec:twirl-swap-twirl} we only considered bipartite systems ($P=2$) and used the naming scheme $A\equiv A_1$ and $B\equiv A_2$.}
In the $p$-th row each $A_{p,k}$ system for $k=1,\dots,K$ consists of $m_p$ qudits of local dimension $q$, so that $|A_{p,k}|=q^{m_p}$ for $k=1,\dots,K$.
In the grid $(A_{p,k})_{p,k}$ we refer to $p$ as the `row'-index and to $k$ as the `column'-index.
We stress that the rows should be thought of as `local', and any operations along columns (between the different parties $A_1,\dots,A_P$) are expensive.

We now consider a protocol that alternates between independent twirls within the `rows', and twirls of the first $\ell_p \leq m_p$ qudits in each column across the rows.
For the following definitions it will be helpful to consult Fig.~\ref{fig:multipartite-twirl-crosstwirl-cartoon}.
Let us denote one of the identical blocks of $m_p$ qudits (of local dimension $q$) in the $p$-th row of the block array by $A_{p,*}$.
That is, the $p$-th row consists of $K$ identical copies of blocks $A_{p,*}$.
We also denote the first $\ell_p$ qudits of such a block by $A_{p,*}^{\ell_p}$.

Now, the first operation in our alternating protocol twirls the systems within the $P$ rows of blocks.
To this end, we first consider a `local' twirl $\cT_{A_p}$ within the $p$-th row, as depicted in panel ({\scriptsize A}) of Fig.~\ref{fig:multipartite-twirl-crosstwirl-cartoon}:
\begin{align} 
    \cT_{A_p}(\cdot) = \int_{\cU(A_{p,*})} dU\,U_{A_{p,*}}^{\otimes K} (\cdot) (U_{A_{p,*}}^\dagger)^{\otimes K},
    \label{eq:multipartite-P-row}
\end{align}
where $U_{A_{p,*}}\in\cU(A_{p,*})=\cU((\mathbb{C}^q)^{\otimes m_p})$ denotes a unitary acting on the block $A_{p,*}$ of dimension $q^{m_p}$.
Performing this twirl along each row gives a map
\begin{align}
    \cP &= \bigotimes_{p=1}^P \cT_{A_{p}}\label{eq:multipartite-P}.
\end{align}
The second operation twirls the first $\ell_p$ qudits in each $A_{p,*}$ across a column:
\begin{align}
    \cQm &= \cT_{A_{1}^{\ell_1}\dots A_{P}^{\ell_P}} = \int_{\cU\left(A_{1,*}^{\ell_1}\dots A_{P,*}^{\ell_P}\right)} dU\, \left(U_{A_{1,*}^{\ell_1}\dots A_{P,*}^{\ell_P}}\right)^{\otimes K} (\cdot) \left(U_{A_{1,*}^{\ell_1}\dots A_{P,*}^{\ell_P}}^\dagger\right)^{\otimes K}, \label{eq:multipartite-Q}
\end{align}
where now the unitary $U_{A_{1,*}^{\ell_1}\dots A_{P,*}^{\ell_P}}$ acts on the system $A_{1,*}^{\ell_1}\dots A_{P,*}^{\ell_P}$ (the first $\ell_p$ qudits within each $A_{p,*}$) of dimension $q^{\ell_1+\dots+\ell_P}$.
The subscript $\mathrm{mct}$ stands for ``\underline{M}ultipartite \underline{C}ross\underline{T}wirl''.

Finally, we will show in this section that alternating the maps $\cP$ in \eqref{eq:multipartite-P} and $\cQm$ in \eqref{eq:multipartite-Q} converges to the ``full twirl''
\begin{align}
    \cR &= \cT_{A_1\dots A_P} = \int_{\cU\left(A_{1,*}\dots A_{P,*}\right)} \left(U_{A_{1,*}\dots A_{P,*}}\right)^{\otimes K} (\cdot) \left(U_{A_{1,*}\dots A_{P,*}}^\dagger\right)^{\otimes K},
\end{align}
where the unitary $U_{A_{1,*}\dots A_{P,*}}$ is a ``global'' unitary across one of the columns  of blocks (again we refer to Fig.~\ref{fig:multipartite-twirl-crosstwirl-cartoon} for a visualization).

As in the previous sections, we have $\im\cP \cap \im \cQm = \im \cR$, as well as the relations
\begin{align}
	\cP \cR &= \cR \cP = \cR = \cR \cQm = \cQm \cR. \label{eq:dominance-multipartite}
\end{align}
A repeated alternating application of $\cP$ and $\cQm$ thus converges to the full twirl $\cR$,
\begin{align}
	\lim_{n\to\infty} (\cQm \cP)^n = \cR, \label{eq:multipartite-protocol-convergence}
\end{align}
and we prove the following bound on the subspace angle cosine $c(\im\cP,\im\cQm)$ controlling the convergence speed in \eqref{eq:multipartite-protocol-convergence}:

\begin{prop} \label{prop:subspace-angle-multi-crosstwirl}
	Let $K,q,P,\ell_1,\dots,\ell_P,m_1,\dots,m_P$ be as above, and set $L=\ell_1+\dots+\ell_P$ and $M=m_1+\dots+m_P$.
    We assume that $1\leq \ell_p \leq m_p/2$ and $K^2< 2q^{\ell_p}$ for all $p=1,\dots,P$.
	Then the subspace angle cosine $c(\im\cP,\im\cQm)$ of the two projections $\cP$ and $\cQm$ defined in \eqref{eq:multipartite-P} and \eqref{eq:multipartite-Q}, respectively, satisfies
	\begin{align}
		c(\im\cP,\im\cQm)^2 &\leq b_1b_3b_4 q^{-2LK} K!^P \prod_{p=1}^P \binom{q^{2\ell_p}+K-1}{K} - 2b_1b_2 q^{-2MK} K!^P \prod_{p=1}^P \binom{q^{2m_p}+K-1}{K} \notag\\
		& \eqspace{}+ b_1b_2 q^{-2MK} K!^{2P} \prod_{p=1}^P \binom{q^{m_p}+K-1}{K}^2,
	\end{align}
	with constants
	\begin{align}
		\begin{aligned} 
			b_1 &= \prod_{p=1}^P \left(1-\frac{K^2}{2q^{m_p}}\right)^{-1} & \qquad b_2&= \exp\left(-\frac{K^2}{2q^M}\right) \\ b_3 &= \left(1-\frac{K^2}{2q^L}\right)^{-1}
			& b_4&= \prod_{p=1}^P \exp\left(\frac{K^2}{2q^{m_p-\ell_p}}\right).
		\end{aligned}
        \label{eq:constants}
	\end{align}
\end{prop}

\begin{rem}\normalfont
    The assumptions $1\leq \ell_p\leq m_p/2$ and $K^2< 2q^{\ell_p}$ for all $p=1,\dots,P$ ensure that also $K^2 < 2q^{\ell_p}<2q^{m_p}$ as well as $K^2 < 2q^{\ell_p} < 2q^{\ell_1+\dots+\ell_P} = 2q^L \leq 2q^M$.
    As a result, the following quantities appearing in \eqref{eq:constants} are each strictly smaller than $1$ for all $p$,
    \begin{align}
        \max\left\lbrace \frac{K^2}{2q^{m_p}} , \frac{K^2}{2q^M}, \frac{K^2}{2q^L}, \frac{K^2}{2q^{m_p-\ell_p}} \right\rbrace <1,
    \end{align}
    and hence the constants $b_i$ in \eqref{eq:constants} are well-defined and non-trivial.
\end{rem}

The proof of Proposition \ref{prop:subspace-angle-multi-crosstwirl} is given in Section \ref{sec:proof-multipartite}. 
For later use, we rephrase Proposition~\ref{prop:subspace-angle-multi-crosstwirl} as a simplified bound in terms of TPEs:
\begin{cor} \label{cor:twirl-crosstwirl-tpe}
	Let $\cP, \cQm, K,q,P,\ell_1, \dots, \ell_P, m_1, \dots, m_P$ be as in Proposition \ref{prop:subspace-angle-multi-crosstwirl}, and furthermore assume that $2 K^2 \sum_{p=1}^P q^{-m_p} \leq 1$ and that $m_p \geq 3 \ell_p$ for each $p$. 
	Then,
	\begin{align}
		\| \cP \cQm - \cR \|_{2 \rightarrow 2} \leq 5 K \sqrt{\sum_p \frac{1}{q^{2 \ell_p}}} \pl.
	\end{align}
\end{cor}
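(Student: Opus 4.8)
The plan is to feed the explicit estimate of Proposition~\ref{prop:subspace-angle-multi-crosstwirl} into elementary inequalities and show that its right-hand side is at most $25K^2\sum_p q^{-2\ell_p}$; since $\cP,\cQm,\cR$ are self-adjoint one has $\|\cP\cQm-\cR\|_{2\to2}=\|\cQm\cP-\cR\|_{2\to2}=c(\im\cP,\im\cQm)$, so taking square roots then gives the claim. (For $P=1$ the map $\cP=\cT_{A_1}$ already equals $\cR=\cT_{A_1}$, so the bound is trivial; I assume $P\ge2$.)

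First I would convert the binomials into products. Since $K!\binom{d+K-1}{K}=\prod_{j=0}^{K-1}(d+j)$, each of the three terms in Proposition~\ref{prop:subspace-angle-multi-crosstwirl} factorizes over $p$, and with $X=\prod_p\prod_{j=0}^{K-1}(1+jq^{-2\ell_p})$, $Y=\prod_p\prod_{j=0}^{K-1}(1+jq^{-2m_p})$ and $Z=\prod_p\prod_{j=0}^{K-1}(1+jq^{-m_p})$ the bound becomes $c(\im\cP,\im\cQm)^2\le acgX-2abY+abZ^2$. Using $Y\ge1$ (every factor exceeds $1$) this is at most $acgX-2ab+abZ^2=acg(X-1)+ab(Z^2-1)+a(cg-b)$, so it remains to show that each of these three summands is $O\!\big(K^2\sum_p q^{-2\ell_p}\big)$.

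For the product terms, $1+x\le e^x$ gives $X\le\exp\!\big(\binom{K}{2}\sum_p q^{-2\ell_p}\big)$, and $m_p\ge3\ell_p$ forces $q^{-m_p}\le q^{-1}q^{-2\ell_p}$, so likewise $Z^2\le\exp\!\big(K(K-1)\sum_p q^{-m_p}\big)\le\exp\!\big(\frac{K^2}{2}\sum_p q^{-2\ell_p}\big)$. The hypothesis on $K$ makes both exponents at most $1/2$, so convexity of $t\mapsto e^t-1$ gives $X-1$ and $Z^2-1$ bounded by a constant times $K^2\sum_p q^{-2\ell_p}$. For the constants $a,b,c,g$ of Proposition~\ref{prop:subspace-angle-multi-crosstwirl}: from $K^2\le q^{2\ell_p}\le q^{m_p}$ and $m_p\ge3\ell_p$ one gets $\frac{K^2}{2q^{m_p}}\le\frac14$, so $a=\prod_p(1-\frac{K^2}{2q^{m_p}})^{-1}$ is a bounded constant, $g=\prod_p\exp(\frac{K^2}{2q^{m_p-\ell_p}})\le e^{1/2}$ with $g-1=O\!\big(K^2\sum_p q^{-2\ell_p}\big)$ (using $m_p-\ell_p\ge2\ell_p$), and $b\in(0,1]$ with $1-b\le\frac{K^2}{2q^M}=O\!\big(K^2\sum_p q^{-2\ell_p}\big)$. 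The delicate constant is $c=(1-\frac{K^2}{2q^L})^{-1}$: here $q^{-L}=\prod_p q^{-\ell_p}$, and the key point is $\prod_p q^{-\ell_p}=q^{-L}\le q^{-2L/P}\le\frac1P\sum_p q^{-2\ell_p}\le\frac12\sum_p q^{-2\ell_p}$, where the first inequality uses $P\ge2$ and the second is the AM--GM inequality applied to $q^{-2\ell_1},\dots,q^{-2\ell_P}$. This makes $\frac{K^2}{2q^L}\le\frac14$, hence $c\le\frac43$ and $c-1=O\!\big(K^2\sum_p q^{-2\ell_p}\big)$. Then $cg-b\le(cg-1)+(1-b)$ with $cg-1=c(g-1)+(c-1)=O\!\big(K^2\sum_p q^{-2\ell_p}\big)$, and collecting the bounds on $acg(X-1)$, $ab(Z^2-1)$, $a(cg-b)$ yields $c(\im\cP,\im\cQm)^2\le CK^2\sum_p q^{-2\ell_p}$; a crude accounting of the constants above (there is ample slack against the target $25$) gives $C\le25$.

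The main obstacle is the term $a(cg-b)$, and within it the factor $c-1\propto K^2q^{-L}$: a priori $q^{-L}=\prod_p q^{-\ell_p}$ bears no obvious relation to the quantity $\sum_p q^{-2\ell_p}$ controlling the target bound — indeed for a single part $q^{-\ell_1}$ dwarfs $q^{-2\ell_1}$ — so one genuinely needs $P\ge2$ together with the AM--GM comparison $\prod_p q^{-\ell_p}\le\frac1P\sum_p q^{-2\ell_p}$; this same inequality is also what keeps $c$, and hence $acg$, bounded by an absolute constant rather than growing with the $\ell_p$. Everything else is a routine, slack-rich constant chase, so I would not attempt to optimize the numerical factor.
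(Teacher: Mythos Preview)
Your proposal is correct and follows essentially the same route as the paper: both expand the binomials from Proposition~\ref{prop:subspace-angle-multi-crosstwirl} via $K!\binom{d+K-1}{K}=\prod_{j=0}^{K-1}(d+j)$ (the paper packages this as Lemma~\ref{lem:cancel}), bound the constants $a,b,c,g$ under the hypotheses, and reduce all residual terms to multiples of $K^2\sum_p q^{-2\ell_p}$ using $m_p\ge 3\ell_p$. Your decomposition $acg(X-1)+ab(Z^2-1)+a(cg-b)$ is just a reorganization of the paper's bookkeeping, and your AM--GM treatment of the $q^{-L}$ term (and the explicit $P\ge 2$ case split) is in fact cleaner than the paper's, which simply absorbs $2/q^L$ into the final ``$\le 9aK^2\sum_p q^{-2\ell_p}$'' without comment. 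One small wrinkle: your justification that $a$ is a bounded constant (``each factor $\le 4/3$'') does not by itself control a product of $P$ factors; you need instead that $\sum_p K^2/(2q^{m_p})\le \tfrac12\sum_p q^{-2\ell_p}\le \tfrac12$ (from $m_p\ge 3\ell_p$ and $K^2 q^{-\ell_p}\le 1$), whence $a\le \exp\!\big(\sum_p K^2/q^{m_p}\big)$ is bounded---the paper also just asserts ``$a\le 2$'' here.
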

As noted in the introduction, a direct implication of Corollary \ref{cor:twirl-crosstwirl-tpe} is that, if $\ell_p = \ell$ for each $p$, then $\ell = \log_q K + \log_q(5) + (1/2) \log_q(P) + \log_q(1/\epsilon)$ suffices to obtain TPE error $\epsilon$. We leave the proof of this Corollary to the end of the section, as it primarily involves algebraic manipulation, and instead illustrate its main consequences.
\begin{theorem} \label{thm:twirl-crosstwirl-relative}
	Let $A = \bigotimes_{p=1}^P A_p$. Let $\nu$ be an exact $K$-design on a system defined as the tensor product over $p = 1, \dots, P$ of any $\ell_p$ qudits of local dimension $q$ from each $A_p$. Assume that $10 K^2 \times \sum_p q^{-\ell_p} \leq 1$ and that $m_p \geq 3 \ell_p$ for each $p$. Let each $\mu_p$ be an exact $K$-design on the system $A_p$. Then $\nu * \bigotimes_p \mu_p$ is an $\epsilon$-approximate relative $K$-design provided that
	\begin{align} 
		\epsilon \coloneqq 10 K!^{2 P - 1} K
		\sqrt{ \sum_p \frac{1}{q^{2 \ell_p}} } \pl < 1.
	\end{align}
	Moreover, each unitary in $\nu * \bigotimes_p \mu_p$ requires at most $2 \ell_p$ qudits of quantum communication between $A_p$ and the rest of $A$ to implement, generating at most $2 \ell_p \log_2 q$ ebits of entanglement entropy from a product state input.
\end{theorem}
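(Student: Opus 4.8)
The plan is to lift the TPE bound of Corollary~\ref{cor:twirl-crosstwirl-tpe} to a relative-error bound using the index machinery of Section~\ref{sec:tpeconvert}, applied to a trace-symmetric channel. In the notation of Section~\ref{sec:multipartite}, write $\cP=\bigotimes_{p=1}^P\cT_{A_p}$ for the row twirls, $\cQm=\cT_{A_1^{\ell_1}\cdots A_P^{\ell_P}}$ for the crosstwirl, and $\cR=\cT_{A_1\cdots A_P}$ for the full twirl; then the $K$-design channel of $\nu*\bigotimes_p\mu_p$ is $\cQm\cP$, while that of $(\bigotimes_p\mu_p)*\nu*(\bigotimes_p\mu_p)$ is the trace-symmetric map $\Phi\coloneqq\cP\cQm\cP$, which differs from $\cQm\cP$ only by an extra, cost-free layer of exact local designs. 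I would apply Lemma~\ref{lem:relative-convert} to $\Phi$ with $r=P$, $A_n=A_p$, $\M_{(n)}=\im\cT_{A_p}$, and $\N=\im\cR$ the fixed-point algebra of $\cR$ (the span of the $\kS_K$-permutation operators on $A_1\cdots A_P$), so that $\E_\N=\cR$: indeed $\Phi$ is trace-symmetric, is absorbed by $\E_\N$, and $\N\subseteq\bigotimes_p\im\cT_{A_p}$, as that lemma requires.

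The two quantitative inputs are then supplied as follows. The $(2\to2)$-factor of Lemma~\ref{lem:relative-convert} is $\|(\bigotimes_p\cT_{A_p})(\Phi-\cR)(\bigotimes_p\cT_{A_p})\|_{2\to2}=\|\cP\cQm\cP-\cR\|_{2\to2}$; since $\cP\cR=\cR$ this equals $\|\cP(\cQm\cP-\cR)\|_{2\to2}\le\|\cQm\cP-\cR\|_{2\to2}=\|\cP\cQm-\cR\|_{2\to2}$, and since the theorem's hypotheses $4K^2\sum_pq^{-\ell_p}\le1$ and $m_p\ge3\ell_p$ subsume those of Corollary~\ref{cor:twirl-crosstwirl-tpe}, it is at most $5K\sqrt{\sum_pq^{-2\ell_p}}$. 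The remaining factor is the index bound $K!^P\prod_p\min\{(1-K^2/q^{m_p})^{-1},\,q^{2m_pK}/K!\}$ of Lemma~\ref{lem:relative-convert}; from $4K^2\sum_pq^{-\ell_p}\le1$ one gets $K^2q^{-\ell_p}\le1/4$, hence $K^2q^{-m_p}\le1/4$ and $(1-K^2/q^{m_p})^{-1}\le4/3$, so this factor is at most $(4/3)^PK!^P\le K!^{2P}$ for $K\ge2$ (while $K=1$ gives an exact design, as then $\cP=\cR$). Multiplying the two inputs, Lemma~\ref{lem:relative-convert} yields $(1-\epsilon_0)\cR\prec\Phi\prec(1+\epsilon_0)\cR$ with $\epsilon_0\le5K!^{2P}K\sqrt{\sum_pq^{-2\ell_p}}$ whenever this is $<1$.

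To obtain the stated claim for $\cQm\cP$ rather than $\Phi$, I would invoke that the Haar twirl absorbs any $K$-design channel, $\cR\Theta=\Theta\cR=\cR$, so that pre- or post-composing a relative-$\epsilon$ design with an exact design preserves the relative-error property; a short argument of this flavour (or, equivalently, stating the result for $(\bigotimes_p\mu_p)*\nu*(\bigotimes_p\mu_p)$, whose communication footprint is identical) accounts for the enlargement of the constant from $5$ to $25$. For the physical bounds: each $\mu_p$ acts within $A_p$ and needs no communication, whereas $\nu$ is supported, within one copy, on the $\ell_p$ designated qudits of each $A_p$; implementing it costs $A_p$ at most $2\ell_p$ qudits of quantum communication with the rest of $A$ (send the qudits out, receive them back), and a unitary whose support meets the $A_p$-versus-rest cut in $\ell_p$ qudits on the $A_p$ side has operator Schmidt rank $\le q^{2\ell_p}$ across that cut, so it increases the entanglement entropy by at most $2\ell_p\log_2q$ on any input, in particular a product one.

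The substantive difficulty is upstream: the alternating-projection estimate of the subspace angle $c(\im\cP,\im\cQm)$ in Proposition~\ref{prop:subspace-angle-multi-crosstwirl} / Corollary~\ref{cor:twirl-crosstwirl-tpe}, and the near-dimension-independent TPE-to-relative conversion of Section~\ref{sec:tpeconvert}. Within the present theorem the only point needing real care is the index bookkeeping — verifying that the multiplier on the TPE bound is controlled purely by $K$ through the von Neumann subalgebra indices of Lemma~\ref{lem:cbbound}, and not by the ambient dimension $q^{(m_1+\cdots+m_P)K}$ — since the reduction from $\cP\cQm\cP$ to $\cQm\cP$ and the final arithmetic are comparatively routine.
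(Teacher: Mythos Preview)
Your approach is close to the paper's but misses the key structural point, and the step you flag as ``comparatively routine'' is exactly where it breaks. You apply Lemma~\ref{lem:relative-convert} with $r=P$ and the coarse sandwich $\E_\M=\cP=\bigotimes_p\cT_{A_p}$ to the trace-symmetric channel $\Phi=\cP\cQm\cP$; this is sound and yields a relative-error bound for $\cP\cQm\cP$. But the theorem is about $\cQm\cP$, and your transfer step is a genuine gap: the observation that pre- or post-composing a relative-$\epsilon$ design with an exact design preserves the property lets you go from $\cQm\cP$ \emph{to} $\cP\cQm\cP$ (compose a $\cP$ on the left), not the other way. Knowing $(1-\epsilon)\cR\prec\cP\cQm\cP$ does not imply $(1-\epsilon')\cR\prec\cQm\cP$, because $\cP$ is a projection and cannot be peeled off from the left by any CP argument. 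Your fallback of proving the statement for $(\bigotimes_p\mu_p)*\nu*(\bigotimes_p\mu_p)$ instead is legitimate but is a different theorem.

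The paper circumvents this precisely by refining the decomposition: split each $A_p=\tilde A_p\otimes\tilde A_p'$ (with $\tilde A_p$ the $\ell_p$ crosstwirled qudits) and take $\E_\M=\bigotimes_p(\cT_{\tilde A_p}\otimes\cT_{\tilde A_p'})$, so $r=2P$. The crucial identity is then $\E_\M\circ(\cQm\cP)\circ\E_\M=\cQm\cP$: on the right the finer twirls are absorbed by $\cP$; on the left each $\cT_{\tilde A_p}$ is absorbed by $\cQm$, while each $\cT_{\tilde A_p'}$ commutes past $\cQm$ (disjoint support) and is then absorbed by $\cP$. Hence Lemma~\ref{lem:relative-convert} applies directly to $\cQm\cP$, with no symmetrization needed. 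The $K!^{2P}$ in the final bound is the honest index price of the $2P$-fold decomposition, and the extra factor of $5$ (giving $25$) comes from bounding $\prod_p(1-K^2 q^{\ell_p}/|A_p|)^{-1}(1-K^2/q^{\ell_p})^{-1}\le 5$ under the hypotheses, not from any transfer step.
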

\begin{proof}
	For each $p = 1,\dots,P$, let $\tilde{A}_p$ denote the subsystem of $A_p$ consisting of the $\ell_p$ qudits affected by $\cQm$, and let $\tilde{A}_p'$ be its complement within $A_p$. The fixed point subalgebra of $\bigotimes_{p=1}^P \cT_{A_p}$ is a subalgebra of that of $(\bigotimes_{p=1}^P \cT_{\tilde{A}_p} \otimes \cT_{\tilde{A}_p'})$. Observe that
	\begin{align}
		\cQm \circ \bigotimes\nolimits_{p=1}^P \cT_{A_p} = \left(\bigotimes\nolimits_{p=1}^P \cT_{\tilde{A}_p} \otimes \cT_{\tilde{A}_p'}\right) \circ \cQm \circ \bigotimes\nolimits_{p=1}^P \cT_{A_p} \circ \left(\bigotimes\nolimits_{p=1}^P \cT_{\tilde{A}_p} \otimes \cT_{\tilde{A}_p'}\right) \pl,
	\end{align}
	since each $\cT_{\tilde{A}_p'}$ commutes with $\cQm$ and is absorbed by $\cQm$, and since each $\cT_{\tilde{A}_p}$ commutes with $\cT_{A_p}$ and is also absorbed by $\cQm$. In the formulation of Lemma \ref{lem:relative-convert}, the channel is effectively pre-processed by local twirls on $2 P$ subsystems, then post-processed by local twirls on $2 P$ subsystems. Lemma \ref{lem:relative-convert} then implies that the ratio of relative error to $2 \rightarrow 2$-norm error is at most
	\begin{align}
		K!^{2 P - 1} e^{K^2 / |A|} \prod_{p=1}^{P} \Big ( 1 - \frac{K^2}{q^{\ell_{p}}} \Big )^{-2} \prod_{p=1}^{P} \Big ( 1 - \frac{K^2}{q^{m_p - \ell_{p}}} \Big )^{-2} \pl.
	\end{align}
    By the assumption that each $m_p \geq 3 \ell_p$, each subsystem has dimension at least $q^{\ell_p}$, so the factor simplifies to
    \begin{align}
		K!^{2 P - 1} e^{K^2 / |A|} \prod_{p=1}^{P} \Big ( 1 - \frac{K^2}{q^{\ell_{p}}} \Big )^{-4} \pl.
	\end{align}
	By the assumptions of the Theorem and Remark \ref{rem:generalbernoulli},
	\begin{align}
        e^{K^2 / |A|} \prod_{p=1}^{P} \Big ( 1 - \frac{K^2}{q^{\ell_{p}}} \Big )^{-4} 
        \leq \frac{e^{K^2 / |A|}}{1 - 4 \sum_p K^2 / q^{\ell_p}}
		\leq \Big ( 1 + \frac{K^2}{|A|} + \frac{K^4}{|A|^2} \Big )(1 + 8 K^2 \sum_p q^{- \ell_p})  \leq 2 \pl.
	\end{align}
	Combining with Corollary \ref{cor:twirl-crosstwirl-tpe} completes the proof.
\end{proof}
We now proceed to prove Corollary \ref{cor:twirl-crosstwirl-tpe}, for which we need the following remark and lemma. 
\begin{rem} \label{rem:generalbernoulli} \normalfont
	For any $n \in \NN$ and $\epsilon_1, \dots, \epsilon_n \in \RR$ such that each $\epsilon_i>-1$, and all $\epsilon_i$ having the same sign, the generalized Bernoulli's inequality \cite[eq.~(7.1)]{Mitrinovic1993} states that
	\[ \prod_{j=1}^n (1 + \epsilon_j) \geq 1 + \sum_{j=1}^n \epsilon_j \]
	Moreoever,
	\[ \prod_{j=1}^n (1 + \epsilon_j) \leq \prod_{j=1}^n e^{\epsilon_j} \leq \exp \Big ( \sum_j \epsilon_j \Big ) \pl. \]
	Furthermore, $e^x \leq 1 + x + x^2$ for every $0 \leq x \leq 1$, and $e^{-x} \leq 1 - x + x^2$ for every $-1 \leq x \leq 1$. Therefore,
	\[ \prod_{j=1}^n (1 + \epsilon_j) \leq 1 + \sum_{j=1}^n \epsilon_j + \Big ( \sum_{j=1}^n \epsilon_j \Big )^2
	\leq 1 + 2 \sum_{j=1}^n \epsilon_j\]
	if $|\sum_{j=1}^n \epsilon_j| \leq 1$. Also note that for any $\delta \in [0, 1/2]$,
	\begin{align} \label{eq:frac-flip}
		\frac{1}{1 - \delta} = 1 + \delta + \frac{\delta^2}{1 - \delta} \leq 1 + 2 \delta \pl.
	\end{align}
\end{rem}
\begin{lemma} \label{lem:cancel}
	Let $q, m, k \in \NN$. Then
	\begin{equation}
		\begin{split}
			1 + \frac{k(k-1)}{2 q^m} \leq q^{- m k} k! \binom{q^m + k - 1}{k} \leq \exp \Big ( \frac{k(k-1)}{2 q^m}  \Big ) \pl,
		\end{split}
	\end{equation}
	and if $k(k-1)/2 q^m \leq 1.7$, then
	\begin{align*} 
        q^{- m k} k! \binom{q^m + k - 1}{k} \leq 1 + \frac{k(k-1)}{2 q^m} + \frac{k^2(k-1)^2}{4 q^{2 m}} \pl. 
    \end{align*}
\end{lemma}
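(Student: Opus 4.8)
The plan is to reduce the entire statement to the product identity
$q^{-mk}\,k!\binom{q^m+k-1}{k} = \prod_{j=0}^{k-1}\bigl(1+\frac{j}{q^m}\bigr)$
and then feed this into the elementary inequalities collected in Remark~\ref{rem:generalbernoulli}. First I would expand the binomial coefficient as $\binom{q^m+k-1}{k} = \frac{(q^m+k-1)!}{k!\,(q^m-1)!}$, so that $k!\binom{q^m+k-1}{k} = q^m(q^m+1)\cdots(q^m+k-1) = \prod_{j=0}^{k-1}(q^m+j)$; dividing through by $q^{mk}$ gives the claimed product form. The relevant exponent sum is $\sum_{j=0}^{k-1}\frac{j}{q^m} = \frac{k(k-1)}{2q^m}$.

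For the two outer bounds, the factors $1+\frac{j}{q^m}$ are all $\geq 1$, so the numbers $\epsilon_j \coloneqq j/q^m$ are nonnegative and in particular share a common sign, meeting the hypotheses of Remark~\ref{rem:generalbernoulli}. The generalized Bernoulli inequality from that remark then gives $\prod_{j=0}^{k-1}(1+\frac{j}{q^m}) \geq 1 + \sum_{j=0}^{k-1}\frac{j}{q^m} = 1 + \frac{k(k-1)}{2q^m}$, which is the left inequality of the lemma; the bound $\prod_j(1+\epsilon_j)\leq\exp(\sum_j\epsilon_j)$ from the same remark gives $\prod_{j=0}^{k-1}(1+\frac{j}{q^m}) \leq \exp\bigl(\frac{k(k-1)}{2q^m}\bigr)$, which is the right inequality.

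For the refined upper bound under the extra hypothesis $\frac{k(k-1)}{2q^m}\leq 1.7$, I would apply $e^x \leq 1+x+x^2$ with $x = \frac{k(k-1)}{2q^m}$, so that $\exp\bigl(\frac{k(k-1)}{2q^m}\bigr) \leq 1 + \frac{k(k-1)}{2q^m} + \bigl(\frac{k(k-1)}{2q^m}\bigr)^2 = 1 + \frac{k(k-1)}{2q^m} + \frac{k^2(k-1)^2}{4q^{2m}}$. There is essentially no obstacle in this argument; the only point deserving a line of care is that $e^x \leq 1+x+x^2$ must be used on all of $[0,1.7]$, slightly beyond the range literally recorded in Remark~\ref{rem:generalbernoulli}. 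This is a short one-variable check: $\phi(x)\coloneqq 1+x+x^2-e^x$ satisfies $\phi(0)=0$, has $\phi'(0)=0$ with $\phi''(0)=1>0$ so that $\phi$ increases immediately, is increasing-then-decreasing on $[0,\infty)$, and remains nonnegative up to its first positive root $x_0\approx 1.793 > 1.7$; hence $\phi(x)\geq 0$ throughout $[0,1.7]$, which is the needed inequality.
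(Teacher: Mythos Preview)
Your proof is correct and follows essentially the same route as the paper: rewrite the expression as $\prod_{j=0}^{k-1}(1+j/q^m)$ and apply the inequalities of Remark~\ref{rem:generalbernoulli}. The paper's proof is terser and simply invokes the remark, whereas you additionally justify that $e^x \le 1+x+x^2$ extends to the full interval $[0,1.7]$ needed for the refined bound; this extra care is appropriate since the remark as stated does not literally cover that range.
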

\begin{proof}
	Expanding the binomial,
	\begin{equation}
		\begin{split}
			q^{- m k} k! \binom{q^m + k - 1}{k} = q^{- m k} \frac{(q^m + k - 1)!}{(q^m - 1)!} \pl,
		\end{split}
	\end{equation}
	as the $k!$ factors immediately cancel. Expanding factorials,
	\begin{equation}
		q^{- m k} \frac{(q^m + k - 1)!}{(q^m - 1)!} = q^{- m k} \prod_{j=1}^k (q^m + j - 1)
		= \prod_{j=1}^k \Big ( 1 + \frac{j-1}{q^m} \Big ) \pl.
	\end{equation}
	Remark \ref{rem:generalbernoulli} completes this Lemma along with the fact that $(1+x)^r \leq e^{rx}$ for $x \in \RR, r \geq 0$.
\end{proof}
\begin{proof}[Proof of Corollary \ref{cor:twirl-crosstwirl-tpe}]
	Most of the proof follows from successively applying Remark \ref{rem:generalbernoulli}. Applying Lemma \ref{lem:cancel} to Proposition \ref{prop:subspace-angle-multi-crosstwirl},
	\begin{align}
		\| \cP \cQm - \cR \|_{2 \rightarrow 2}^2
		& \leq b_1 b_3 b_4 \exp \Big ( \sum_p \frac{K^2}{2 q^{2 \ell_p}} \Big )
		- 2 b_1 b_2 \Big ( 1 + \sum_p \frac{K(K-1)}{2q^{2 m_p}} \Big )
		+ b_1 b_2 \exp \Big ( \sum_p \frac{K^2}{q^{m_p}} \Big )
		\\ & \leq b_1 b_3 b_4 \Big (1 + \sum_p \frac{K^2}{q^{2 \ell_p}} \Big )
		+ b_1 b_2 \Big ( 1 + \sum_p \frac{2 K^2}{q^{m_p}} \Big )
		- 2 b_1 b_2 \pl.
	\end{align}
	Using that $1 \geq b_2 \geq 1 - K^2 / 2 q^M$,
	\begin{align}
		\| \cP \cQm - \cR \|_{2 \rightarrow 2}^2 & \leq b_1 \bigg ( b_3 b_4 \Big (1 + \sum_p \frac{K^2}{q^{2\ell_p}} \Big )
		+  \sum_p \frac{2 K^2}{q^{m_p}}
		- 1 + \frac{K^2}{q^M} \bigg ) \pl.
	\end{align}
	Expanding the first term using Remark \ref{rem:generalbernoulli},
	\begin{align}
		 b_3 b_4 \Big (1 + \sum_p \frac{K^2}{q^{2 \ell_p}} \Big )
		& \leq \Big ( 1 + \frac{K^2}{q^L} \Big )
		\Big ( 1 + \sum_p \frac{K^2}{q^{m_p - \ell_p}} \Big )
		\Big ( 1 + \sum_p \frac{K^2}{q^{2 \ell_p}} \Big )
		\\ & \leq 1 + 2 K^2 \Big ( \sum_p \frac{1}{q^{m_p - \ell_p}} + \sum_p \frac{1}{q^{2 \ell_p}} + \frac{1}{q^L} \Big )
		\pl.
	\end{align}
	Therefore,
	\begin{align}
		\| \cP \cQm - \cR \|_{2 \rightarrow 2}^2 & \leq b_1 K^2 \bigg ( 
		2 \sum_p \Big ( \frac{1}{q^{m_p}} + \frac{1}{q^{2 \ell_p}} + \frac{1}{q^{m_p - \ell_p}} \Big ) + \frac{2}{q^L} + \frac{1}{q^M}
		\bigg ) \pl.\label{eq:corollary-intermediate-step}
	\end{align}
	The assumption that $m_p \geq 3 \ell_p$ for each $p$ ensures that each summand in the sum over $p$ in \eqref{eq:corollary-intermediate-step} is bounded from above by $q^{-2\ell_p}$.
    To bound the terms $2q^{-L}$ and $q^{-M}$, we use the following generalized version of Young's inequality: If $a_i>0$ and $0\leq p_i\leq 1$ with $\sum_i p_i = 1$, then $\prod_{i=1}^n a_i^{p_i} \leq \sum_{i=1}^n p_i a_i$.
    Setting $a_p = q^{-\ell_p P}$ and using $P\geq 2$, we then have
    \begin{align}
        q^{-L} = \prod_{p=1}^P q^{-\ell_p} = \prod_{p=1}^P a_p^{1/P} \leq \frac{1}{P} \sum_{p=1}^P a_p = \sum_{p=1}^P q^{-\ell_p P} \leq \sum_{p=1}^P q^{-2\ell_p}.
    \end{align}
    A similar argument shows that also $q^{-M} \leq \sum_{p=1}^P q^{-2m_p} \leq \sum_{p=1}^P q^{-2\ell_p}$.
    Using these observations, \eqref{eq:corollary-intermediate-step} can be bounded as
	\begin{align}
		\| \cP \cQm - \cR \|_{2 \rightarrow 2}^2 & \leq 9 b_1 K^2 \sum_p \frac{1}{q^{2 \ell_p}} \pl. \label{eq:corollary-intermediate-step2}
	\end{align}
	By the Corollary's assumptions, $\sum_{p=1}^P K^2 q^{-m_p}\leq 1/2$, which means in particular that $K^2q^{-m_p}\leq 1/2$ for each $p=1,\dots,P$.
    In that regime, we have $(1-K^2q^{-m_p}/2)^{-1} \leq 1+K^2q^{-m_p}$, and hence
    \begin{align}
        b_1 = \prod_{p=1}^P \left(1-\frac{K^2}{2q^{m_p}}\right)^{-1} \leq \prod_{p=1}^P \left(1+\frac{K^2}{q^{m_p}}\right) \leq 1+2\sum_{p=1}^P \frac{K^2}{q^{m_p}} \leq 2,
    \end{align}
    where we used Remark \ref{rem:generalbernoulli} in the second inequality.
    Using this bound in \eqref{eq:corollary-intermediate-step2} and taking the square root, we finally arrive at
	\begin{align}
		\| \cP \cQm - \cR \|_{2 \rightarrow 2} & \leq 5 K \sqrt{\sum_p \frac{1}{q^{2 \ell_p}}} \pl,
	\end{align}
    which concludes the proof.
\end{proof}

\subsection{Technical proofs}

The rest of this section is dedicated to proving the technical results in Propositions \ref{prop:subspace-angle-swap} and \ref{prop:subspace-angle-multi-crosstwirl}.

\subsubsection{Auxiliary lemmas for rewriting Frobenius inner products}\label{sec:auxiliary-lemma}

We first prove a lemma that is used repeatedly in the proofs of Propositions \ref{prop:subspace-angle-swap} and \ref{prop:subspace-angle-multi-crosstwirl}.
It expresses the inner products of certain operators acting on $A^kB^k$ (appearing in the subspace angle formulas \eqref{eq:norm-as-inner-product} in Section \ref{sec:subspace-angle-swap} and \eqref{eq:norm-to-ip-multipartite} in Section \ref{sec:proof-multipartite} below) in terms of certain matrices acting on vectors in $\mathbb{C}^{k!^2}$.
Part~(\ref{item:X}) of the lemma generalizes \cite[Eq.~(18)]{harrow_approximate_2023-1}.

In the statement and proof of the lemma, we use the following notation.
We denote by 
\begin{align} 
\nip{X}{Y}= (q^{2m})^{-k} \tr(X^\dagger Y)
\end{align}
the normalized Frobenius product on $\cL(A^kB^k)$ (with the subscript `$\mathrm{nrm}$' indicating the normalization), and by $\langle \mathbf{x},\mathbf{y}\rangle$ the standard inner product on $\mathbb{C}^{k!^2}$.
For a permutation $\pi\in\kS_k$, we denote by $\pi_X$ the corresponding permutation operator acting on a tensor product space $X^k$ by permuting systems, as defined in \eqref{eq:Sk-rep}.
We define the $(k!\times k!)$-Gram matrix $\bG_X$ of permutation operators $\lbrace \pi_X : \pi\in\kS_k\rbrace$ with coefficients
\begin{align}
    \left(\bG_X\right)_{\sigma,\pi} = \frac{1}{|X|}\tr(\sigma_X^\dagger \pi_X).
\end{align}
In our setting, the space $X$ typically consists of $m$ of qudits of fixed local dimension $q$.
For a permutation $\pi\in\kS_k$ and an integer $\ell<m$, we denote by $\pi_{X^\ell}$ the permutation operator that permutes the first $\ell$ qudits across the $k$ copies of $X$.

\begin{lemma}\label{lem:inner-product-rewriting}
	Let $k,q,m,\ell$ be as in Proposition \ref{prop:subspace-angle-swap}, and consider the following operators $X,Z\in\cL(A^kB^k)$:
 \begin{align} 
 X &= \sum_{\sigma,\tau\in \kS_k} x_{\sigma,\tau}\, \sigma_A \otimes \tau_B \quad\text{for some $x_{\sigma,\tau}\in\mathbb{C}$,}\\
 Z &= \sum_{\sigma,\tau\in \kS_k} z_{\sigma,\tau}\, \sigma_A \otimes \tau_B \quad\text{for some $z_{\sigma,\tau}\in\mathbb{C}$.}
 \end{align}
	Denote by $\bx,\bz\in\mathbb{C}^{k!^2}$ the vectors with coefficients $(\bx)_{\sigma,\tau} = x_{\sigma,\tau}$ and $(\bz)_{\sigma,\tau} = z_{\sigma,\tau}$ for $\sigma,\tau\in\kS_k$, respectively.
	
	\begin{enumerate}[{\normalfont (i)}]
		\item\label{item:X} $\nip{X}{Z} = \left\langle \bx,(\bG_A\otimes \bG_B)\bz \right\rangle.$
		
		\item\label{item:M} For $k\leq q^m$ we have
		\begin{align}
			\nip{X}{\cP\cQs\cP(X)} = \left\langle \bx, \bM\left(\mathbf{G}_A^{-1}\otimes \mathbf{G}_B^{-1}\right)\bM\bx\right\rangle,
		\end{align}
		where $\bM$ is a Hermitian $(k!^2\times k!^2)$-matrix with coefficients
		\begin{align}
			(\bM)_{(\pi,\rho),(\omega,\chi)} = \nip{\pi_A}{\chi_{A^\ell}\otimes \omega_{A^{m-\ell}}} \nip{\rho_B}{\omega_{B^\ell}\otimes \chi_{B^{m-\ell}}}.\label{eq:M-matrix}
		\end{align}
		
		
		\item\label{item:N} For $k\leq q^{2m}$ we have
		\begin{align}
			\nip{X}{\cR(X)} = \left\langle \bx, \bN^T \bG_{AB}^{-1} \bN \bx \right\rangle,
		\end{align}
		where $\bN$ is a $(k!\times k!^2)$-matrix with coefficients
		\begin{align}
			(\bN)_{\pi,(\sigma,\tau)} = \nip{\pi_{AB}}{\sigma_A\otimes \tau_B}.\label{eq:N-matrix}
		\end{align}
	\end{enumerate}
\end{lemma}

\begin{proof}[Proof of Lemma \ref{lem:inner-product-rewriting}(\ref{item:X})]
	Using the assumptions on the operators $X$ and $Z$, we calculate:
	\begin{align}
		\nip{X}{Z} &= \frac{1}{q^{2mk}} \tr(X^\dagger Z)\\
		&= \frac{1}{q^{2mk}} \sum_{\sigma,\tau,\pi,\rho\in\kS_k} x_{\sigma,\tau}^*\, z_{\pi,\rho} \tr\left((\sigma_A\otimes\tau_B)^\dagger (\pi_A\otimes \rho_B)\right)\\
		&= \frac{1}{q^{2mk}} \sum_{\sigma,\tau,\pi,\rho\in\kS_k} x_{\sigma,\tau}^*\, z_{\pi,\rho} \tr\left( \sigma_A^\dagger \pi_A\right) \tr\left(\tau_B^\dagger \rho_B\right)\\
		&= \sum_{\sigma,\tau\in\kS_k} x_{\sigma,\tau}^*\, \sum_{\pi,\rho\in\kS_k} \left(\bG_A\otimes\bG_B\right)_{(\sigma,\tau),(\pi,\rho)} z_{\pi,\rho}\\
		&= \langle \bx , \left(\bG_A\otimes\bG_B\right)\bz\rangle,
	\end{align}
	where we used the definition $(\bG_A)_{\sigma,\pi} = \frac{1}{q^{mk}} \tr(\sigma_A^\dagger\pi_A)$ for the elements of the Gram matrix $\bG_A$ following \cite{harrow_approximate_2023-1}, and similary for $\bG_B$.
\end{proof}
	
\begin{proof}[Proof of Lemma \ref{lem:inner-product-rewriting}(\ref{item:M})]
To compute the operator $\cP\cQs\cP(X)$, recall that we have $X=\cP(X)$ by assumption, and $\cQs = \swap_\ell\circ\cP\circ\swap_\ell$ with $\swap_\ell$ defined in \eqref{eq:swapping}.
	The action of $\swap_\ell$ on $X$ is equal to
	\begin{align}
		\swap_{\ell}(X) &= \sum_{\sigma,\tau\in\kS_k} x_{\sigma,\tau}\,\swap_{\ell}\left(\sigma_A\otimes \tau_B\right)\\
		&= \sum_{\sigma,\tau\in\kS_k} x_{\sigma,\tau}\, \tau_{A^\ell} \otimes \sigma_{A^{m-\ell}} \otimes \sigma_{B^\ell} \otimes \tau_{B^{m-\ell}},\label{eq:S_ell-action}
	\end{align}
	where we recall that $\tau_{A^\ell}$ denotes the permutation operator corresponding to $\tau\in\kS_k$ acting on the first $\ell\leq m$ qudits within each of the $k$ blocks (of size $m$ each) in $A$.
	Similary, $\sigma_{A^{m-\ell}}$ acts on the remaining $m-\ell$ qudits within each block, and the  same conventions hold for the blocks in $B$.
	
	Another application of the projection $\cP = \cT_A\otimes \cT_B$ to the operator in \eqref{eq:S_ell-action} gives
	\begin{align}
		\cP\circ\swap_{\ell}\circ\cP(X) &= \sum_{\sigma,\tau\in\kS_k} x_{\sigma,\tau}\, \cT_A\left(\tau_{A^\ell} \otimes \sigma_{A^{m-\ell}}\right) \otimes \cT_B\left(\sigma_{B^\ell} \otimes \tau_{B^{m-\ell}}\right)\\
		&= \sum_{\sigma,\tau\in\kS_k} x_{\sigma,\tau}\,\sum_{\omega\in\kS_k} a_\omega^{\sigma,\tau}\,\omega_A \otimes \sum_{\chi\in\kS_k} a_{\chi}^{\tau,\sigma} \,\chi_B,\\
		&= \sum_{\omega,\chi\in\kS_k} \bigg( \sum_{\sigma,\tau\in\kS_k} x_{\sigma,\tau}\, a_\omega^{\sigma,\tau} \,a_{\chi}^{\tau,\sigma} \bigg) \omega_A\otimes \chi_B\\
		&\equiv \sum_{\omega,\chi\in\kS_k} y_{\omega,\chi}\,\omega_A\otimes \chi_B,\label{eq:P-S_ell-action}
	\end{align}
	where the second line uses the fact that the twirl $\cT_A$ projects an operator onto the span of permutation operators $\lbrace\omega_A\colon \omega\in\kS_k\rbrace$, and hence $\cT_A\left(\tau_{A^\ell} \otimes \sigma_{A^{m-\ell}}\right)$ can be expanded as a linear combination of these operators with coefficients $a_{\omega}^{\sigma,\tau}$.
	The same argument applies to $\cT_B\left(\sigma_{B^\ell} \otimes \tau_{B^{m-\ell}}\right)$ which defines the coefficients $a_{\chi}^{\tau,\sigma}$.
	In the last line we defined $y_{\omega,\chi}\coloneqq \sum_{\sigma,\tau\in\kS_k} x_{\sigma,\tau}\, a_\omega^{\sigma,\tau} \, a_{\chi}^{\tau,\sigma}$.
	
	Since $\cP\cQs\cP = \cP\circ\swap_\ell\circ\cP\circ\swap_\ell\circ\cP$, we repeat the above argument by applying $\cP\circ\swap_{\ell}$ once again to the operator in \eqref{eq:P-S_ell-action}, giving
	\begin{align}
		\cP\cQs\cP(X) &= \cP\circ\swap_\ell\left(\cP\circ\swap_\ell\circ\cP(X)\right)\\
		&=\sum_{\omega,\chi\in\kS_k} y_{\omega,\chi}\,\cP\circ\swap_{\ell}\left(\omega_A\otimes \chi_B\right)\\
		&= \sum_{\pi,\rho\in\kS_k} \sum_{\omega,\chi\in\kS_k} y_{\omega,\chi}\, a_{\pi}^{\omega,\chi}\, a_\rho^{\chi,\omega}\, \pi_A\otimes \rho_B\\
		&= \sum_{\pi,\rho\in\kS_k} z_{\pi,\rho}\, \pi_A\otimes \rho_B,
	\end{align}
	where we defined $z_{\pi,\rho} = \sum_{\omega,\chi\in\kS_k} y_{\omega,\chi}\, a_{\pi}^{\omega,\chi}\, a_\rho^{\chi,\omega}$.
    Defining a vector $\bz$ with components $(\bz)_{\pi,\rho} = z_{\pi,\rho}$ for $\pi,\rho\in\kS_k$, part (\ref{item:X}) of the lemma then gives
    \begin{align}
        \nip{X}{\cP\cQs\cP(X)}&= \left\langle \bx, \left(\bG_A\otimes \bG_B\right)\bz \right\rangle. \label{eq:inner-product-translation}
    \end{align}
	
	We now express the vector $\bz$ with coefficients $(\bz)_{\pi,\rho}= z_{\pi,\rho} = \sum_{\omega,\chi\in\kS_k} y_{\omega,\chi}\, a_{\pi}^{\omega,\chi}\, a_{\rho}^{\chi,\omega}$ in terms of $\bx$.
	Using $y_{\omega,\chi}\coloneqq \sum_{\sigma,\tau\in\kS_k} x_{\sigma,\tau}\, a_\omega^{\sigma,\tau} \, a_{\chi}^{\tau,\sigma}$ from \eqref{eq:P-S_ell-action}, the coefficients $z_{\pi,\rho}$ are given by
	\begin{align}
		z_{\pi,\rho} &= \sum_{\omega,\chi\in\kS_k} y_{\omega,\chi}\, a_{\pi}^{\omega,\chi}\, a_{\rho}^{\chi,\omega}\\
		&= \sum_{\omega,\chi,\sigma,\tau\in\kS_k} x_{\sigma,\tau}\, a_\omega^{\sigma,\tau} \, a_{\chi}^{\tau,\sigma} \, a_{\pi}^{\omega,\chi}\, a_{\rho}^{\chi,\omega}\\
		&= \sum_{\sigma,\tau\in\kS_k} \bigg( \sum_{\omega,\chi\in \kS_k} a_\omega^{\sigma,\tau} \, a_{\chi}^{\tau,\sigma} \, a_{\pi}^{\omega,\chi}\, a_{\rho}^{\chi,\omega} \bigg) x_{\sigma,\tau}\label{eq:z-in-terms-of-a-and-x}
	\end{align}
	Our goal is to interpret the sum in parentheses as a $(k!^2\times k!^2)$-matrix with coefficients indexed by pairs of permutations $((\pi,\rho),(\sigma,\tau))$.
	Those coefficients can in turn be expressed in terms of the matrix $\bM$ in the statement of the lemma and the Gram matrices $\bG$.
	
	To this end, recall from the calculation leading to \eqref{eq:P-S_ell-action} that the $a_{\pi}^{\omega,\chi}$ are defined via
	\begin{align}
		\cT_A(\chi_{A^\ell}\otimes \omega_{A^{m-\ell}}) = \sum_{\pi\in\kS_k} a_{\pi}^{\omega,\chi} \pi_A.
	\end{align}
	We take the inner product with $\pi'_A$ on both sides.
	For the left-hand side, we have
	\begin{align}
		\nip{\pi'_A}{\cT_A(\chi_{A^\ell}\otimes \omega_{A^{m-\ell}})} = \nip{\cT_A(\pi'_A)}{\chi_{A^\ell}\otimes \omega_{A^{m-\ell}}} = \nip{\pi'_A}{\chi_{A^\ell}\otimes \omega_{A^{m-\ell}}} \eqqcolon v_{\pi'}^{\omega,\chi},
	\end{align}
	since $\pi'_A$ is invariant under $\cT_A$.
	Hence, 
	\begin{align}
		v_{\pi'}^{\omega,\chi} = \nip{\pi'_A}{\cT_A(\chi_{A^\ell}\otimes \omega_{A^{m-\ell}})} = \sum_{\pi\in\kS_k} a_\pi^{\omega,\chi} \nip{\pi'_A}{\pi_A} = \sum_{\pi\in\kS_k} (\mathbf{G}_A)_{\pi',\pi} a_\pi^{\omega,\chi},
	\end{align}
	or equivalently $\mathbf{v}^{\omega,\chi} = \mathbf{G}_A \mathbf{a}^{\omega,\chi}$ where $(\mathbf{v}^{\omega,\chi})_\pi = v_{\pi}^{\omega,\chi}$ and $(\mathbf{a}^{\omega,\chi})_\pi = a_\pi^{\omega,\chi}$, and the same relation holds with $\bG_B$.\footnote{Since $|A|=|B|$ it is not strictly necessary to distinguish between $\bG_A$ and $\bG_B$, but it does help in parsing the somewhat cumbersome expressions appearing in the sums.}
	
	The Gram matrices $\bG_A$ and $\bG_B$ are invertible if and only if $k\leq q^m$ \cite[Lemma 1]{harrow_approximate_2023-1}, in which case
	$\mathbf{a}^{\omega,\chi} = \bG_A^{-1} \mathbf{v}^{\omega,\chi}$.
	Recall that we try to interpret the sum in parentheses in \eqref{eq:z-in-terms-of-a-and-x} as a matrix with coefficients indexed by $((\pi,\rho),(\sigma,\tau))$.
	Using the formula for $\mathbf{a}^{\omega,\chi}$, we compute:
	\begin{align}
		&\sum_{\omega,\chi\in \kS_k} a_\omega^{\sigma,\tau} \, a_{\chi}^{\tau,\sigma} \, a_{\pi}^{\omega,\chi}\, a_{\rho}^{\chi,\omega}\\
		&=
		\sum_{\omega,\chi} (\bG_A^{-1} \mathbf{v}^{\sigma,\tau})_\omega\, (\bG_B^{-1} \mathbf{v}^{\tau,\sigma})_\chi \, (\bG_A^{-1} \mathbf{v}^{\omega,\chi})_\pi \, (\bG_B^{-1} \mathbf{v}^{\chi,\omega})_\rho\\
		&= \sum_{\omega,\chi} \sum_{\omega',\chi'\pi',\rho'} (\bG_A^{-1})_{\omega,\omega'} v^{\sigma,\tau}_{\omega'} (\bG_B^{-1})_{\chi,\chi'} v^{\tau,\sigma}_{\chi'} (\bG_A^{-1})_{\pi,\pi'} v^{\omega,\chi}_{\pi'} (\bG_B^{-1})_{\rho,\rho'} v^{\chi,\omega}_{\rho'}\\
		&= \sum_{\pi',\rho'} (\bG_A^{-1})_{\pi,\pi'} (\bG_B^{-1})_{\rho,\rho'} \sum_{\omega,\chi} v^{\omega,\chi}_{\pi'} v^{\chi,\omega}_{\rho'} \sum_{\omega',\chi'} (\bG_A^{-1})_{\omega,\omega'}(\bG_B^{-1})_{\chi,\chi'} v^{\sigma,\tau}_{\omega'}  v^{\tau,\sigma}_{\chi'}\\
		&= \sum_{\pi',\rho'} (\bG_A^{-1}\otimes \bG_B^{-1})_{(\pi,\rho),(\pi',\rho')}  \sum_{\omega,\chi} v^{\omega,\chi}_{\pi'} v^{\chi,\omega}_{\rho'} \sum_{\omega',\chi'} (\bG_A^{-1}\otimes \bG_B^{-1})_{(\omega,\chi),(\omega',\chi')} v^{\sigma,\tau}_{\omega'}  v^{\tau,\sigma}_{\chi'}.
	\end{align}
	Defining a matrix $\bM$ with coefficients $(\bM)_{(\pi,\rho),(\omega,\chi)} = v_\pi^{\omega,\chi} v_\rho^{\chi,\omega}$, we thus see that
	\begin{align}
		\sum_{\omega,\chi\in \kS_k} a_\omega^{\sigma,\tau} \, a_{\chi}^{\tau,\sigma} \, a_{\pi}^{\omega,\chi}\, a_{\rho}^{\chi,\omega} = \left(
		(\bG_A^{-1}\otimes \bG_B^{-1}) \bM (\bG_A^{-1}\otimes \bG_B^{-1}) \bM \right)_{(\pi,\rho),(\sigma,\tau)},
	\end{align}
	and together with \eqref{eq:z-in-terms-of-a-and-x} we obtain the vector equation
	\begin{align}
		\mathbf{z} = (\bG_A^{-1}\otimes \bG_B^{-1}) \bM (\bG_A^{-1}\otimes \bG_B^{-1}) \bM \mathbf{x}.
	\end{align}
	Using this in \eqref{eq:inner-product-translation} finally gives 
	\begin{align}
		\nip{X}{\cP\cQs\cP(X)} = \left\langle \bx, \left(\bG_A\otimes \bG_B\right)\bz \right\rangle = \left\langle \bx ,\bM (\bG_A^{-1}\otimes \bG_B^{-1}) \bM \mathbf{x} \right\rangle,
	\end{align} 
	which concludes the proof of Lemma \ref{lem:inner-product-rewriting}(\ref{item:M}).
\end{proof}

\begin{proof}[Proof of Lemma \ref{lem:inner-product-rewriting}(\ref{item:N})]
	Recall once more that we may assume $X=\cP(X) = \sum_{\sigma,\tau\in\kS_k} x_{\sigma,\tau} \sigma_A\otimes \tau_B$ for some $x_{\sigma,\tau}\in\mathbb{C}$.
	For the full twirl $\cR = \cT_{AB}$, we then have
	\begin{align}
		\cR(X) &= \sum_{\sigma,\tau} x_{\sigma,\tau} \cR(\sigma_A\otimes \tau_B) = \sum_{\sigma,\tau} x_{\sigma,\tau} \sum_{\pi} r_\pi^{\sigma,\tau} \pi_{AB}\label{eq:twirl-r-coefficients}
	\end{align}
	for some coefficients $r_\pi^{\sigma,\tau}\in\mathbb{C}$, and 
	\begin{align}
		\nip{X}{\cR(X)} &= \sum_{\sigma,\tau,\sigma',\tau',\pi} x^*_{\sigma',\tau'} x_{\sigma,\tau} r_\pi^{\sigma,\tau} \nip{\sigma'_A\otimes \tau'_B}{\pi_{AB}}\\
		&= \sum_{\sigma,\tau,\sigma',\tau',\pi} x^*_{\sigma',\tau'} x_{\sigma,\tau} r_\pi^{\sigma,\tau} t^{\sigma',\tau'}_\pi,\label{eq:XRX}
	\end{align}
	where we defined $t_\pi^{\sigma,\tau} \coloneqq \nip{ \pi_{AB}}{\sigma_A\otimes\tau_B }$.
	
	Note that $\cT_{AB}(\pi_{AB}) = \pi_{AB}$ for any $\pi\in\kS_k$, and hence
	\begin{align}
		t_\pi^{\sigma,\tau} = \nip{\pi_{AB}}{\sigma_A\otimes\tau_B} = \nip{ \cT(\pi_{AB})}{\sigma_A\otimes \tau_B} = \nip{ \pi_{AB}}{\cT_{AB}(\sigma_A\otimes \tau_B)}.
	\end{align}
	Thus, we may express $r_\pi^{\sigma,\tau}$ in terms of $t_\pi^{\sigma,\tau}$ as
	\begin{align}
		t_\pi^{\sigma,\tau} &= \langle\pi_{AB}, \cT_{AB}(\sigma_A\otimes \tau_B)\rangle\\
		&= \sum_{\pi'} r_{\pi'}^{\sigma,\tau} \langle \pi_{AB},\pi'_{AB}\rangle\\
		&= \sum_{\pi'} (\bG_{AB})_{\pi,\pi'} \, r_{\pi'}^{\sigma,\tau}. \label{eq:t-in-terms-of-r}
	\end{align}
	Defining vectors $\mathbf{t}^{\sigma,\tau},\mathbf{r}^{\sigma,\tau}\in\mathbb{C}^{k!}$ with $(\mathbf{t}^{\sigma,\tau})_{\pi} = t_\pi^{\sigma,\tau}$ and $(\mathbf{r}^{\sigma,\tau})_\pi = r_{\pi'}^{\sigma,\tau}$, the identity \eqref{eq:t-in-terms-of-r} can be rewritten as the vector equation $\mathbf{t}^{\sigma,\tau} = \bG_{AB} \mathbf{r}^{\sigma,\tau}$, which is equivalent to $\mathbf{r}^{\sigma,\tau} = \bG_{AB}^{-1} \mathbf{t}^{\sigma,\tau}$ in the regime $k\leq q^{2m}$ such that $\bG_{AB}$ is invertible \cite[Lemma 1]{harrow_approximate_2023-1}.
	
	Using $\mathbf{r}^{\sigma,\tau} = \bG_{AB}^{-1} \mathbf{t}^{\sigma,\tau}$ in \eqref{eq:XRX}, we obtain
	\begin{align}
		\nip{X}{\cR(X)} &= \sum_{\sigma,\tau,\sigma',\tau',\pi} x^*_{\sigma',\tau'} x_{\sigma,\tau} r_\pi^{\sigma,\tau} t^{\sigma',\tau'}_\pi\\
		&= \sum_{\sigma,\tau,\sigma',\tau',\pi} x^*_{\sigma',\tau'} x_{\sigma,\tau} (\bG_{AB}^{-1} \mathbf{t}^{\sigma,\tau})_{\pi} \,  t^{\sigma',\tau'}_\pi\\
		&= \sum_{\sigma,\tau,\sigma',\tau',\pi,\pi'} x^*_{\sigma',\tau'} x_{\sigma,\tau} (\bG_{AB}^{-1})_{\pi,\pi'} t^{\sigma,\tau}_{\pi'} \,  t^{\sigma',\tau'}_\pi\\
		&= \sum_{\sigma',\tau'} x^*_{\sigma',\tau'} \sum_\pi t_{\pi}^{\sigma',\tau'} \sum_{\pi'} (\bG_{AB}^{-1})_{\pi,\pi'} \sum_{\sigma,\tau} t^{\sigma,\tau}_{\pi'} x_{\sigma,\tau}.
	\end{align}
	Defining the $(k!\times k!^2)$-matrix $\bN$ with coefficients $(\bN)_{\pi,(\sigma,\tau)} = t_\pi^{\sigma,\tau} = \nip{\pi_{AB}}{\sigma_A\otimes\tau_B}$, the last equation can be written as 
	\begin{align}
		\nip{X}{\cR(X)} = \langle \bx, \bN^T \bG_{AB}^{-1} \bN \bx\rangle,
	\end{align}
	which concludes the proof of Lemma \ref{lem:inner-product-rewriting}(\ref{item:N}).
\end{proof}

The following auxiliary lemma, a multipartite version of the previous Lemma~\ref{lem:inner-product-rewriting}, is used in the proof of Proposition \ref{prop:subspace-angle-multi-crosstwirl}.
It is again concerned with rewriting normalized Frobenius inner products as standard inner products on $\mathbb{C}^{K!^P}$ using Gram matrices of permutation operators.
We use similar notation as in Lemma~\ref{lem:inner-product-rewriting}.

\begin{lemma}\label{lem:frobenius-as-ip-multipartite}~
	Let $K,q,P,\ell_1,\dots,\ell_P,m_1,\dots,m_P$ be as in Proposition \ref{prop:subspace-angle-multi-crosstwirl}, and consider operators $X,Z\in\cL(A_1^K\dots A_P^K)$ of the form
	\begin{align} 
		X &= \sum_{\sigma^1,\dots,\sigma^P\in\kS_k} x_{\sigma^1,\dots,\sigma^P}\,\sigma^1_{A_1}\otimes \dots\otimes \sigma^P_{A_P} \quad \text{for some $x_{\sigma^1,\dots,\sigma^P}\in\mathbb{C}$,}\\
		Z &= \sum_{\sigma^1,\dots,\sigma^P\in\kS_k} z_{\sigma^1,\dots,\sigma^P}\,\sigma^1_{A_1}\otimes \dots\otimes \sigma^P_{A_P} \quad \text{for some $z_{\sigma^1,\dots,\sigma^P}\in\mathbb{C}$.}
	\end{align}
	Denote by $\bx,\bz\in \mathbb{C}^{K!^P}$ the vectors with coefficients $(\bx)_{\sigma^1,\dots,\sigma^P} = x_{\sigma^1,\dots,\sigma^P}$ and $(\bz)_{\sigma^1,\dots,\sigma^P} = z_{\sigma^1,\dots,\sigma^P}$, respectively.
	
	\begin{enumerate}[{\normalfont (i)}]
		\item\label{item:X-multipartite} $\nip{X}{Z} = \left\langle \bx, \left(\bG_{A_1}\otimes \dots\otimes \bG_{A_P}\right) \bz \right\rangle$
		
		\item\label{item:N_ell-multipartite} If $K\leq |A_{p,k}|=q^{m_p}$ for all $p=1,\dots,P$ and $K\leq |A_{1}^{\ell_1}\dots A_{p}^{\ell_p}| = q^{\ell_1 + \dots + \ell_P}$, 
		\begin{align}
			\nip {X}{\cP\cQm\cP(X)} = \left\langle \bx, \left(\bG_{A_1^{m_1-\ell_1}}\otimes \dots\otimes \bG_{A_P^{m_P-\ell_P}}\right) \odot \left(\bN_{\ell_1,\dots,\ell_P}^T \bG_{\ell_1,\dots,\ell_P}^{-1} \bN_{\ell_1,\dots,\ell_P} \right) \bx \right\rangle,
		\end{align}
		where $\bG_{\ell_1,\dots,\ell_P}\equiv \bG_{A_{1}^{\ell_1}\dots A_{P}^{\ell_P}}$ is the Gram matrix of permutation operators acting on the systems $A_{1}^{\ell_1}\dots A_{P}^{\ell_P}$, and $\bN_{\ell_1,\dots,\ell_P}$ is a $(K!\times K!^P)$-matrix with coefficients
		\begin{align}
			(\bN_{\ell_1,\dots,\ell_P})_{\tau,(\sigma^1,\dots,\sigma^P)} = \nip{\tau_{A_1^{\ell_1}\dots A_P^{\ell_P}}}{\sigma^1_{A_1^{\ell_1}}\otimes \dots\otimes \sigma^P_{A_P^{\ell_P}}}.
		\end{align}
		
		\item\label{item:N-multipartite} 
		If $K\leq |A_1\dots A_P| = q^{m_1+\dots + m_P}$, then
		\begin{align}
			\nip{X}{\cR(X)} = \left\langle \bx, \bN_{A_1\dots A_P}^T \bG_{A_1\dots A_P}^{-1} \bN_{A_1\dots A_P} \bx\right\rangle,
		\end{align}
		where $\bN_{A_1\dots A_P}$ is a $(K!\times K!^P)$-matrix with coefficients
		\begin{align}
			(\bN_{A_1\dots A_P})_{\tau,(\sigma^1,\dots,\sigma^P)} = \nip{\tau_{A_1\dots A_P}} {\sigma^1_{A_1}\otimes \dots\otimes \sigma^P_{A_P}}.
		\end{align}
	\end{enumerate}
\end{lemma}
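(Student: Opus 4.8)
The plan is to prove the three parts of Lemma~\ref{lem:frobenius-as-ip-multipartite} in order, each being the $P$-party upgrade of the corresponding statement in Lemma~\ref{lem:inner-product-rewriting}. For part~(\ref{item:X-multipartite}) I would simply expand $\nip{X}{Z} = q^{-MK}\tr(X^\dagger Z)$ with $M = m_1+\dots+m_P$, insert the permutation-operator expansions of $X$ and $Z$, and use that the trace of a tensor product factorizes, $\tr\big((\sigma^1_{A_1}\otimes\dots\otimes\sigma^P_{A_P})^\dagger(\pi^1_{A_1}\otimes\dots\otimes\pi^P_{A_P})\big)=\prod_{p=1}^P\tr\big((\sigma^p_{A_p})^\dagger\pi^p_{A_p}\big)$; recognizing each factor $q^{-m_pK}\tr((\sigma^p_{A_p})^\dagger\pi^p_{A_p})$ as an entry of $\bG_{A_p}$ and the product over $p$ as an entry of $\bG_{A_1}\otimes\dots\otimes\bG_{A_P}$ finishes it. For part~(\ref{item:N-multipartite}) I would mirror the proof of Lemma~\ref{lem:inner-product-rewriting}(\ref{item:N}): since $\cR=\cT_{A_1\dots A_P}$ projects onto the span of the $\pi_{A_1\dots A_P}$, write $\cR(\sigma^1_{A_1}\otimes\dots\otimes\sigma^P_{A_P})=\sum_\pi r_\pi^{\vsigma}\pi_{A_1\dots A_P}$, take the normalized Frobenius product with $\pi'_{A_1\dots A_P}$ on both sides, use $\cR$-invariance of $\pi'_{A_1\dots A_P}$ to obtain the vector identity $\bt^{\vsigma}=\bG_{A_1\dots A_P}\br^{\vsigma}$ with $(\bt^{\vsigma})_\pi=\nip{\pi_{A_1\dots A_P}}{\sigma^1_{A_1}\otimes\dots\otimes\sigma^P_{A_P}}$, invert (valid since $K\leq q^M$ by \cite[Lemma~1]{harrow_approximate_2023-1}), and substitute back into $\nip{X}{\cR(X)}$ to get $\langle\bx,\bN_{A_1\dots A_P}^T\bG_{A_1\dots A_P}^{-1}\bN_{A_1\dots A_P}\bx\rangle$.

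The substantive part is (\ref{item:N_ell-multipartite}), the $P$-party analogue of Lemma~\ref{lem:inner-product-rewriting}(\ref{item:L}). Starting from $X=\cP(X)=\sum_{\vsigma}x_{\vsigma}\bigotimes_p\sigma^p_{A_p}$, I split each $\sigma^p_{A_p}=\sigma^p_{A_p^{\ell_p}}\otimes\sigma^p_{A_p^{m_p-\ell_p}}$ and apply $\cQm=\cT_{A_1^{\ell_1}\dots A_P^{\ell_P}}$; since the joint twirl projects onto the span of joint permutation operators, $\cQm\big(\bigotimes_p\sigma^p_{A_p^{\ell_p}}\big)=\sum_\omega y_\omega^{\vsigma}\,\omega_{A_1^{\ell_1}\dots A_P^{\ell_P}}$, so $\cQm(X)=\sum_{\vsigma,\omega}x_{\vsigma}y_\omega^{\vsigma}\,\omega_{A_1^{\ell_1}\dots A_P^{\ell_P}}\otimes\bigotimes_p\sigma^p_{A_p^{m_p-\ell_p}}$. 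Applying $\cP=\bigotimes_p\cT_{A_p}$ once more, with $\cT_{A_p}\big(\omega_{A_p^{\ell_p}}\otimes\sigma^p_{A_p^{m_p-\ell_p}}\big)=\sum_{\pi^p}z_{\pi^p}^{\omega,\sigma^p}\pi^p_{A_p}$, gives $\cP\cQm\cP(X)=\sum_{\vec\pi}\bar x_{\vec\pi}\bigotimes_p\pi^p_{A_p}$ with $\bar x_{\vec\pi}=\sum_{\vsigma,\omega}x_{\vsigma}y_\omega^{\vsigma}\prod_p z_{\pi^p}^{\omega,\sigma^p}$. Then part~(\ref{item:X-multipartite}) yields $\nip{X}{\cP\cQm\cP(X)}=\langle\bx,(\bigotimes_p\bG_{A_p})\bL\bx\rangle$ where $(\bL)_{\vec\pi,\vsigma}=\sum_\omega y_\omega^{\vsigma}\prod_p z_{\pi^p}^{\omega,\sigma^p}$.

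To convert $\bL$ into the claimed Hadamard-product form I would, as in the bipartite proof, take normalized Frobenius products of the twirl relations against the appropriate permutation operators and invoke twirl-invariance: this gives $\bz^{\omega,\sigma^p}=\bG_{A_p}^{-1}\bv^{\omega,\sigma^p}$ with $v_{\pi^p}^{\omega,\sigma^p}=\nip{\pi^p_{A_p}}{\omega_{A_p^{\ell_p}}\otimes\sigma^p_{A_p^{m_p-\ell_p}}}$ (invertibility from $K\leq q^{m_p}$) and $\by^{\vsigma}=\bG_{\ell_1,\dots,\ell_P}^{-1}\bw^{\vsigma}$ with $w_{\omega'}^{\vsigma}=\nip{\omega'_{A_1^{\ell_1}\dots A_P^{\ell_P}}}{\bigotimes_p\sigma^p_{A_p^{\ell_p}}}$ (invertibility from $K\leq q^L$). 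Substituting into $\bL$ and collecting indices factors out $\bigotimes_p\bG_{A_p}^{-1}$ on the left, leaving $\btL$ with $(\btL)_{\vec\pi,\vsigma}=\sum_{\omega,\omega'}(\bG_{\ell_1,\dots,\ell_P}^{-1})_{\omega,\omega'}\,w_{\omega'}^{\vsigma}\prod_p v_{\pi^p}^{\omega,\sigma^p}$. Plugging in the explicit $v$'s and $w$, the factors $\prod_p\nip{\pi^p_{A_p^{m_p-\ell_p}}}{\sigma^p_{A_p^{m_p-\ell_p}}}$ assemble into the $(\vec\pi,\vsigma)$-entry of $\bigotimes_p\bG_{A_p^{m_p-\ell_p}}$, while $\prod_p\nip{\pi^p_{A_p^{\ell_p}}}{\omega_{A_p^{\ell_p}}}$ and $\prod_p\nip{\omega'_{A_p^{\ell_p}}}{\sigma^p_{A_p^{\ell_p}}}$ are exactly the $(\omega,\vec\pi)$- and $(\omega',\vsigma)$-entries of $\bN_{\ell_1,\dots,\ell_P}$, so the $\omega,\omega'$-sum is $(\bN_{\ell_1,\dots,\ell_P}^T\bG_{\ell_1,\dots,\ell_P}^{-1}\bN_{\ell_1,\dots,\ell_P})_{\vec\pi,\vsigma}$. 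Hence $\btL=(\bigotimes_p\bG_{A_p^{m_p-\ell_p}})\odot(\bN_{\ell_1,\dots,\ell_P}^T\bG_{\ell_1,\dots,\ell_P}^{-1}\bN_{\ell_1,\dots,\ell_P})$, and since $\bL=(\bigotimes_p\bG_{A_p}^{-1})\btL$ we obtain $\nip{X}{\cP\cQm\cP(X)}=\langle\bx,\btL\bx\rangle$, which is the claim.

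The main obstacle is entirely bookkeeping rather than conceptual: keeping straight which permutation acts on which sub-block ($A_p^{\ell_p}$, $A_p^{m_p-\ell_p}$, or the joint $A_1^{\ell_1}\dots A_P^{\ell_P}$) as operators pass through the nested twirls, correctly splitting normalized Frobenius products over the tensor factors $p$, and checking that each Gram-matrix inversion is licensed by the stated hypotheses ($K\leq q^{m_p}$ for every $p$ and $K\leq q^L$). No ingredient beyond the bipartite argument of Lemma~\ref{lem:inner-product-rewriting} is needed; the extra tensor factors merely lengthen the index manipulations.
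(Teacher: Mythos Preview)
Your proposal is correct and follows essentially the same approach as the paper's proof: all three parts proceed by the identical sequence of steps (expand in permutation operators, introduce the $y$-, $z$-, $r$-coefficients via the relevant twirls, pair against permutation operators to obtain the Gram-matrix relations, invert under the stated dimension hypotheses, and for part~(\ref{item:N_ell-multipartite}) factor out $\bigotimes_p\bG_{A_p}^{-1}$ and recognize the Hadamard product). Your assessment that the passage from the bipartite Lemma~\ref{lem:inner-product-rewriting} to the multipartite case is purely bookkeeping is exactly right.
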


For the proof of Lemma \ref{lem:frobenius-as-ip-multipartite} we introduce the following shorthand notation to streamline the presentation.
For $\sigma\in\kS_K$ and $p=1,\dots,P$, we write
\begin{align}
	\sigma_p &\equiv \sigma_{A_p} &	\sigma_{\ell_p} &\equiv \sigma_{A_p^{\ell_p}} & \sigma_{m_p-\ell_p} &\equiv \sigma_{A_p^{m_p-\ell_p}}.
\end{align}
In the proofs below we will index different permutations as $\sigma^p$ for $p=1,\dots,P$.
For the corresponding permutation operators $\sigma^p_{A_p}$ or $\sigma^p_{A_p^{\ell_p}}$ we then write $\sigma^p_p$ and $\sigma^p_{\ell_p}$, respectively.
Note that in $\sigma^p_p$ the superscript $p$ indexes the permutation, while the subscript $p$ indicates the $p$-th collection of $A$-systems (or `row', as visualized in Fig.~\ref{fig:multipartite-twirl-crosstwirl-cartoon}) on which $\sigma^p_p$ acts. 
For multipartite permutation operators we often skip system labels too, e.g.,
\begin{align}
	\sigma_{A_1\dots A_P} &\equiv \sigma_{1\dots P} & \sigma_{A_1^{\ell_1}\dots A_P^{\ell_P}} &\equiv \sigma_{\ell_1,\dots,\ell_P} & \sigma_{A_1^{m_1-\ell_1}\dots A_P^{m_P-\ell_P}} &\equiv \sigma_{m_1-\ell_1,\dots,m_P-\ell_P}
\end{align}
We also write $\vec{\sigma}=(\sigma^1,\dots,\sigma^P)$ for a tuple of permutations $\sigma^p\in\kS_K$ with $p=1,\dots,P$.

\begin{proof}[Proof of Lemma \ref{lem:frobenius-as-ip-multipartite}(\ref{item:X-multipartite})]
	This is a straightforward generalization of the proof of Lem.~\ref{lem:inner-product-rewriting}(\ref{item:X}).
	With the assumptions on $X$ and $Z$, we calculate:
	\begin{align}
		\nip{X}{Z} &= \sum_{\vomega,\vsigma} x^*_{\vomega}\, z_{\vsigma}\, \nip{\omega^1_1 \otimes \dots \otimes \omega^P_P} {\sigma^1_1 \otimes \dots \otimes \sigma^P_P }\\
		&= \sum_{\vomega,\vsigma} x^*_{\vomega}\, z_{\vsigma}\, \nip{\omega^1_1}{\sigma^1_1} \dots \nip{\omega^P_P}{\sigma^P_P}\\
		&= \sum_{\vomega,\vsigma} x^*_{\vomega}\, z_{\vsigma}\, \left(\bG_{A_1}\right)_{\omega^1,\sigma^1} \dots \left(\bG_{A_P}\right)_{\omega^P,\sigma^P}\\
		&= \sum_{\vomega} x^*_{\vomega}\, \sum_{\vsigma} \left(\bG_{A_1} \otimes \dots \otimes \bG_{A_P} \right)_{\vomega,\vsigma} z_{\vsigma}\\
		&= \langle \bx,\left(\bG_{A_1} \otimes \dots \otimes \bG_{A_P} \right) \bz\rangle,
	\end{align}
	which concludes the proof.
\end{proof}

\begin{proof}[Proof of Lemma \ref{lem:frobenius-as-ip-multipartite}(\ref{item:N_ell-multipartite})]
	Recall that we have
	\begin{align}
		X = \cP(X) = \sum_{\vec{\sigma}} x_{\vec{\sigma}}\, \sigma^1_{1}\otimes \dots\otimes \sigma^P_{P}\label{eq:P(X)-proof}
	\end{align}
	for some coefficients $x_{\vec{\sigma}}\in\mathbb{C}$.
	Applying the crosstwirl $\cQm$ to an operator $\sigma^1_{{\ell_1}}\otimes \dots\otimes \sigma^P_{{\ell_P}}$ gives
	\begin{align}
		\cQm\left(\sigma^1_{{\ell_1}}\otimes \dots\otimes \sigma^P_{{\ell_P}}\right) &= \sum_\tau y_\tau^{\vec{\sigma}}\, \tau_{{\ell_1},\dots,{\ell_P}} =  \sum_\tau y_\tau^{\vec{\sigma}}\, \tau_{{\ell_1}} \otimes \dots \otimes \tau_{{\ell_P}} \label{eq:y-multipartite-def}
	\end{align}
	for some coefficients $y_\tau^{\vec{\sigma}}$, and using this relation in \eqref{eq:P(X)-proof} gives
	\begin{align}
		\cQm\cP(X) &= \sum_{\vec{\sigma}} x_{\vec{\sigma}}\, \cQm\left(\sigma^1_{\ell_1}\otimes \dots\otimes \sigma^P_{\ell_P}\right) \otimes \sigma^1_{m_1-\ell_1} \otimes \dots \sigma^P_{m_P-\ell_P}\\
		&=\sum_{\vec{\sigma}} x_{\vec{\sigma}} \sum_\tau y_\tau^{\vec{\sigma}}\, \tau_{\ell_1} \otimes \sigma^1_{m_1-\ell_1} \otimes \dots \otimes \tau_{\ell_P} \otimes \sigma^P_{m_P-\ell_P}.
	\end{align}
	Another application of $\cP$ to this operator yields
	\begin{align}
		\cP\cQm\cP(X) &= \sum_{\vec{\sigma}} x_{\vec{\sigma}} \sum_\tau y_\tau^{\vec{\sigma}}\, \cP\left(\tau_{\ell_1} \otimes \sigma^1_{m_1-\ell_1} \otimes \dots \otimes \tau_{\ell_P} \otimes \sigma^P_{m_P-\ell_P}\right)\\
		&= \sum_{\vec{\sigma}} x_{\vec{\sigma}} \sum_\tau y_\tau^{\vec{\sigma}}\, \cT_{A_1}\left(\tau_{\ell_1} \otimes \sigma^1_{m_1-\ell_1}\right) \otimes \dots \otimes \cT_{A_P}\left(\tau_{\ell_P} \otimes \sigma^P_{m_P-\ell_P}\right)\\
		&= \sum_{\vec{\sigma}} x_{\vec{\sigma}} \sum_\tau y_\tau^{\vec{\sigma}} \sum_{\vec{\omega}} z_{\omega^1}^{\tau,\sigma^1}\dots z_{\omega^P}^{\tau,\sigma^P} \omega^1_{1} \otimes \dots \otimes \omega^P_{P}\\
		&= \sum_{\vec{\omega}} \left(\sum_{\vec{\sigma}} x_{\vec{\sigma}} \sum_\tau y_\tau^{\vec{\sigma}}\,  z_{\omega^1}^{\tau,\sigma^1}\dots z_{\omega^P}^{\tau,\sigma^P}\right) \omega^1_{1} \otimes \dots \otimes \omega^P_{P},
	\end{align}
	with coefficients $z_*^{*,*}$ defined for $p=1,\dots, P$ via
	\begin{align}
		\cT_{A_p}\left(\tau_{\ell_p} \otimes \sigma_{m_p-\ell_p}^p \right) = \sum_{\omega^p} z_{\omega^p}^{\tau,\sigma^p} \omega^p_{p}. \label{eq:z}
	\end{align}
	
	Defining vectors $\bs,\bx\in\mathbb{C}^{K!^p}$ via
	\begin{align}
		(\bx)_{\vec{\sigma}} &= x_{\vec{\sigma}} &
		(\bs)_{\vec{\omega}} &= \sum_{\vec{\sigma}} x_{\vec{\sigma}} \sum_\tau y_\tau^{\vec{\sigma}}  z_{\omega^1}^{\tau,\sigma^1}\dots z_{\omega^P}^{\tau,\sigma^P},
	\end{align}
	part (\ref{item:X-multipartite}) of the lemma gives
	\begin{align}
		\langle X,\cP\cQm\cP(X)\rangle &= \langle \bx, (\bG_{A_1}\otimes \dots \otimes \bG_{A_P}) \bs\rangle = \langle \bx, (\bG_{A_1}\otimes \dots \otimes \bG_{A_P}) \bL \bx\rangle, \label{eq:frobenius-as-inner-product-L}
	\end{align}
	where the inner product on the right is the standard one on $\mathbb{C}^{K!^p}$, and we defined a $(K!^p\times K!^p)$-matrix $\bL$ with coefficients
	\begin{align}
		(\bL)_{\vec{\omega}, \vec{\sigma}} = \sum_\tau y_\tau^{\vec{\sigma}}\,  z_{\omega^1}^{\tau,\sigma^1}\dots z_{\omega^P}^{\tau,\sigma^P}. \label{eq:L-multipartite}
	\end{align}
    Our goal in the following is to further manipulate the expression in \eqref{eq:frobenius-as-inner-product-L}.
	
	The Gram matrices $\bG_{A_p}$ are invertible since $K\leq |A_p|=q^{m_p}$ for all $p$ by assumption~\cite[Lem.~1]{harrow_approximate_2023-1}.
	Defining vectors $\bz^{\tau,\sigma^p},\bv^{\tau,\sigma^p}\in \mathbb{C}^{K!}$ via $(\bz^{\tau,\sigma^p})_{\omega^p} = z_{\omega^p}^{\tau,\sigma^p}$ and $(\bv^{\tau,\sigma^p})_{\omega^p} = v^{\tau,\sigma^p}_{\omega^p}$, with
	\begin{align}
		v^{\tau,\sigma^p}_{\omega^p} = \nip{\omega^p_{p}}{\tau_{\ell_p}\otimes \sigma^p_{m_p-\ell_p}},
	\end{align} 
    we take the inner product with $\widetilde{\omega}^p_{p}$ on both sides of \eqref{eq:z}:
    \begin{align}
        \nip{\widetilde{\omega}^p_{p}}{\cT_{A_p}\left(\tau_{\ell_p}\otimes \sigma^p_{m_p-\ell_p}\right)} = \sum_{\omega^p} z_{\omega^p}^{\tau,\sigma^p} \nip{\widetilde{\omega}^p_{p}}{\omega^p_{p}} = \sum_{\omega^p} z_{\omega^p}^{\tau,\sigma^p} (\bG_{A_p})_{\widetilde{\omega},\omega}. \label{eq:Tau-z}
    \end{align}
    Since the twirl $\cT_{A_p}$ is self-adjoint with respect to $\nip{\cdot}{\cdot}$ and $\widetilde{\omega}_p^p$ is invariant under  it, the left-hand side of \eqref{eq:Tau-z} is equal to 
    \begin{align} 
    \nip{\widetilde{\omega}^p_{p}}{\cT_{A_p}\left(\tau_{\ell_p}\otimes \sigma^p_{m_p-\ell_p}\right)} = \nip{\cT_{A_p}\left(\widetilde{\omega}^p_{p}\right)}{\tau_{\ell_p}\otimes \sigma^p_{m_p-\ell_p}} = \nip{\widetilde{\omega}^p_{p}}{\tau_{\ell_p}\otimes \sigma^p_{m_p-\ell_p}} = v^{\tau,\sigma^p}_{\widetilde{\omega}^p}.
    \end{align}
    Hence, \eqref{eq:Tau-z} can be rewritten as a vector equation $\bv^{\tau,\sigma^p} = \bG_{A_p}\bz^{\tau,\sigma^p}$, and multiplying with $\bG_{A_p}^{-1}$ gives the following vector equations for all $p=1,\dots,P$:
	\begin{align}
		\bz^{\tau,\sigma^p} &= \bG_{A_p}^{-1} \bv^{\tau,\sigma^p} \label{eq:z-multipartite}
	\end{align}
	We also rewrite the coefficients $y_\tau^{\vec{\sigma}}$ defined via \eqref{eq:y-multipartite-def}:
	\begin{align}
		w^{\vec{\sigma}}_{\chi} &\coloneqq \nip{\chi_{\ell_1 \dots \ell_P} }{\sigma^1_{\ell_1}\otimes \dots\otimes \sigma^P_{\ell_P}}\\
        &= \nip{\cQm\left(\chi_{\ell_1 \dots \ell_P}\right) }{\sigma^1_{\ell_1}\otimes \dots\otimes \sigma^P_{\ell_P}}\\
		&= \nip{\chi_{\ell_1 \dots \ell_P} }{\cQm\left(\sigma^1_{\ell_1}\otimes \dots\otimes \sigma^P_{\ell_P}\right)}\\
		&= \sum_\tau y_\tau^{\vec{\sigma}} \nip{\chi_{\ell_1}\otimes \dots \otimes \chi_{\ell_P}}{\tau_{\ell_1} \otimes \dots \otimes \tau_{\ell_P}}\\
		&= \sum_\tau (\bG_{\ell_1,\dots,\ell_P})_{\chi,\tau} y_\tau^{\vec{\sigma}},
	\end{align}
	where the second equality follows since $\chi_{\ell_1 \dots \ell_P}$ is invariant under $\cQm$, the third equality follows from the self-adjointness of $\cQm$, and in the last equality $\bG_{\ell_1,\dots,\ell_P}\equiv \bG_{A_{1}^{\ell_1}\dots A_{P}^{\ell_P}}$ is the Gram matrix of permutation operators acting on $A_{1}^{\ell_1}\dots A_{P}^{\ell_P}$.
	If $K\leq |A_{1}^{\ell_1}\dots A_{P}^{\ell_P}| = q^{\ell_1 + \dots + \ell_P}$, then this matrix is also invertible, and similar to \eqref{eq:z-multipartite} above we have the identity
	\begin{align}
		\by^{\vec{\sigma}} = \bG_{\ell_1,\dots,\ell_P}^{-1} \bw^{\vec{\sigma}} \label{eq:y-multipartite}
	\end{align}
	for the vectors $(\by^{\vec{\sigma}})_\tau = y_\tau^{\vec{\sigma}}$ and $(\bw^{\vec{\sigma}})_\tau = w^{\vec{\sigma}}_{\tau}$.
	
	We now substitute the expressions \eqref{eq:z-multipartite} and \eqref{eq:y-multipartite} in the definition \eqref{eq:L-multipartite} of $\bL$:
    \begin{align} 
    (\bL)_{\vec{\omega}, \vec{\sigma}} &= \sum_\tau y_\tau^{\vec{\sigma}}\,  z_{\omega^1}^{\tau,\sigma^1}\dots z_{\omega^P}^{\tau,\sigma^P}\\
    &=\sum_{\tau} \left(\bG_{\ell_1,\dots,\ell_P}^{-1} \bw^{\vec{\sigma}}\right)_{\tau} \left( \bG_{A_1}^{-1} \bv^{\tau,\sigma^1} \right)_{\omega^1} \dots \left( \bG_{A_P}^{-1} \bv^{\tau,\sigma^P} \right)_{\omega^P}\\
    &= \sum_{\tau,\chi,\eta^1,\dots,\eta^P}  \left(\bG_{\ell_1,\dots,\ell_P}^{-1}\right)_{\tau,\chi} w^{\vec{\sigma}}_{\chi} \left(\bG^{-1}_{A_1}\right)_{\omega^1,\eta^1} v_{\eta^1}^{\tau,\sigma^1} \dots \left(\bG^{-1}_{A_P}\right)_{\omega^P,\eta^P} v_{\eta^P}^{\tau,\sigma^P}\\
    &= \sum_{\vec{\eta}} \left(\bG^{-1}_{A_1}\right)_{\omega^1,\eta^1} \dots \left(\bG^{-1}_{A_P}\right)_{\omega^P,\eta^P} \sum_{\tau,\chi} \left(\bG_{\ell_1,\dots,\ell_P}^{-1}\right)_{\tau,\chi}v_{\eta^1}^{\tau,\sigma^1} \dots v_{\eta^P}^{\tau,\sigma^P} w^{\vec{\sigma}}_{\chi}\\
    &= \sum_{\vec{\eta}} \left(\bG^{-1}_{A_1}\otimes \dots\otimes \bG^{-1}_{A_P}\right)_{\vec{\omega},\vec{\eta}}\sum_{\tau,\chi} \left(\bG_{\ell_1,\dots,\ell_P}^{-1}\right)_{\tau,\chi}v_{\eta^1}^{\tau,\sigma^1} \dots v_{\eta^P}^{\tau,\sigma^P} w^{\vec{\sigma}}_{\chi}
    \end{align}
    Once again, this expression implies a matrix equation 
	\begin{align}
		\bL = \left(\bG_{A_1}^{-1} \otimes \dots \otimes \bG_{A_P}^{-1}\right) \overline{\bL}, \label{eq:L-as-product-multipartite}
	\end{align}
	where $\overline{\bL}$ is a $(K!^P\times K!^P)$-matrix with coefficients
	\begin{align}
		&(\overline{\bL})_{\vec{\eta}, \vec{\sigma}}\notag\\
		&= \sum_{\tau,\chi} (\bG_{\ell_1,\dots,\ell_P}^{-1})_{\tau,\chi} v_{\eta^1}^{\tau,\sigma^1}\dots v_{\eta^P}^{\tau,\sigma^P} w_\chi^{\vec{\sigma}}\\
		&= \sum_{\tau,\chi} (\bG_{\ell_1,\dots,\ell_P}^{-1})_{\tau,\chi} \left(\prod\nolimits_{p=1}^P \nip{\eta^p_p}{\tau_{\ell_p} \otimes \sigma^p_{m_p-\ell_p} }\right) \nip{\chi_{\ell_1\dots\ell_P}} {\sigma^1_{\ell_1}\otimes \dots \otimes \sigma^P_{\ell_P}} \\
		&= \sum_{\tau,\chi}(\bG_{\ell_1,\dots,\ell_P}^{-1})_{\tau,\chi} \left(\prod\nolimits_{p=1}^P \nip{\eta^p_{\ell_p}}{\tau_{\ell_p}} \nip{\eta^p_{m_p-\ell_p}}{\sigma^p_{m_p-\ell_p}}\right) \nip{\chi_{\ell_1\dots\ell_P}} {\sigma^1_{\ell_1}\otimes \dots \otimes \sigma^P_{\ell_P}}\\
		&= \prod\nolimits_{p=1}^P \nip{\eta^p_{m_p-\ell_p}}{\sigma^p_{m_p-\ell_p}} \sum_{\tau} \left(\prod\nolimits_{p=1}^P \nip{\eta^p_{\ell_p}}{\tau_{\ell_p}} \right) \sum_\chi (\bG_{\ell_1,\dots,\ell_P}^{-1})_{\tau,\chi} \left(\prod\nolimits_{p=1}^P \nip{\chi_{\ell_p}}{\sigma^p_{\ell_p}} \right).
	\end{align}
	This shows that $\overline{\bL}$ is equal to the following Hadamard product:
	\begin{align}
		\overline{\bL} =  \left(\bG_{A_1^{m_1-\ell_1}}\otimes \dots\otimes \bG_{A_P^{m_P-\ell_P}}\right) \odot \left(\bN_{\ell_1,\dots,\ell_P}^T \bG_{\ell_1,\dots,\ell_P}^{-1} \bN_{\ell_1,\dots,\ell_P} \right), \label{eq:Lbar-hadamard-multipartite}
	\end{align}
	where we defined the $(K!\times K!^P)$-matrix $\bN_{\ell_1,\dots,\ell_P}$ with coefficients
	\begin{align}
		(\bN_{\ell_1,\dots,\ell_P})_{\tau,\vec{\sigma}} = \nip{\tau_{\ell_1,\dots,\ell_P}}{\sigma^1_{\ell_1}\otimes\dots\otimes \sigma^P_{\ell_P}}.
	\end{align}
	Using \eqref{eq:L-as-product-multipartite} with \eqref{eq:Lbar-hadamard-multipartite} in \eqref{eq:frobenius-as-inner-product-L}, we obtain
	\begin{align}
		\langle X,\cP\cQm\cP(X)\rangle = \left\langle \bx,\left(\bG_{A_1^{m_1-\ell_1}}\otimes \dots\otimes \bG_{A_P^{m_P-\ell_P}}\right) \odot \left(\bN_{\ell_1,\dots,\ell_P}^T \bG_{\ell_1,\dots,\ell_P}^{-1} \bN_{\ell_1,\dots,\ell_P} \right) \bx \right\rangle,
	\end{align}
	which proves the claim.
\end{proof}

\begin{proof}[Proof of Lemma \ref{lem:frobenius-as-ip-multipartite}(\ref{item:N-multipartite})]
	Once again, let $X=\cP(X)=\sum_{\vsigma} x_\vsigma\, \sigma_1^1\otimes \dots\otimes \sigma^P_P$ for some coefficients $x_\vsigma\in\mathbb{C}$ by assumption of the lemma.
	We define coefficients $r^\vsigma_\pi$ via
	\begin{align}
		\cR\left(\sigma^1_1\otimes \dots \otimes \sigma^P_P\right) = \sum_{\pi} r^\vsigma_\pi\, \pi_{1\dots P}, \label{eq:r}
	\end{align}
	so that we can write
	\begin{align}
		\cR(X) = \sum_\vsigma x_\vsigma \sum_\pi r_\pi^\vsigma\, \pi_{1\dots P}.
	\end{align}
	
	Taking the normalized inner product with $X$ gives
	\begin{align}
		\nip{X}{\cR(X)} &= \sum_{\vomega,\vsigma,\pi} x_\vomega^* \, x_\vsigma\,r_\pi^\vsigma\, \nip{\omega^1_1\otimes \dots\otimes \omega^P_P}{\pi_{1\dots P}}\\
		&= \sum_{\vomega,\vsigma,\pi} x_\vomega^* \, x_\vsigma\,r_\pi^\vsigma\,t_{\pi}^\vomega,\label{eq:X-RX}
	\end{align}
	where we defined the coefficients
	\begin{align}
		t_{\pi}^\vomega &= \nip{\pi_{1\dots P}}{\omega^1_1\otimes \dots\otimes \omega^P_P}\\
		&= \nip{\cR(\pi_{1\dots P})}{\omega^1_1\otimes \dots\otimes \omega^P_P}\\
		&= \nip{\pi_{1\dots P}}{\cR\left(\omega^1_1\otimes \dots\otimes \omega^P_P\right)}\\
		&= \sum_\rho r_\rho^\vomega\, \nip{\pi_{1\dots P}}{\rho_{1\dots P}}\\
		&= \sum_{\rho}(\bG_{A_1\dots A_P})_{\pi,\rho}\, r_\rho^\vomega.
	\end{align}
	Here, we used the invariance of $\pi_{1\dots P}$ under $\cR$ in the second equality, and the definition \eqref{eq:r} of the $r_*^*$ coefficients in fourth equality.
	
	Defining vectors $\br^\vsigma$ with $(\br^\vsigma)_\pi = r_\pi^\vsigma$ and $\bt^\vsigma$ with $(\bt^\vsigma)_\pi = t_\pi^\vsigma$, we then have $\bt^\vsigma = \bG_{A_1\dots A_P} \br^\vsigma$.
	If $K \leq |A_{1}\dots A_P| = q^{m_1+\dots m_P}$, then $\bG_{A_1\dots A_P}$ is invertible \cite{harrow_approximate_2023-1}, and hence $\br^\vsigma = \bG_{A_1\dots A_P}^{-1} \bt^\vsigma$.
	Using this in \eqref{eq:X-RX}, we have
	\begin{align}
		\nip{X}{\cR(X)} &= \sum_{\vomega,\vsigma,\pi} x_\vomega^* \, x_\vsigma\,r_\pi^\vsigma\,t_{\pi}^\vomega\\
		&= \sum_{\vomega,\vsigma,\pi,\rho} x_\vomega^* \, x_\vsigma\,(\bG_{A_1\dots A_P}^{-1})_{\pi,\rho}\, t_\rho^\vsigma \, t_{\pi}^\vomega\\
		&= \sum_\vomega x_\vomega^* \, \sum_\pi t_{\pi}^\vomega\, \sum_{\rho} (\bG_{A_1\dots A_P}^{-1})_{\pi,\rho}\, \sum_{\vsigma} t_\rho^\vsigma \,x_\vsigma,
	\end{align}
	which can be rewritten as
	\begin{align}
		\nip{X}{\cR(X)} = \left\langle \bx, \bN_{A_1 \dots A_P}^T \bG_{A_1\dots A_P}^{-1} \bN_{A_1 \dots A_P} \bx \right\rangle
	\end{align}
	in terms of the $(K!\times K!^P)$-matrix $\bN_{A_1 \dots A_P}$ with coefficients
	\begin{align}
		\left(\bN_{A_1 \dots A_P}\right)_{\pi,\vsigma} = t_\pi^\vsigma = \nip{\pi_{1\dots P}}{\sigma^1_1 \otimes \dots \otimes \sigma^P_P}.
	\end{align}
	This concludes the proof.
\end{proof}

\subsubsection{Proof of subspace angle bound for Twirl-Swap-Twirl}
\label{sec:subspace-angle-swap}

We now give the proof of Proposition \ref{prop:subspace-angle-swap}, which bounds the subspace angle cosine for the Twirl-Swap-Twirl protocol.
To this end, we first rewrite the squared numerator on the right-hand side of \eqref{eq:subspace-angle} by expanding the $2$-norm, and using Lemma \ref{lem:projections-2-norm-inner-product} via the `dominance relations' \eqref{eq:dominance}:
\begin{align}
	\sup_{X\neq 0}\frac{\left\|\cQs\cP(X)-\cR(X)\right\|_2^2}{\|X\|_2^2} &= \sup_{X\neq 0}\frac{\langle X, \cP\cQs\cP(X) - \cR(X)\rangle}{\langle X,X\rangle} .\label{eq:norm-as-inner-product}
\end{align}

The relations \eqref{eq:dominance} furthermore show that the supremum in \eqref{eq:norm-as-inner-product} is achieved on some $X=\cP(X)$, which according to \eqref{eq:unitary-invariant-X-permutations} can be written as
\begin{align}
	X = \sum_{\sigma,\tau\in \kS_k} x_{\sigma,\tau}\, \sigma_A \otimes \tau_B\label{eq:Px}
\end{align}
for some coefficients $x_{\sigma,\tau}\in\mathbb{C}$.
Note that $\sigma_A \otimes \tau_B$ is invariant under the projections $\cP,\cQs,\cR$ whenever $\sigma=\tau$.
Following \cite{harrow_approximate_2023-1}, we denote by $\mathbf{G}_S$ the $(k!\times k!)$-Gram matrix of permutation operators acting on $k$ copies of a system $S$ with elements $(\mathbf{G}_S)_{\sigma,\tau} = \nip{\sigma_S}{\tau_S}$ for $\sigma,\tau\in\kS_k$.
Here, $
	\nip{X}{Y} = |S|^{-k} \tr(X^\dagger Y)$
denotes the normalized Frobenius product and $|S|=\dim\cH_S$.

Because of the quotient of inner products in \eqref{eq:norm-as-inner-product}, we can replace $\langle \cdot,\cdot\rangle$ by the normalized inner product $\nip{\cdot}{\cdot}$.
Using the assumptions $X=\cP(X)$ and $k^2\leq q^m$ together with Lemma \ref{lem:inner-product-rewriting}(\ref{item:X}), (\ref{item:M}) and (\ref{item:N}) from Section \ref{sec:auxiliary-lemma}, the expression \eqref{eq:norm-as-inner-product} is then equal to
\begin{align}
	\frac{\langle X, \cP\cQs\cP(X) - \cR(X)\rangle}{\langle X,X\rangle}  &= \frac{\left\langle \bx,\left[\bM\left(\mathbf{G}_A^{-1}\otimes \mathbf{G}_B^{-1}\right)\bM - \bN^T \bG_{AB}^{-1} \bN\right] \bx \right\rangle}{\left\langle \bx,(\bG_A\otimes \bG_B)\bx \right\rangle},\label{eq:inner-product-bound}
\end{align}
where the inner product $\langle\cdot,\cdot\rangle$ on the right-hand side (and in the following) is now the usual standard inner product on $\mathbb{C}^{k!^2}$.
The matrices $\bM$ of size $(k!^2\times k!^2)$ and $\bN$ of size $(k!\times k!^2)$ have the following coefficients for $\pi,\rho,\omega,\chi\in\kS_k$ (cf.~Lemma \ref{lem:inner-product-rewriting}):
\begin{align}
	(\bM)_{(\pi,\rho),(\omega,\chi)} &= \nip{\pi_A}{\chi_{A^\ell}\otimes \omega_{A^{m-\ell}}} \nip{\rho_B}{\omega_{B^\ell}\otimes \chi_{B^{m-\ell}}} \label{eq:M-text}\\
	(\bN)_{\pi,(\omega,\chi)} &= \nip{\pi_{AB}}{\omega_A\otimes \chi_B}. \label{eq:N-text1}
\end{align}
Here, for $\chi,\omega\in\kS_k$ we write $\chi_{A^\ell}$ for the permutation operator acting on the first $\ell\leq m$ qudits within each of the $k$ blocks (of size $m$ each) in $A$, and $\omega_{A^{m-\ell}}$ for the permutation operator acting on the remaining $m-\ell$ qudits within each block.
The same conventions hold for permutations acting on the blocks in $B$.

The rest of the proof is concerned with bounding the expression on the right-hand side of \eqref{eq:inner-product-bound} for all $\cL(A^k B^k)\ni X\neq 0$ (or equivalently $\mathbb{C}^{k!^2}\ni \bx\neq 0$).
This in turn will give a bound on the supremum over $\bx\neq 0$ and thus the $2$-norm in \eqref{eq:subspace-angle} that we seek to estimate.
To this end, we employ the following inequalities proved in \cite[Lemma 1]{harrow_approximate_2023-1}, with $\lmin(\mathbf{H})$ and $\lmax(\mathbf{H})$ denoting the smallest and largest eigenvalue of a Hermitian operator $\mathbf{H}$, respectively:
\begin{align}
	\lmin(\bG_A) &\geq 1-\frac{k^2}{2q^m} \equiv 1-\eps\label{eq:Ga-min-eval}\\
	\lmax(\bG_{AB}) &\leq \exp\left(\frac{k^2}{2q^{2m}}\right) \equiv \exp(\eps'), \label{eq:Gab-max-eval}
\end{align}
where we set $\eps \coloneqq \frac{k^2}{2q^m}$ and $\eps' \coloneqq \frac{k^2}{2q^{2m}}$.
From \eqref{eq:Ga-min-eval} we get $\lmax\left(\bG_A^{-1}\right) \leq (1-\eps)^{-1}$ and hence, using $|A_i|=|B_i|=q^m$, that
\begin{align}
	\bG_A^{-1} \otimes \bG_B^{-1} \leq (1-\eps)^{-2} \, \one_{k!^2}.
\end{align}
We also have $\lmax\left(-\bG_{AB}^{-1}\right)\leq -\exp(-\eps')$ from \eqref{eq:Gab-max-eval}, which leads to
\begin{align}
	-\bG_{AB}^{-1} \leq -\exp(-\eps') \one_{k!}. \label{eq:inverse-Gab-bound}
\end{align}
Defining constants
\begin{align}
    a &\coloneqq (1-\eps)^{-2} = \left(1-\frac{k^2}{2q^m}\right)^{-2} & a_1 &\coloneqq\exp(-\eps') = \exp\left(-\frac{k^2}{2q^{2m}}\right), \label{eq:constants-twirl-swap-twirl}
\end{align} 
these operator inequalities yield 
\begin{align}
	\left\langle \bx, \left[\bM\left(\mathbf{G}_A^{-1}\otimes \mathbf{G}_B^{-1}\right)\bM - \bN^T \bG_{AB}^{-1} \bN \right] \bx \right\rangle &\leq 
	\left\langle \bx, \left(a \bM^2 - a_1 \bN^T \bN\right)\bx\right\rangle. \label{eq:inner-product-without-G}
\end{align}
The bound \eqref{eq:Ga-min-eval} further gives $\bG_A\otimes\bG_B \geq a^{-1} \one_{k!^2} = (1-\eps)^{2}\one_{k!^2}$ and thus
\begin{align}
	\left\langle \bx,(\bG_A\otimes \bG_B)\bx \right\rangle^{-1} \leq a \langle \bx,\bx\rangle^{-1}. \label{eq:x-G-x-bound}
\end{align}
Using the inequalities \eqref{eq:inner-product-without-G} and \eqref{eq:x-G-x-bound} in \eqref{eq:inner-product-bound} gives
\begin{align}
	\sup_{X\neq 0} \frac{\langle X, \cP\cQs\cP(X) - \cR(X)\rangle}{\langle X,X\rangle} \leq a \sup_{\mathbf{x}\neq 0} \frac{\left\langle \bx, \left(a \bM^2 - a_1 \bN^T \bN\right)\bx\right\rangle}{\langle \bx,\bx\rangle} = a \left\|a \bM^2 - a_1 \bN^T \bN \right\|, \label{eq:inner-prod-as-op-norm}
\end{align}
where the equality follows from the fact that $\bM$ is Hermitian by Lemma \ref{lem:inner-product-rewriting}, and thus $a \bM^2 - a_1 \bN^T \bN$ is Hermitian as well.

We now focus on bounding the operator norm of $a \bM^2 - a_1 \bN^T \bN$. 
The main idea is that both $\bM^2$ and $\bN^T\bN$ can be written as a sum of two matrices: a diagonal matrix $\bM_0$ with operator norm $1$ that is identical for both $\bM^2$ and $\bN^T\bN$, and a matrix with small coefficients provided $q^{\ell}$ is much larger than certain powers of $k!$.
Using simple norm estimates, $a \bM^2 - a_1 \bN^T \bN$ can then be written as a sum of matrices with small operator norm.

Inspecting the definition of $\bM$ in \eqref{eq:M-text}, we see that $(\bM)_{(\pi,\pi),(\pi,\pi)}$ is equal to $1$ for any $\pi\in\kS_k$.
On the other hand, for a quantum system $S=(\mathbb{C}^d)^{\otimes k}$ with dimension $|S|=d^k$ we have $\nip{\pi_S}{\rho_S} = |S|^{-|\pi^{-1}\rho|}$, where $|\sigma|$ denotes the minimum number of terms needed to write $\sigma\in\kS_k$ as a product of transpositions \cite{harrow_approximate_2023-1}.
Thus, $\nip{\pi_S}{\rho_S} \leq |S|^{-1}$ if $\pi\neq \rho$, which implies that $(\bM)_{(\pi,\rho),(\omega,\chi)} \leq q^{-2\ell}$
if at least one of $\pi,\rho,\omega,\chi$ is distinct from the rest.
For example, for $\chi\neq \pi$ we have 
\begin{align} 
	(\bM)_{(\pi,\pi),(\pi,\chi)} &= \nip{\pi_A}{\chi_{A^\ell}\otimes \pi_{A^{m-\ell}}} \nip{\pi_B}{\pi_{B^\ell}\otimes \chi_{B^{m-\ell}}}\\
	&= \nip{\pi_{A^\ell}}{\chi_{A^\ell}} \nip{\pi_{B^{m-\ell}}}{\chi_{B^{m-\ell}}}\\
	&\leq q^{-\ell} q^{-(m-\ell)}\\
	&\leq q^{-2\ell}.
\end{align}
We can thus write
\begin{align}
	\bM = \bM_0 + q^{-2\ell} \bM_1, \label{eq:M-decomposition}
\end{align}
where $\bM_1$ is a $(k!^2\times k!^2)$-matrix with each coefficient bounded from above by $1$, and $\bM_0$ is a $(k!^2\times k!^2)$-matrix with coefficients $(\bM_0)_{(\pi,\rho),(\omega,\chi)} = \delta_{\pi,\rho}\delta_{\pi,\omega}\delta_{\pi,\chi}$.
In other words, $\bM_0$ is a diagonal matrix with $1$'s in the positions $((\pi,\pi),(\pi,\pi))$ for $\pi\in\kS_k$, and $0$'s elsewhere.
Squaring both sides of \eqref{eq:M-decomposition} gives
\begin{align}
	\bM^2 &= \left(\bM_0 + q^{-2\ell} \bM_1\right)^2\\
	&= \bM_0^2 + q^{-2\ell}(\bM_0\bM_1 + \bM_1\bM_0) + q^{-4\ell}\bM_1^2\\
	&= \bM_0 + 2q^{-2\ell} \bM_2 + k!^2 q^{-4\ell} \bM_3, \label{eq:M2-bound}
\end{align}
where $\bM_2,\bM_3$ are again matrices with coefficients bounded by $1$.
In the third equality, we used $\bM_0^2=\bM_0$, and the fact that the coefficients of $\bM_1^2$ are sums of $k!^2$ products of coefficients of $\bM_1$, each at most equal to $1$.
Hence, the coefficients of $\bM_1^2$ are bounded by $k!^2$, which we use to define $\bM_3$.

We now apply the same argument to $\bN^T\bN$:
Inspecting the definition of $\bN$ in \eqref{eq:N-text1} reveals that $(\bN)_{\pi,(\sigma,\tau)}$ is $1$ if $(\sigma,\tau) = (\pi,\pi)$, and bounded from above by $q^{-m}$ otherwise.
Hence, we write
\begin{align}
	\bN = \bN_0 + q^{-m} \bN_1, \label{eq:N-decomposition}
\end{align}
where $\bN_0$ is a $(k!\times k!^2)$-matrix with coefficients $(\bN_0)_{\pi,(\sigma,\tau)} = \delta_{\pi,\sigma}\delta_{\pi,\tau}$, and $\bN_1$ is a matrix whose coefficients are bounded from above by $1$.
Note that we have $\bN_0^T\bN_0 = \bM_0$, and hence
\begin{align}
	\bN^T\bN &= (\bN_0+q^{-m}\bN_1)^T(\bN_0+q^{-m}\bN_1)\\
	&= \bM_0 + q^{-m} \left(\bN_1^T\bN_0 + \bN_0^T\bN_1 \right) + q^{-2m} \bN_1^T\bN_1\\
	&= \bM_0 + 2q^{-m} \bM_4 + k!q^{-2m}\bM_5, \label{eq:NTN-bound}
\end{align}
with matrices $\bM_4,\bM_5$ whose coefficients are bounded by $1$ as before.

Using \eqref{eq:M2-bound} and \eqref{eq:NTN-bound}, we get the following bound on the operator norm of $a \bM^2 - a_1 \bN^T \bN$:
\begin{align}
	&\left\|a\bM^2-a_1\bN^T\bN \right\| \notag\\
	&\quad {} = \left\| a \bM_0^2 + 2aq^{-2\ell} \bM_2 + ak!^2 q^{-4\ell} \bM_3 - a_1 \bM_0 - 2b q^{-m} \bM_4 - a_1k! q^{-2m}\bM_5 \right\|\\
	&\quad {} \leq |a-a_1| \|\bM_0\| + 2aq^{-2\ell} \|\bM_2\| + ak!^2 q^{-4\ell} \|\bM_3\| + 2a_1q^{-m} \| \bM_4\| + a_1k!q^{-2m} \| \bM_5\|, \label{eq:sum-of-op-norms}
\end{align}
where we used the triangle inequality in the last inequality.
Since $\bM_0$ is a diagonal matrix with $1$'s and $0$'s on the diagonal, we have $\|\bM_0\|=1$.
Furthermore, $\bM_i$ for $i=2,\dots,5$ are $(k!^2\times k!^2)$-matrices with coefficients bounded above by 1.
Using the norm inequality 
\begin{align}
	\|X\| \leq \sqrt{d} \max_{1\leq i\leq d}\sum_{j=1}^d |x_{ij}|
\end{align} 
valid for any $(d\times d)$-matrix $X$, we can bound the operator norms of $\bM_i$ for $i=2,\dots,5$ by $k!\cdot k!^2 = k!^3$, giving
\begin{align}
	\| a\bM^2-a_1\bN^T\bN \| \leq a-a_1 + 4ak!^3 q^{-2\ell} + 2ak!^5 q^{-4\ell}, \label{eq:L-op-norm}
\end{align}
where we also used $a = (1-\eps)^{-2}\geq \exp(-\eps') = a_1$ and the assumption $2\ell\leq m$.
Note that both $a,a_1\to 1$ for $k^2\ll q^{2m}$, and hence also $a-a_1\to 0$ in this limit.

Using the bound on $\|a \bM^2 - a_1 \bN^T \bN\|$ from \eqref{eq:L-op-norm} in \eqref{eq:inner-product-without-G}, we arrive at
\begin{align}
	\frac{\langle X, \cP\cQs\cP(X) - \cR(X)\rangle}{\langle X,X\rangle} &\leq \left(a-a_1 + 4ak!^3 q^{-2\ell} + 2ak!^5 q^{-4\ell}\right) \frac{\langle \bx,\bx\rangle}{\left\langle \bx,(\bG_A\otimes \bG_B)\bx \right\rangle}\\
	&\leq a(a-a_1) + 4a^2k!^3 q^{-2\ell} + 2a^2k!^5 q^{-4\ell}.\label{eq:last-ip-bound}
\end{align}
Since the left-hand side of \eqref{eq:last-ip-bound} bounds the subspace angle cosine $c(\im\cP,\im\cQs)^2$ from above via \eqref{eq:subspace-angle} and \eqref{eq:norm-as-inner-product},
\begin{align}
	c(\im\cP,\im\cQs)^2 &\leq  a(a-a_1) + 4a^2k!^3 q^{-2\ell} + 2a^2k!^5 q^{-4\ell}.
\end{align}
We have $\eps'=k^2/(2q^{2m}) \leq k^2/(2q^m) = \eps$, and hence 
\begin{align} 
a-a_1 = (1-\eps)^{-2}-\exp(-\eps') \leq (1-\eps)^{-2}-\exp(-\eps).
\end{align}
The assumption $k^2\leq q^m$ implies that $\eps \leq 1/2$, and in this regime $(1-\eps)^{-2}-\exp(-\eps) \leq 8\eps$.
Thus, 
\begin{align}
    a-a_1 \leq 8\eps = 4k^2q^{-m},
\end{align}
which concludes the proof.

\subsubsection{Proof of the subspace angle bound for multipartite crosstwirl protocol}\label{sec:proof-multipartite}

In this section we prove Proposition \ref{prop:subspace-angle-multi-crosstwirl} giving a bound on the subspace angle for the multipartite crosstwirl protocol described in Section \ref{sec:multipartite}.

	We first rewrite the subspace angle cosine in terms of the normalized Frobenius inner product $\nip{\cdot}{\cdot}$ using Lemma \ref{lem:projections-2-norm-inner-product} and \eqref{eq:dominance-multipartite}:
	\begin{align}
		c(\im\cP,\im\cQm)^2 = \sup_{X\neq 0}\frac{\left\| \cQm\cP(X)-\cR(X)\right\|^2}{\|X\|^2} = \sup_{X\neq 0} \frac{\nip{X}{\cP\cQm\cP(X)-\cR(X)}}{\nip{X}{X}}, \label{eq:norm-to-ip-multipartite}
	\end{align}
	and the supremum is achieved on an element $X = \cP(X)$, which by \eqref{eq:unitary-invariant-X-permutations} can be written as
	\begin{align}
		X = \cP(X) = \sum_{\vsigma} x_{\vsigma}\, \sigma^1_{A_1}\otimes \dots\otimes \sigma^P_{A_P}\label{eq:P(X)}
	\end{align}
	for some coefficients $x_{\vsigma}\in\mathbb{C}$, where we write $\vsigma = (\sigma^1,\dots,\sigma^P)$ with $\sigma^p\in\kS_K$ for $p=1,\dots,P$.\footnote{Note that the superscript in $\sigma^p$ is an index and not a power.}

    Recall that, by assumption of the proposition, $K^2 < 2q^{\ell_p} = \exp_q(\ell_p + \log_q(2))$ for all $p=1,\dots,P$.
    This implies $K< \exp_q(\frac{1}{2}(\ell_p + \log_q(2)) \leq q^{\ell_p}$ for all $p$, since $\log_q(2)\leq 1\leq \ell_p$ for $q\geq 2$.
    From $K\leq q^{\ell_p}$ we obtain both $K\leq q^{m_p}$ for all $p$ as well as $K\leq q^L$.
    We can therefore use Lemma \ref{lem:frobenius-as-ip-multipartite} proved in Sec.~\ref{sec:auxiliary-lemma} above to express the right-hand side of \eqref{eq:norm-to-ip-multipartite} as
	\begin{align}
		&\sup_{X\neq 0} \frac{\nip{X}{\cP\cQm\cP(X)-\cR(X)}}{\nip{X}{X}} \notag\\
		&= \sup_{\bx\neq 0} \frac{ \left\langle \bx, \left[\left(\bigotimes\nolimits_{p=1}^P\bG_{A_p^{m_p-\ell_p}}\right) \odot \left(\bN_{\ell_1,\dots,\ell_P}^T \bG_{\ell_1,\dots,\ell_P}^{-1} \bN_{\ell_1,\dots,\ell_P} \right) - \bN_{A_1\dots A_P}^T \bG_{A_1\dots A_P}^{-1} \bN_{A_1\dots A_P} \right]\bx \right\rangle } {\left\langle \bx, \left( \bigotimes\nolimits_{p=1}^P \bG_{A_p}\right) \bx \right\rangle}, \label{eq:frobenius-as-ip-multipartite}
	\end{align}
	where $\bG_{\ell_1,\dots,\ell_P}\equiv \bG_{A_{1}^{\ell_1}\dots A_{P}^{\ell_P}}$ is the Gram matrix of permutation operators acting on $A_{1}^{\ell_1}\dots A_{P}^{\ell_P}$, and the $(K!\times K!^P)$-matrices $\bN_{\ell_1,\dots,\ell_P}$ and $\bN_{A_1\dots A_P}$ are defined as
	\begin{align}
		\left(\bN_{\ell_1,\dots,\ell_P}\right)_{\pi,\vsigma} &= \nip{\pi_{A_1^{\ell_1}\dots A_P^{\ell_P}}}{\sigma^1_{A_1^{\ell_1}}\otimes \dots \otimes \sigma^P_{A_P^{\ell_P}}} \label{eq:N-ell}\\
		\left(\bN_{A_1\dots A_P}\right)_{\pi,\vsigma} &= \nip{\pi_{A_1\dots A_P}}{\sigma^1_{A_1}\otimes \dots \otimes \sigma^P_{A_P}}.\label{eq:N-multipartite}
	\end{align}
	
	Just as in the proof of Theorem \ref{prop:subspace-angle-swap}, we first eliminate the various Gram matrices in \eqref{eq:frobenius-as-ip-multipartite}.
	To this end, with $M=m_1 + \dots + m_P$ and $L = \ell_1 + \dots + \ell_P$ we define constants 
	\begin{align}
		\begin{aligned} 
        b_1 &= \prod_{p=1}^P \left(1-\frac{K^2}{2q^{m_p}}\right)^{-1} & b_2 &= \exp\left(-\frac{K^2}{2q^M}\right) \\ b_3 &= \left(1-\frac{K^2}{2q^L}\right)^{-1}
		& b_4&= \prod_{p=1}^P \exp\left(\frac{K^2}{2q^{m_p-\ell_p}}\right).
        \end{aligned}
        \label{eq:constants-crosstwirl}
	\end{align}
	We then use various operator inequalities that follow from the bounds on the largest and smallest eigenvalues of Gram matrices derived in \cite{harrow_approximate_2023-1}.
	First, $\bG_{A_p}  \geq \left(1-\frac{K^2}{2q^{m_p}}\right) \one_{K!}$ and hence
	\begin{align}
		\left\langle \bx, \left( \bigotimes\nolimits_{p=1}^P \bG_{A_p}\right) \bx \right\rangle^{-1} \leq b_1 \langle\bx,\bx\rangle^{-1}.
	\end{align}
	Second, $\bG_{A_1\dots A_P} \leq \exp\left(\frac{K^2}{2q^M}\right) \one_{K!}$ and hence
	\begin{align}
		{}-\big\langle \bx, \bN_{A_1\dots A_P}^T  \bG_{A_1\dots A_P}^{-1} \bN_{A_1\dots A_P} \bx \big\rangle \leq -b_2 \left\langle \bx, \bN_{A_1\dots A_P}^T \bN_{A_1\dots A_P} \bx\right\rangle.
	\end{align}
	To deal with the Hadamard product term in \eqref{eq:frobenius-as-ip-multipartite}, we use the following argument:
	From $\bG_{\ell_1,\dots,\ell_P} \geq \left(1-\frac{K^2}{2q^L}\right) \one_{K!}$ we get 
	\begin{align}
		\bN_{\ell_1,\dots,\ell_P}^T \bG_{\ell_1,\dots,\ell_P}^{-1} \bN_{\ell_1,\dots,\ell_P} \leq b_3 \bN_{\ell_1,\dots,\ell_P}^T \bN_{\ell_1,\dots,\ell_P},\label{eq:bound-G-ell_1-ell_p}
	\end{align}
	and $\bG_{A_p^{m_p-\ell_p}} \leq \exp\left(\frac{K^2}{2q^{m_p-\ell_p}}\right)\one_{K!}$ gives
	\begin{align}
		\bigotimes_{p=1}^P\bG_{A_p^{m_p-\ell_p}} \leq b_4 \one_{K!^P}. \label{eq:bound-G-prod-m_p-ell_p}
	\end{align}
	Now $X\odot Y\geq 0$ if $X,Y\geq 0$ \cite[Ch.~5]{horn_topics_1991} and $(X+Z)\odot Y = X\odot Z + X\odot Y$, which we can use together with \eqref{eq:bound-G-ell_1-ell_p} and \eqref{eq:bound-G-prod-m_p-ell_p} to obtain
	\begin{multline}
		\left\langle \bx, \left(\bigotimes\nolimits_{p=1}^P\bG_{A_p^{m_p-\ell_p}}\right) \odot \left(\bN_{\ell_1,\dots,\ell_P}^T \bG_{\ell_1,\dots,\ell_P}^{-1} \bN_{\ell_1,\dots,\ell_P} \right) \bx\right\rangle \\
		\leq b_3b_4  \left\langle \bx, \left(\one_{K!^P}\odot \bN_{\ell_1,\dots,\ell_P}^T  \bN_{\ell_1,\dots,\ell_P} \right) \bx\right\rangle.
	\end{multline}
	Note that $\one_{K!^P}\odot \bN_{\ell_1,\dots,\ell_P}^T  \bN_{\ell_1,\dots,\ell_P}$ is a diagonal matrix by the definition of the (element-wise) Hadamard matrix product $\odot$, which we will exploit further down in the proof.
	
	Using the above bounds in \eqref{eq:frobenius-as-ip-multipartite} gives
	\begin{align}
		&\sup_{X\neq 0} \frac{\nip{X}{\cP\cQm\cP(X)-\cR(X)}}{\nip{X}{X}}\notag\\ 
		&\eqspace \leq b_1 \sup_{\bx\neq 0} \frac{\left\langle \bx, \left(b_3b_4  \one_{K!^P}\odot \bN_{\ell_1,\dots,\ell_P}^T  \bN_{\ell_1,\dots,\ell_P} - b_2\bN_{A_1\dots A_P}^T\bN_{A_1\dots A_P}\right) \bx\right\rangle }{\langle\bx , \bx\rangle}\\
		& \eqspace = b_1 \left\| b_3b_4 \one_{K!^P}\odot \bN_{\ell_1,\dots,\ell_P}^T \bN_{\ell_1,\dots,\ell_P} - b_2 \bN_{A_1\dots A_P}^T\bN_{A_1\dots A_P} \right\|. \label{eq:bound-in-terms-of-op-norm-multipartite}
	\end{align}
	To bound the operator norm in the last line, we define two $(K!^P\times K!^P)$-matrices $\bD$ and $\bE$ via
	\begin{align}
		(\bD)_{\vsigma,\vomega} &= \begin{cases}\left(b_3b_4\bN_{\ell_1,\dots,\ell_P}^T \bN_{\ell_1,\dots,\ell_P} - b_2\bN_{A_1\dots A_P}^T\bN_{A_1\dots A_P}\right)_{\vsigma,\vsigma} & \text{if } \vsigma=\vomega,\\ 0 & \text{otherwise;} \end{cases}\\
		(\bE)_{\vsigma,\vomega} &= \begin{cases}0 &  \text{if } \vsigma=\vomega,\\ \left( b_2\bN_{A_1\dots A_P}^T\bN_{A_1\dots A_P}\right)_{\vsigma,\vomega} & \text{otherwise.}  \end{cases}
	\end{align}
	In other words, the matrices $\bD$ and $\bE$ are the diagonal and off-diagonal parts of the matrix $b_3b_4 \one_{K!^P}\odot \bN_{\ell_1,\dots,\ell_P}^T \bN_{\ell_1,\dots,\ell_P} - b_2 \bN_{A_1\dots A_P}^T\bN_{A_1\dots A_P}$, respectively, and
	\begin{align}
		\bD - \bE = b_3b_4 \one_{K!^P}\odot \bN_{\ell_1,\dots,\ell_P}^T \bN_{\ell_1,\dots,\ell_P} - b_2 \bN_{A_1\dots A_P}^T\bN_{A_1\dots A_P}.
	\end{align}
	This allows us to bound the subspace angle cosine as follows:
	\begin{align}
		c(\im\cP,\im\cQm)^2 &= \sup_{X\neq 0}\frac{\left\| \cQm\cP(X)-\cR(X)\right\|^2}{\|X\|^2}\\
		&=\sup_{X\neq 0} \frac{\nip{X}{\cP\cQm\cP(X)-\cR(X)}}{\nip{X}{X}}\\
		&\leq b_1 \|\bD - \bE\|\\
		&\leq b_1 \|\bD\| + b_1 \|\bE\|. \label{eq:subspace-angle-multi-bound}
	\end{align}
	
	We first bound the operator norm of $\bE$.
	This is a non-negative irreducible matrix, for which the generalized Perron-Frobenius Theorem states that $\|\bE\| \leq \max_{\vsigma} \sum_{\vomega} (\bE)_{\vsigma,\vomega}$ \cite[Ch.~1]{horn2012matrix}.
	Let $\vsigma=(\sigma^1,\dots,\sigma^P)$ denote the vector of permutations achieving this maximum.
	Then we upper-bound this expression by summing over all $\sigma^2,\dots,\sigma^P\in \kS_k$.
	This is a weaker upper bound, but in return we can calculate the latter sum explicitly as follows:
	\begin{align}
		\|\bE\| &\leq \sum_{\vomega} (\bE)_{\vsigma,\vomega}\\
		&\leq \sum_{\sigma^2,\dots,\sigma^P} \sum_{\vomega} (\bE)_{\vsigma,\vomega}\\
		&= \sum_{\sigma^2,\dots,\sigma^P} \left( \sum_{\vomega} (b_2\bN_{A_1\dots A_P}^T\bN_{A_1\dots A_P})_{\vsigma,\omega} - (b_2\bN_{A_1\dots A_P}^T\bN_{A_1\dots A_P})_{\vsigma,\vsigma} \right)\\
		&= b_2\sum_{\sigma^2,\dots,\sigma^P} \sum_{\vomega} (\bN_{A_1\dots A_P}^T\bN_{A_1\dots A_P})_{\vsigma,\omega} - b_2 \sum_{\sigma^2,\dots,\sigma^P} (\bN_{A_1\dots A_P}^T\bN_{A_1\dots A_P})_{\vsigma,\vsigma} \label{eq:K-op-norm-multi-calculation}
	\end{align}
	For the first sum,
	\begin{align}
		&\sum_{\sigma^2,\dots,\sigma^P} \sum_{\vomega} \left(\bN_{A_1\dots A_P}^T\bN_{A_1\dots A_P}\right)_{\vsigma,\vomega} \notag \\
		&= \sum_{\sigma^2,\dots,\sigma^P} \sum_{\vomega} \sum_\pi \prod_{p=1}^P \nip{\sigma^p_{A_p}}{\pi_{A_p}} \prod_{p=1}^P \nip{\pi_{A_p}}{\omega^p_{A_p}}\\
		&= \sum_\pi \nip{\sigma^1_{A_1}}{\pi_{A_1}} \sum_{\sigma^2} \nip{\sigma^2_{A_2}}{\pi_{A_2}} \dots \sum_{\sigma^P} \nip{\sigma^P_{A_P}}{\pi_{A_P}}\sum_{\omega^1} \nip{\pi_{A_1}}{\omega^1_{A_1}} \dots \sum_{\omega^P} \nip{\pi_{A_P}}{\omega^P_{A_P}}\label{eq:K-multi-sum05}
	\end{align}

	To evaluate the sums over permutations in \eqref{eq:K-multi-sum05}, we recall from \cite{harrow_approximate_2023-1} that for any $\omega\in\kS_K$ and $d\geq 1$ we have
	\begin{align}
			\sum_\omega d^{c(\omega)} = K!\binom{d+K-1}{K}.\label{eq:harrow-cycle-sum}
		\end{align}
	With $d \coloneqq q^{m_P}$, we can use this identity to evaluate the right-most sum in \eqref{eq:K-multi-sum05} as follows:
	\begin{align}
			\sum_{\omega^P} \nip{\pi_{A_P}}{\omega^P_{A_P}} = d^{-K} \sum_{\omega^P} d^{c(\pi^{-1}\omega^P)} = d^{-K} \sum_{\omega^P} d^{c(\omega^P)} = d^{-K} K! \binom{d + K-1}{K}, \label{eq:evaluate-cycle-sum}
		\end{align}
	where the second equality holds since $\omega^P \mapsto \pi^{-1}\omega^P$ for fixed $\pi$ is a bijection on $\kS_K$.
	Applying the identity \eqref{eq:evaluate-cycle-sum} successively to each of the $2P$ sums appearing in \eqref{eq:K-multi-sum05}  starting with the right-most one, and using $\left(\prod_{p=1}^P q^{-m_p K}\right)^2=q^{-2K\sum_{p=1}^Pm_p}=q^{-2MK}$, then gives 
	\begin{align}
		\sum_{\sigma^2,\dots,\sigma^P} \sum_{\vomega} \left(\bN_{A_1\dots A_P}^T\bN_{A_1\dots A_P}\right)_{\vsigma,\vomega}&= q^{-2MK} K!^{2P} \prod_{p=1}^P \binom{q^{m_p}+K-1}{K}^2.\label{eq:K-multi-sum1}
	\end{align}
	
	Similarly, for the second sum in \eqref{eq:K-op-norm-multi-calculation} we get
	\begin{align}
		\sum_{\sigma^2,\dots,\sigma^P} (\bN_{A_1\dots A_P}^T\bN_{A_1\dots A_P})_{\vsigma,\vsigma} &= \sum_{\sigma^2,\dots,\sigma^P} \sum_\pi \prod_{p=1}^P \nip{\pi_{A_p}}{\sigma^p_{A_p}}^2\\
		&= \sum_\pi \nip{\pi_{A_1}}{\sigma^1_{A_1}}^2 \sum_{\sigma^2} \nip{\pi_{A_2}}{\sigma^2_{A_2}}^2 \dots \sum_{\sigma^P} \nip{\pi_{A_P}}{\sigma^P_{A_P}}^2 \label{eq:K-multi-sum15}\\
		&= q^{-2MK} K!^P \prod_{p=1}^P \binom{q^{2m_p} + K - 1}{K},\label{eq:K-multi-sum2}
	\end{align}
	where we used \eqref{eq:harrow-cycle-sum} with $d=q^{2m_P}$ to obtain
	\begin{align}
		\sum_{\sigma^P} \nip{\pi_{A_P}}{\sigma^P_{A_P}}^2 = q^{-2m_PK} \sum_{\sigma^P} \left(q^{2m_P}\right)^{c(\pi^{-1}\sigma^P)} = q^{-2m_PK} K! \binom{q^{2m_P}+K-1}{K},
	\end{align}
	and similarly for the other sums in \eqref{eq:K-multi-sum15}.
	Substituting \eqref{eq:K-multi-sum1} and \eqref{eq:K-multi-sum2} in \eqref{eq:K-op-norm-multi-calculation} gives the bound
	\begin{align}
		\|\bE\| \leq b_2 q^{-2MK} K!^{2P} \prod_{p=1}^P \binom{q^{m_p}+K-1}{K}^2 - b_2 q^{-2MK} K!^P \prod_{p=1}^P \binom{q^{2m_p} + K - 1}{K}. \label{eq:K-op-norm-multi}
	\end{align}
	
	To bound the operator norm of the diagonal matrix $\bD$ in \eqref{eq:subspace-angle-multi-bound}, we first note that all diagonal entries of $\bD$ are positive:
	We have
	\begin{align}
		\frac{b_3b_4}{b_2} = \exp\left(\frac{K^2}{2q^{M}}\right) \prod_{p=1}^P\exp\left(\frac{K^2}{q^{m_p-\ell_p}}\right)  \left(1-\frac{K^2}{2q^{L}}\right)^{-1}  \geq 1, \label{eq:constants-relation}
	\end{align}
	and thus $b_3b_4\geq b_2$.
	Moreover, recall from \cite{harrow_approximate_2023-1} that for $\sigma,\tau\in\kS_K$ the normalized Frobenius inner product of two permutation operators $\sigma_X,\tau_X$ acting on a space $X^K$ is equal to 
	\begin{align}
		\nip{\sigma_X}{\tau_X} = d^{c(\sigma^{-1}\tau) - K},
	\end{align} 
	where $d=|X|$ is the dimension of $X$, and $c(\pi)$ denotes the number of cycles in $\pi\in\kS_K$.
	Since $c(\pi) \leq K$ for any $\pi\in\kS_K$ and $c(\pi)<K$ whenever $\pi\neq \operatorname{id}$ (where $\operatorname{id}\in\kS_K$ denotes the identity permutation), it then follows that, for two spaces $X,Y$ with $|X| < |Y|$,
	\begin{align}
		\nip{\sigma_X}{\tau_X} = |X|^{c(\sigma^{-1}\tau)-K} \geq |Y|^{c(\sigma^{-1}\tau)-K} = \nip{\sigma_Y}{\tau_Y},\label{eq:nip-dimension}
	\end{align}
	with strict inequality if $\sigma\neq \tau$.
	
	The inequality $b_3b_4\geq b_2$ and \eqref{eq:nip-dimension} imply that
	\begin{align}
		(\bD)_{\vsigma,\vsigma} &= \left(b_3b_4\bN_{\ell_1,\dots,\ell_P}^T \bN_{\ell_1,\dots,\ell_P} - b_2\bN_{A_1\dots A_P}^T\bN_{A_1\dots A_P}\right)_{\vsigma,\vsigma}\\
		&\geq b_2 \left(\bN_{\ell_1,\dots,\ell_P}^T \bN_{\ell_1,\dots,\ell_P} - \bN_{A_1\dots A_P}^T\bN_{A_1\dots A_P}\right)_{\vsigma,\vsigma}\\
		&>0.
	\end{align}
	The last strict inequality follows from the calculation
	\begin{align}
		\left(\bN_{\ell_1,\dots,\ell_P}^T \bN_{\ell_1,\dots,\ell_P}\right)_{\vsigma,\vsigma} &= \sum_{\pi} \nip{ \bigotimes\nolimits_{p=1}^P \sigma^p_{A_p^{\ell_p}} }{\pi_{A_1^{\ell_1}\dots A_P^{\ell_P}}} \nip{\pi_{A_1^{\ell_1}\dots A_P^{\ell_P}}}{\bigotimes\nolimits_{p=1}^P \sigma^p_{A_p^{\ell_p}} }\\
		&= \sum_\pi \prod_{p=1}^P \nip{\sigma^p_{A_p^{\ell_p}}}{\pi_{A_p^{\ell_p}}} \prod_{p=1}^P \nip{\pi_{A_p^{\ell_p}}}{\sigma^p_{A_p^{\ell_p}}}\\
		&> \sum_\pi \prod_{p=1}^P \nip{\sigma^p_{A_p}}{\pi_{A_p}} \prod_{p=1}^P \nip{\pi_{A_p}}{\sigma^p_{A_p}}\\
		&= \sum_{\pi} \nip{ \bigotimes\nolimits_{p=1}^P \sigma^p_{A_p } }{\pi_{A_1 \dots A_P }} \nip{\pi_{A_1 \dots A_P }}{\bigotimes\nolimits_{p=1}^P \sigma^p_{A_p } }\\
		&= \left(\bN_{A_1\dots A_P}^T\bN_{A_1\dots A_P}\right)_{\vsigma,\vsigma},
	\end{align}
	where we used \eqref{eq:nip-dimension} for the inequality, which is strict if at least one $\ell_p < m_p$.
	
	The operator norm of the diagonal matrix $\bD$ is thus equal to the largest diagonal entry, say, in row $\vsigma$.
	Similar to above, we then sum over $\sigma^2,\dots,\sigma^P$ to get an explicit upper bound on $\|\bD\|$:
	\begin{align}
		\|\bD\| &= (\bD)_{\vsigma,\vsigma}\\
		&\leq \sum_{\sigma^2,\dots,\sigma^P} (\bD)_{\vsigma,\vsigma}\\
		&= \sum_{\sigma^2,\dots,\sigma^P} \left(b_3b_4\bN_{\ell_1,\dots,\ell_P}^T \bN_{\ell_1,\dots,\ell_P} - b_2\bN_{A_1\dots A_P}^T\bN_{A_1\dots A_P}\right)_{\vsigma,\vsigma}\\
		&= b_3b_4 \sum_{\sigma^2,\dots,\sigma^P}  \left(\bN_{\ell_1,\dots,\ell_P}^T \bN_{\ell_1,\dots,\ell_P}\right)_{\vsigma,\vsigma} - b_2 \sum_{\sigma^2,\dots,\sigma^P} \left(\bN_{A_1\dots A_P}^T\bN_{A_1\dots A_P}\right)_{\vsigma,\vsigma} \label{eq:apply-sum-here}\\
		&= b_3b_4 q^{-2LK} K!^P \prod_{p=1}^P \binom{q^{2\ell_p}+K-1}{K} - b_2 q^{-2MK} K!^P \prod_{p=1}^P \binom{q^{2m_p}+K-1}{K},\label{eq:D-op-norm-multi}
	\end{align}
	where we used the identity \eqref{eq:K-multi-sum2} for the second sum in \eqref{eq:apply-sum-here}, and with the replacement $m_p \rightarrow \ell_p$ for the first sum therein.
	
	Substituting \eqref{eq:K-op-norm-multi} and \eqref{eq:D-op-norm-multi} in the bound \eqref{eq:subspace-angle-multi-bound} on the subspace angle cosine finally gives
	\begin{align}
		c(\im\cP,\im\cQm)^2 &\leq b_1b_3b_4 q^{-2LK} K!^P \prod_{p=1}^P \binom{q^{2\ell_p}+K-1}{K} - 2b_1b_2 q^{-2MK} K!^P \prod_{p=1}^P \binom{q^{2m_p}+K-1}{K} \notag\\
		& \quad{} + b_1b_2 q^{-2MK} K!^{2P} \prod_{p=1}^P \binom{q^{m_p}+K-1}{K}^2,
	\end{align}
	which concludes the proof of Proposition \ref{prop:subspace-angle-multi-crosstwirl}.

    \subsubsection{Open problem: Improving convergence bound for Twirl-Swap-Twirl protocol}
    \label{sec:technical-open-problem}

    The convergence in the construction of both the \hyperref[item:twirl-swap-twirl]{Twirl-Swap-Twirl} (Proposition~\ref{prop:subspace-angle-swap}) and the \hyperref[item:twirl-crosstwirl]{Twirl-Crosstwirl} protocols (Proposition~\ref{prop:subspace-angle-multi-crosstwirl}) is determined by the operator norm of the two matrices $\bX$ and $\bY$ defined in the previous sections:
\begin{align}
	\bX=a\bM^2-a_1 \bN^T\bN
\end{align}
with constants $a,a_1$ defined in \eqref{eq:constants-twirl-swap-twirl} and matrices $\bM,\bN$ defined in \eqref{eq:M-text} and \eqref{eq:N-text1}, respectively, and 
\begin{align}
	\bY=b_3b_4\one_{K!^P} \odot \bN_{\ell_1,\dots,\ell_P}^T\bN_{\ell_1,\dots,\ell_P}- b_2\bN_{A_1\dots A_P}^T\bN_{A_1\dots A_P}
\end{align}
with $\odot$ denoting the Hadamard product, the constants $b_2,b_3,b_4$ defined in \eqref{eq:constants-crosstwirl}, and the matrices $\bN_{\ell_1,\dots,\ell_P}, \bN_{A_1\dots A_P}$ defined in \eqref{eq:N-ell} and \eqref{eq:N-multipartite}, respectively.
The elements of both matrices are functions of permutation operators.
While the operator norm of $\bY$ can be controlled quite transparently using the generalized Perron-Frobenius Theorem, the different structure of the matrix $\bX$ defies this approach for \hyperref[item:twirl-swap-twirl]{Twirl-Swap-Twirl}, requiring a coarser analysis that leads to a weaker bound on the convergence of the design construction in this case.
A more refined analysis of $\|\bX\|$ may lead to an improved bound on this convergence mirroring that for the \hyperref[item:twirl-crosstwirl]{Twirl-Crosstwirl} protocol, and a general, more unified way of analyzing whether different bipartite interactions yield efficient designs.

\section{Log-depth Designs on Lattice Architectures}  \label{sec:lattices}

\begin{figure}
    \begin{minipage}{0.45\textwidth}
        \begin{subfigure}[t]{\textwidth}
            \includegraphics[width=\textwidth]{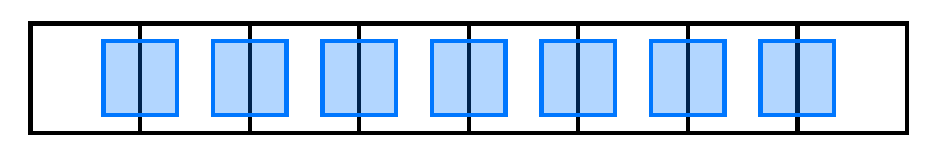}
            \caption{$1$-dimensional lattice.}
        \end{subfigure}\\
        \begin{subfigure}[t]{\textwidth}
            \includegraphics[width=\textwidth]{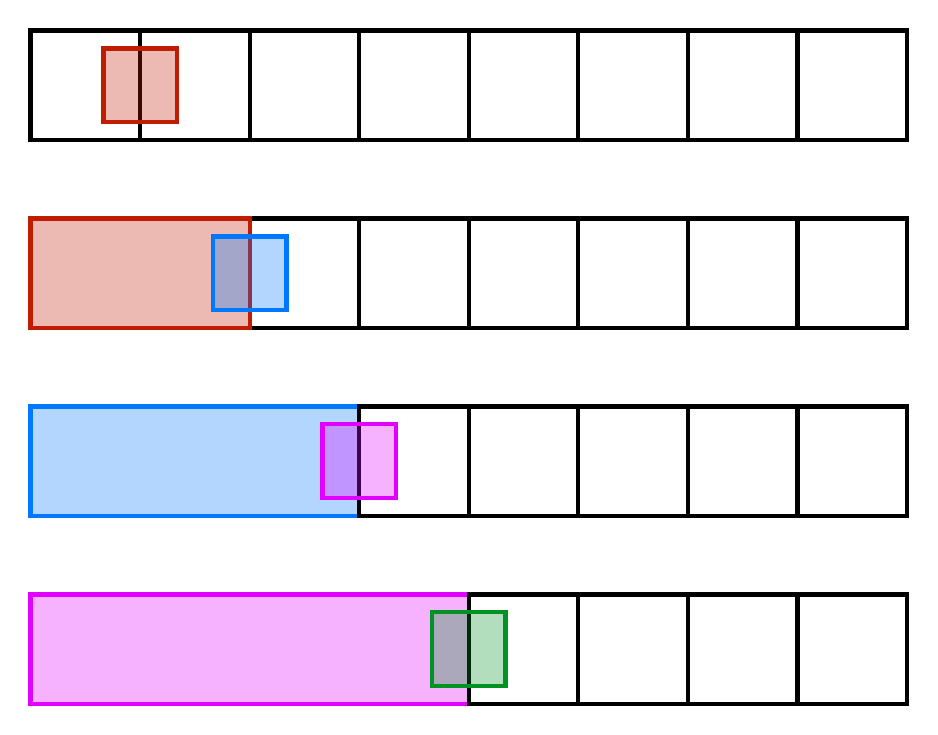}
            \caption{Sequence of iterative steps on a line.}
        \end{subfigure}
    \end{minipage}
    \hfill
    \begin{minipage}{0.45\textwidth}
        \begin{subfigure}[t]{\textwidth}
            \includegraphics[width=\textwidth]{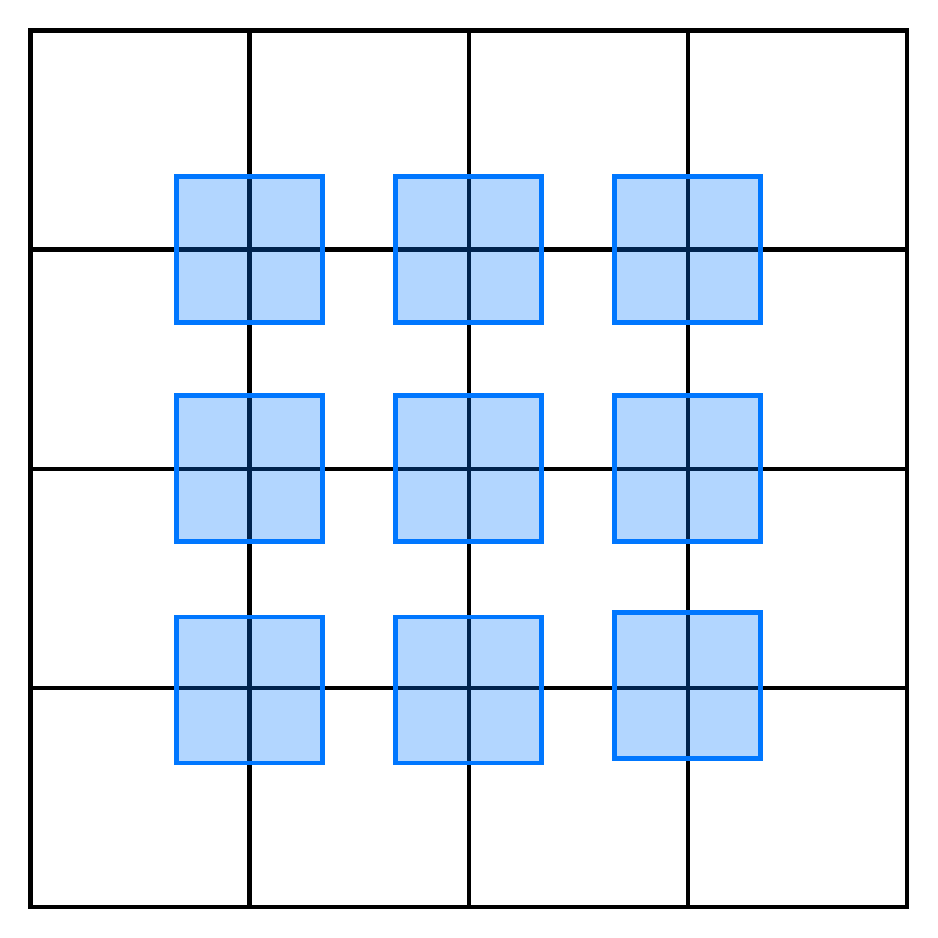}
            \caption{$2$-dimensional lattice.}
        \end{subfigure}
    \end{minipage}
    \caption{Illustrations of the 2-layer procedure in Protocol \ref{alg:2-layer}  designs on lattices in one spatial dimension (panels ({\scriptsize A}) and ({\scriptsize B})) and in two spatial dimensions (panel ({\scriptsize C})). In each case the blocks participating in the first step of Protocol \ref{alg:2-layer} are shown as white squares with black outlines, whereas the blocks participating in the second step are shown as colored squares.
    Panel ({\scriptsize B}) depicts the first four iterations of crosstwirls in the second step of Protocol \ref{alg:2-layer} on a $1$-dimensional lattice. In each step, a crosstwirl joins a new first-layer block (shown in white) to the previously merged block (shown at the left edge of the line in color).}
    \label{fig:lattice}
\end{figure}

As shown by Proposition \ref{prop:subspace-angle-multi-crosstwirl}, one may use the crosstwirl as a superchannel that takes a few approximate $k$-design channels in tensor product and returns an approximate $k$-design channel on the full system. In this section, we derive schemes that build up large $k$-designs from small $k$-designs by iterating the crosstwirl.

Given two large systems to which local twirls have been applied, Theorem \ref{thm:twirl-crosstwirl-relative} shows that twirling a relatively small system overlapping each results in the entire system being in a design. However, the original local twirls might already have been implemented this way on each local system, subdividing the whole into 4 parts. The local twirls in that layer may have themselves used instances of the twirl-crosstwirl protocol, initially starting with 8 parts. After iterating many more times, one may find that the original local twirls are of number proportional to the total number $m$ of qudits and of size almost independent of it. Because the crosstwirls are also small, the entire scheme is then a composition of twirls on systems of size growing only logarithmically with system size (a constraint enforced by approximately linear error buildup). Using existing linear-depth schemes \cite{brandao_local_2016, chen2024incompressibility}, each of these logarithmic-size twirls requires only logarithmic depth in $m$. They mostly parallelize, ultimately yielding a logarithmic-depth scheme approximating the full twirl. In this Section, we analyze a similar scheme that explicitly reduces to two layers, each applying local twirls to logarithmic-size subsystems. Hence we show as our culminating result that $O(\log m)$ depth suffices for unitary designs on $m$ qudits. This result addresses \cite[Section 1.5, Open Question 1]{harrow_mehraban2023approximate} and \cite[Section 1.5, Open Question 7]{harrow_mehraban2023approximate}, showing that relative error designs may form in sublinear depth even when that depth is much smaller than the graph diameter of an interaction lattice.


For the culminating technical result, consider the following 2-layer procedure on $m$ qudits arranged in a $D$-dimensional hyper-rectangle (see also Fig. ~\ref{fig:lattice}):

\begin{algo}[Two-step protocol]~
\label{alg:2-layer}
    Consider $m$ qudits arranged in a $D$-dimensional hyper-rectangle, and let $\ell \coloneqq \lceil \log_q (60 m k!^{4 D - 1} k D / \epsilon) \rceil$.
    \begin{enumerate}
        \item Starting from one corner, partition the hyper-rectangle into smaller hypercubes of size $\lceil \sqrt[D]{\ell} \rceil^D$ until hypercubes of this size no longer fit. When such a condition is reached, join remaining qudits into their nearest hypercubes, obtaining blocks of size at most $2^D \lceil \sqrt[D]{\ell} \rceil^D$. On each such block, draw a random unitary from an approximate relative-error $k$-design and apply it.
        \item Construct a block centered at each point where $2^D$ first-layer blocks meet in such a way that the new block overlaps with at least $\ell$ qudits of each first-layer block. For each such block, draw a random unitary from an approximate relative-error $k$-design, and apply it to that second-layer block.
    \end{enumerate}
\end{algo}

\begin{theorem} \label{thm:lattice-main}
    In fixed spatial dimension $D$, for any $\epsilon > 0$, Protocol \ref{alg:2-layer} yields an $\epsilon$-approximate relative $k$-design ensemble with unitaries of depth
    \begin{equation}
        O(\ell \times k \mathrm{polylog}(k) ) = O((k \log k + \log m + \log (1/\epsilon)) \times k \mathrm{polylog}(k) )
    \end{equation}
    that require at most 
    \begin{align} 
    O \big ( \#(\partial S) \times \ell \big )
    \label{eq:quantum-communication}
    \end{align}
    qudits of quantum communication between any lattice subregion $S$ and its complement.
    Here $\#(\partial S)$ denotes the number of qudits on the boundary of $S$.
\end{theorem}

Theorem \ref{thm:lattice-main} shows that logarithmic-depth designs can be obtained on any lattice connectivity. It also shows that up to multiplicative, logarithmic corrections, the entanglement entropy between subregions follows nearly an area law \cite{eisert2010colloquium}, analogously to the near-area law shown for quantum pseudoentanglement \cite{aaronson_quantum_2023}. To prepare the proof of Theorem \ref{thm:lattice-main}, we first prove the following auxiliary lemma.

\begin{lemma} \label{lemma:iterate-error-additive}
        Consider two families of channels $(\Phi_n)_{n=1}^N$ and $(\Psi_n)_{n=1}^N$ such that
    \begin{equation}
        (1-\epsilon_n) \Psi_n \prec \Phi_n \prec (1+\epsilon_n) \Psi_n
    \end{equation}
    for each $n = 1 \dots N$. With $\epsilon = \prod_n (1 + \epsilon_n) - 1$,
    \begin{align}
        (1-\epsilon) \prod_n \Psi_n \prec \prod_n \Phi_n \prec (1+\epsilon) \prod_n \Psi_n
    \end{align}
    Furthermore, assume that $(1-\delta) \Gamma \prec \prod_n \Psi_n \prec (1+\delta) \Gamma$ for some channel $\Gamma$. Then the total relative error of $
	\Phi$ with respect to $\Gamma$ is bounded from above by $(1+\delta)(1+\epsilon)$.
\end{lemma}
\begin{proof}
    Recall that for each $n$, $\Phi_n =  (1-\epsilon) \Psi_n + \epsilon \Theta_n$ for some channel $\Theta_n$.
    Therefore,
    \begin{align}
        \prod_n \Phi_n = \prod_n (1-\epsilon_n) \Psi_n + \Big ( 1 - \prod_n (1-\epsilon_n) \Big ) \Theta
    \end{align}
    for some channel $\Theta$ that is a convex combination of $\Theta_n$ for different $n$. Similarly, $(1+\epsilon)^{-1} \Psi_n = (1 - (1+\epsilon)^{-1}) \Phi_n + \epsilon \Theta_n'$ for some channel $\Theta_n'$, so
    \begin{align}
        \prod_n \Psi_n & = \prod_n (1+\epsilon_n)^{-1} \Phi_n + \Big ( 1 - \prod_n (1+\epsilon_n)^{-1} \Big ) \Theta'
    \end{align}
    for some channel $\Theta'$. Therefore,
    \begin{align}
        \prod_n (1-\epsilon_n) \Psi_n  \prec \prod_n \Phi_n \prec \prod_n (1+\epsilon_n) \Psi_n \pl.
    \end{align}
     Using Remark \ref{rem:generalbernoulli},
     \begin{equation}
         \prod_n (1-\epsilon_n) \geq 1 - \sum_n \epsilon_n \geq 1 - \bigg ( 1 - \prod_n (1 + \epsilon_n) \bigg ) \pl.
     \end{equation}
    Therefore, one obtains that with $\epsilon' = 1 - \prod_n (1 + \epsilon_n) \leq 1 + \sum_n \epsilon_n \leq \prod_n (1 + \epsilon_n)$,
    \begin{align}
        (1-\epsilon') \prod_n \Psi_n \prec \prod_n \Phi_n \prec (1+\epsilon) \prod_n \Psi_n
    \end{align}
    The final part of the Lemma follows from expanding $\prod_n \Phi_n$ as a convex combination involving the maps $\prod_n \Psi_n$.
\end{proof}
\begin{proof}[Proof of Theorem \ref{thm:lattice-main}]
    In Protocol \ref{alg:2-layer}, each second-layer block overlaps with $2 D$ first-layer blocks. 
    Let $\epsilon_1$ denote the error of the designs applied to the individual blocks in the first-layer, $\epsilon_2$ the corresponding error parameter for the second layer, and let $a$ and $b$ denote the respective upper and lower bounds on the sizes of first-layer blocks. Assume we have some chunk of blocks on which a `merged' design incorporating some of the first-layer blocks has already been applied, and assume that such a block has error at most $\epsilon_{\mathrm{old}}$. Using Lemma \ref{lemma:iterate-error-additive} with Theorem \ref{thm:twirl-crosstwirl-relative}, each second-layer block joins the first-layer blocks with which it overlaps into a $k$-design with relative error
    \begin{equation} \label{eq:block-merge}
        (1 + \epsilon_{\mathrm{new}}) \leq \Big ( 1 + 10 k!^{4 D - 1} k \frac{2 D}{q^{\ell}} \Big ) (1 + \epsilon_2) (1+\epsilon_{\mathrm{old}}) (1 + \epsilon_1)^{2 D - 1}  \pl.
    \end{equation}
    At the first step of iteration, $\epsilon_{\mathrm{old}} = \epsilon_1$, and the second-layer `crosstwirl' combines the first 2D first-layer blocks. At the next iteration, this `merged' block is combined with additional first-layer blocks of which there are at most $2 D - 1$ (and usually fewer). At each subsequent step, we choose a second-layer block that overlaps a first-layer block and a previously merged first-layer block. This is illustrated for one spatial dimension in panel ({\scriptsize B}) of Figure \ref{fig:lattice}. Iterating Equation \eqref{eq:block-merge}, we ultimately combine all first-layer blocks via crosstwirls with a total error $\delta$ for which
    \begin{equation}
        (1+\delta) \leq \Big ( 1 + 10 k!^{4 D - 1} k \frac{2 D}{q^{\ell}} \Big )^{2 D \lceil m / a \rceil} \times (1 + \epsilon_1)^{(2D - 1) \lceil m / a \rceil} \times (1 + \epsilon_2)^{\lceil m / a \rceil} \pl,
    \end{equation}
    which can be upper-bounded via the reverse Bernoulli's inequality and Remark \ref{rem:generalbernoulli} as
    \begin{equation}
        \begin{split} (1 + \delta) \leq \exp \bigg ( 1 + \left\lceil \frac{m}{a} \right\rceil \Big( 10 k!^{4 D - 1} k \frac{4 D^2}{q^{\ell}}
            + (2 D - 1) \epsilon_1 + \epsilon_2 \Big ) \bigg ) \pl.
        \end{split}
    \end{equation}
    To achieve a given value of $\epsilon$, we set $\ell$ as noted in Protocol \ref{alg:2-layer} and $\epsilon_1 = \epsilon / 3 (2 D - 1) m$, and $\epsilon_2 = \epsilon / 3m$. Therefore,
    \begin{equation}
        (1 + \delta) \leq \exp \bigg ( 1 + \left\lceil \frac{m}{a} \right\rceil \frac{\epsilon}{m} \bigg ) \pl.
    \end{equation}
    Since the bounds are for asymptotics in large $m > k$ with $\epsilon < 1$, we will assume that $\ell$ and therefore $a$ is large (if $m$ is not large compared to $k$, one may directly apply the bound from \cite{chen2024incompressibility}). Therefore, we may ignore the nonlinear terms in the exponential Taylor expansion and conclude that $\delta \leq \epsilon$.
    
    Using the results of \cite{chen2024incompressibility}, for a block of size $r$, local random unitaries achieve an $\epsilon_1$-approximate relative $k$-design in depth $O(((r k + \log (1/\epsilon) (\log k)^7 )$. Therefore, we may assume that after the first layer is applied, each first-layer block has undergone a unitary from such a $k$-design, and that the total circuit is of depth $O(((b k + \log (1/\epsilon) (\log k)^7 )$. We note that $b = O(\ell)$. We may also assume that the second layer applies random circuits from the scheme of \cite{chen2024incompressibility}, yielding unitaries from $\epsilon_2$-approximate $k$-designs on each second-layer block in depth $O(((\ell k + \log (1/\epsilon) (\log k)^7 )$. Although the scheme from \cite{chen2024incompressibility} applies to one-dimensional lattices, we may construct a one-dimensional path through higher-dimensional blocks and apply it therein as well.

    To see that the boundary entanglement is upper-bounded by \eqref{eq:quantum-communication}, consider a spatially contiguous (but not necessarily regular) subregion $S$ with boundary $\partial S$, defined as the set of qudits within $S$ that are adjacent to qudits not in $S$. Since the boundary contains $\#(\partial S)$ qudits, it intersects at most this many blocks at the first or second layer. To implement an arbitrary unitary on each block requires communicating a number of qudits at most equal to the block size to be transferred across any boundary through that block, and one may see that this also upper-bounds the amount of entanglement created by a unitary across any boundary through the block. The bound \eqref{eq:quantum-communication} follows from these considerations and the fact that the size of each first or second-layer block is $O(\ell)$ for fixed $D$. 
    %
    %
    %
    %
\end{proof}
\begin{rem}[Almost Brickwork] \normalfont \label{rem:almost-brickwork}
	A recent result \cite{chen2024incompressibility} showed that brickwork random circuits on a one-dimensional lattice are $k$-designs with depth $O(m k \polylog(k))$. On the 1-hypercube, the steps used in Theorem \ref{thm:lattice-main} can be implemented this way, yielding the Theorem statement with $g(k) = O(\polylog(k))$. To satisfy the criteria of Theorem \ref{thm:lattice-main}, the only deviation from overall brickwork is that random unitaries not cross boundaries separating different subsystems in the first (or second) layer. On higher-dimensional lattices, it is possible to iterate the scheme of \cite{harrow_mehraban2023approximate} to obtain relative error local twirls and crosstwirls in depth linear in their size, yielding again a logarithmic depth scheme with polynomial $k$-dependence that is nearly $D$-dimensional brickwork.
\end{rem}

\begin{rem}\label{rem:hamiltonian-vs-recursive} \normalfont
    A concurrent work \cite{schuster_random_2024} also shows that designs can be constructed on any geometry in logarithmic depth. That result uses a Hamiltonian path construction, in which any connected graph can efficiently approximate a one-dimensional layout. Via a detailed technical analysis, the other work also obtains an improved dependence on $k$. Straightforwardly counting the points at which a Hamiltonian path may cross a boundary suggests that subregion entanglement in such a construction scales as $\#(\partial S) \times O(\log k + \log m + \log \epsilon)$. 
\end{rem}

%

\printbibliography

\end{document}